\numberwithin{equation}{section}
\newcommand{\change}[1]{\ensuremath{\operatorname{#1}}}
\newcommand{\MAT}{\left[ \begin{array}}  
\newcommand{\mat}{\end{array} \right]}
\newtheorem{Corollary}{Corollary}[section]
\newtheorem{Lemma}{Lemma}[section]
\newtheorem{Theorem}{Theorem}[section]
\newtheorem{Proposition}{Proposition}
\newtheorem{Remark}{Remark}[section]
\def \a {\bm{a}}
\def \A {\mathbf{A}}
\def \AA {\mathcal{A}}
\def \Ah {\hat{A}}
\def \b {\bm{b}}
\def \BB {\mathcal{B}}
\def \C {\mathbf{C}}
\def \CC {\mathcal{C}}
\def \CCC {\mathbb{C}}
\def \d {\bm{d}}
\def \D {\mathbf{D}}
\def \e {\bm{e}}
\def \E {\mathbf{E}}
\def \EEE{\mathbb{E}}
\def \f {\bm{f}}
\def \fh {\hat{f}}
\def \FF {\mathcal{F}}
\def \g {\bm{g}}
\def \G {\mathbf{G}}
\def \H {\mathbf{H}}
\def \I {\mathbf{I}}
\def \K {\mathbf{K}}
\def \KK {\mathcal{K}}
\def \M {\mathbf{M}}
\def \NN {\mathcal{N}}
\def \PP {\mathcal{P}}
\def \PPP {\mathbb{P}}
\def \Q {\mathbf{Q}}
\def \QQ {\mathcal{Q}}
\def \r {\bm{r}}
\def \RRR {\mathbb{R}}
\def \S {\mathbf{S}}
\def \SSS {\mathbb{S}}
\def \TT {\mathcal{T}}
\def \u {\bm{u}}
\def \uh {\hat{\bm{u}}}
\def \V {\mathbf{V}}
\def \Vh {\hat{\mathbf{V}}}
\def \W {\mathbf{W}}
\def \WW {\mathcal{W}}
\def \Wt {\tilde{W}}
\def \x {\bm{x}}
\def \X {\mathbf{X}}
\def \Xh {\hat{\mathbf{X}}}
\def \y {\bm{y}}
\def \Y {\mathbf{Y}}
\def \z {\bm{z}}
\def \Z {\mathbf{Z}}
\def \zt {\tilde{\bm{z}}}
\def \Zh {\hat{\mathbf{Z}}}
\def \Zt {\tilde{\mathbf{Z}}}
\def \balpha {\boldsymbol{\alpha}}
\def \bbeta {\boldsymbol{\beta}}
\def \bpsi {\boldsymbol{\psi}}
\def \bpsih {\hat{\boldsymbol{\psi}}}
\def \bPsih {\hat{\boldsymbol{\Psi}}}
\def \bPhi {\boldsymbol{\Phi}}
\def \bmu {\boldsymbol{\mu}}
\def \bnu {\boldsymbol{\nu}}
\def \bxi {\boldsymbol{\xi}}
\def \zero {\mathbf{0}}
\def \one{\mathbbm{1}}
\begin{document}

 
\title{Atomic Norm Minimization for Modal Analysis\\ from Random and Compressed Samples}

\author{Shuang Li, Dehui Yang, Gongguo Tang, and Michael B. Wakin\thanks{Department of Electrical Engineering, Colorado School of Mines. Email: \{shuangli,dyang,gtang,mwakin\}@mines.edu}}

\maketitle

\begin{abstract}
Modal analysis is the process of estimating a system's modal parameters such as its natural frequencies and mode shapes. One application of modal analysis is in structural health monitoring (SHM), where a network of sensors may be used to collect vibration data from a physical structure such as a building or bridge. There is a growing interest in developing automated techniques for SHM based on data collected in a wireless sensor network. In order to conserve power and extend battery life, however, it is desirable to minimize the amount of data that must be collected and transmitted in such a sensor network. In this paper, we highlight the fact that modal analysis can be formulated as an atomic norm minimization (ANM) problem, which can be solved efficiently and in some cases recover perfectly a structure's mode shapes and frequencies. We survey a broad class of sampling and compression strategies that one might consider in a physical sensor network, and we provide bounds on the sample complexity of these compressive schemes in order to recover a structure's mode shapes and frequencies via ANM. A main contribution of our paper is to establish a bound on the sample complexity of modal analysis with random temporal compression, and in this scenario we prove that the required number of samples per sensor can actually decrease as the number of sensors increases. We also extend an atomic norm denoising problem to the multiple measurement vector (MMV) setting in the case of uniform sampling.
\end{abstract}

\section{Introduction}
\label{intr}

Modal analysis is the process of estimating a system's modal parameters such as its natural frequencies, mode shapes, and damping factors. One application of modal analysis is in structural health monitoring (SHM), where a network of sensors may be used to collect vibration data from a physical structure such as a building or bridge. The vibration characteristics of a structure are captured in its modal parameters, which can be estimated from the recorded displacement data. Changes in these parameters over time may be indicative of damage to the structure. Modal analysis has been widely used in civil structures \cite{cunha2006experimental}, space structures \cite{kammer1991sensor}, acoustical instruments \cite{marshall1985modal}, and so on.


Due to the considerable time and expense required to perform manual inspections of physical structures, and the difficulty of repeating these inspections frequently, there is a growing interest in developing automated techniques for SHM based on data collected in a wireless sensor network. For example, one could envision a collection of battery-operated wireless sensors deployed across a structure that record vibrational displacements over time and then transmit this information to a central node for analysis. In order to conserve power and extend battery life, however, it is desirable to minimize the amount of data that must be collected and transmitted in such a sensor network  \cite{o2014compressed}.

In this paper, we highlight the fact that modal analysis can be formulated as an atomic norm minimization (ANM) problem, which can be solved efficiently and in some cases recover perfectly a structure's mode shapes and frequencies. We survey several possible protocols for data collection, compression, and transmission, and we review the sampling requirements in each case. ANM generalizes the widely used $\ell_1$-minimization framework for finding sparse solutions to underdetermined linear inverse problems~\cite{Chand12}. It has recently been shown to be an efficient and powerful way for exactly recovering unobserved time-domain samples and identifying unknown frequencies in signals having sparse frequency spectra~\cite{Tang13,Li15,yang2014continuous}, in particular when the unknown frequencies are continuous-valued and do not belong to a discrete grid. Sampling guarantees have been established both in the single measurement vector (SMV) scenario~\cite{Tang13,heckel2016generalized} and in the multiple measurement vector (MMV) scenario under a joint sparse model~\cite{Yang14,Li15}; these results characterize the number of uniform or random time-domain samples to achieve exact frequency localization as a function of the minimum separation between the unknown frequencies. In this sense, by achieving exact recovery, ANM can completely avoid the effects of basis mismatch~\cite{Duarte11,Chi11,Herman10} which can plague conventional grid-based compressive sensing techniques.

To provide context in this paper, we survey a broad class of sampling and compression strategies that one might consider in a physical sensor network, and we provide bounds on the sample complexity of these compressive schemes in order to recover a structure's mode shapes and frequencies via ANM. In total, we consider five measurement schemes:
\begin{itemize}
\item {\bf uniform sampling}, where vibration signals at each sensor are sampled synchronously at or above the Nyquist rate and transmitted without compression to a central node,
\item {\bf synchronous random sampling}, where vibration signals at each sensor are sampled randomly in time (as a subset of the Nyquist grid), but at the same time instants at each sensor,
\item {\bf asynchronous random sampling}, where vibration signals at each sensor are sampled randomly in time (as a subset of the Nyquist grid), at different time instants at each sensor,
\item {\bf random temporal compression}, where Nyquist-rate samples are compressed at each sensor via random matrix multiplication, before transmission to a central node, and
\item {\bf random spatial compression}, where Nyquist-rate samples are compressed en route to the central node. 
\end{itemize} 
Note that all sensors share the same Nyquist grid in each of these measurement schemes.

Modal analysis is a particular instance of the joint sparse frequency estimation problem, which has also been commonly studied in the context of direction-of-arrival (DOA) estimation \cite{yang2016sparse}. Some conventional methods for joint sparse frequency estimation, such as MUSIC \cite{schmidt1982signal}, can identify frequencies using a sample covariance matrix as long as a sufficient number of snapshots is given. However, as noted in~\cite{Yang14}, these methods usually assume that the source signals are spatially uncorrelated and their performance would decrease with source correlations. ANM does not have this limitation. Moreover, joint sparse frequency estimation techniques such as MUSIC do not naturally accommodate the sort of randomized sampling and compression protocols that we consider in this paper.

In this work, we explain how the results from~\cite{Li15,Yang14} can be interpreted in the context of exactly recovering a structure's mode shapes and frequencies from uniform samples, synchronous random samples, and asynchronous random samples. These random sampling results have an unfortunate scaling, however, in that the number of samples per sensor actually increases as the number of sensors increases; intuition and simulations suggest that the opposite should be true. A main contribution of our paper, then, is to establish a bound on the sample complexity of modal analysis with random temporal compression, and in this scenario we prove that the  required number of samples per sensor can actually {\em decrease} as the number of sensors increases. A similar phenomenon---that the estimation accuracy increases as the number of snapshots increases---occurs in covariance fitting methods for DOA estimation~\cite{yang2016sparse}. We also explain how our previous work~\cite{yang2016super} on super-resolution of complex exponentials from modulations with unknown waveforms can be used to establish a bound on the sample complexity of modal analysis with random spatial compression. For the noisy case, we extend the SMV atomic norm denoising problem in \cite{bhaskar2013atomic, tang2015near} to an MMV atomic norm denoising problem in the case of uniform sampling. We derive theoretical guarantees for this MMV atomic norm denoising problem. Although we focus on modal analysis to put our analysis and simulations in a specific context, our results can apply to joint sparse frequency estimation more generally.



The remainder of this paper is organized as follows. In Section~\ref{prel}, we provide some background on modal analysis and on the atomic norm. In Section~\ref{main}, we formulate our problem, and we characterize the ability of ANM to exactly recover a structure's mode shapes and frequencies under each of the above five measurement schemes. We also provide a bound on the performance of ANM in the case of noisy, uniform samples. In Section~\ref{simu}, we present simulation results to illustrate some of the essential trends in the theoretical results and to demonstrate the favorable performance of ANM. We prove our main theorems in Section~\ref{sec:mainproofs}, and we make some concluding remarks in Section~\ref{conc}. The Appendix provides supplementary theoretical results.

\section{Background}
\label{prel}

\subsection{Modal analysis}
\label{sec:modal}

For an $N$ degree-of-freedom linear time-invariant system~\cite{Ewins00}, the second-order equations of motion which represent the dynamic behavior of the system can be formulated as
\begin{align}
\M\ddot{\x}(t)+\C\dot{\x}(t)+\K\x(t)=\f(t) \label{EOM}
\end{align}
where $\M$, $\C$ and $\K$ denote the $N \times N$ mass, damping, and stiffness matrices, respectively. In this equation, $\x(t)\in \RRR^N$ represents a length-$N$ vector of displacement values at time $t$, and $\f(t)\in \RRR^N$ represents the excitation force at the $N$ nodes. We will assume that each of the $N$ displacement values is associated with a wireless sensor node that can sample, record, and transmit that displacement value; as an example, one could equip each of the six boxcars shown in Fig.~\ref{fig:boxcar} with a displacement sensor.

\begin{figure}
\hspace*{0.6in}
\begin{tikzpicture}
\tikzstyle{spring}=[thick,decorate,decoration={zigzag,pre length=0.3cm,post length=0.3cm,segment length=6}]
\tikzstyle{damper}=[thick,decoration={markings,
  mark connection node=dmp,
  mark=at position 0.5 with
  {
    \node (dmp) [thick,inner sep=0pt,transform shape,rotate=-90,minimum width=15pt,minimum height=3pt,draw=none] {};
    \draw [thick] ($(dmp.north east)+(2pt,0)$) -- (dmp.south east) -- (dmp.south west) -- ($(dmp.north west)+(2pt,0)$);
    \draw [thick] ($(dmp.north)+(0,-5pt)$) -- ($(dmp.north)+(0,5pt)$);
  }
}, decorate]
\tikzstyle{ground}=[fill,pattern=north east lines,draw=none,minimum width=0.75cm,minimum height=0.3cm]

\node (M) [draw,outer sep=0pt,thick,minimum width=1cm, minimum height=1cm] {$m_1$};
\node (M2) [draw,outer sep=0pt,thick,minimum width=1cm, minimum height=1cm] at (2,0) {$m_2$};
\node (M3) [draw,outer sep=0pt,thick,minimum width=1cm, minimum height=1cm] at (4,0) {$m_3$};
\node (M4) [draw,outer sep=0pt,thick,minimum width=1cm, minimum height=1cm] at (6,0) {$m_4$};
\node (M5) [draw,outer sep=0pt,thick,minimum width=1cm, minimum height=1cm] at (8,0) {$m_5$};
\node (M6) [draw,outer sep=0pt,thick,minimum width=1cm, minimum height=1cm] at (10,0) {$m_6$};

\node (ground) [ground,anchor=north,xshift=5.0cm,yshift=-0.25cm,minimum width=12.9cm] at (M.south) {};

\draw (ground.north east) -- (ground.north west);

\draw (ground.north east) -- (ground.north west);
\draw [thick] (M.south west) ++ (0.2cm,-0.125cm) circle (0.125cm)  (M.south east) ++ (-0.2cm,-0.125cm) circle (0.125cm);
\draw [thick] (M2.south west) ++ (0.2cm,-0.125cm) circle (0.125cm)  (M2.south east) ++ (-0.2cm,-0.125cm) circle (0.125cm);
\draw [thick] (M3.south west) ++ (0.2cm,-0.125cm) circle (0.125cm)  (M3.south east) ++ (-0.2cm,-0.125cm) circle (0.125cm);
\draw [thick] (M4.south west) ++ (0.2cm,-0.125cm) circle (0.125cm)  (M4.south east) ++ (-0.2cm,-0.125cm) circle (0.125cm);
\draw [thick] (M5.south west) ++ (0.2cm,-0.125cm) circle (0.125cm)  (M5.south east) ++ (-0.2cm,-0.125cm) circle (0.125cm);
\draw [thick] (M6.south west) ++ (0.2cm,-0.125cm) circle (0.125cm)  (M6.south east) ++ (-0.2cm,-0.125cm) circle (0.125cm);

\node (wall) [ground, rotate=-90, minimum width=1.8cm,xshift=0.15cm,yshift=-1.6cm] {};
\draw (wall.north east) -- (wall.north west);

\node (wall1) [ground, rotate=-270, minimum width=1.8cm,xshift=-0.15cm,yshift=-11.6cm] {};
\draw (wall1.north west) -- (wall1.north east);

\draw [spring] (M.180) -- ($(wall.north east)!(wall.225)!(wall.north east)$);
\draw [spring] (M2.180) -- ($(M.north east)!(M.180)!(M.south east)$);
\draw [spring] (M3.180) -- ($(M2.north east)!(M2.180)!(M2.south east)$);
\draw [spring] (M4.180) -- ($(M3.north east)!(M3.180)!(M3.south east)$);
\draw [spring] (M5.180) -- ($(M4.north east)!(M4.180)!(M4.south east)$);
\draw [spring] (M6.180) -- ($(M5.north east)!(M5.180)!(M5.south east)$);
\draw [spring] ($(wall1.north west)!(wall1.45)!(wall1.north west)$) -- ($(M6.north east)!(M6.180)!(M6.south east)$);

\node at (-1,.3) {$k_1$};
\node at (1,.3) {$k_2$};
\node at (3,.3) {$k_3$};
\node at (5,.3) {$k_4$};
\node at (7,.3) {$k_5$};
\node at (9,.3) {$k_6$};
\node at (11,.3) {$k_7$};

\end{tikzpicture}

\caption{\small\sl Undamped 6-degree-of-freedom boxcar system with masses $m_1=1,~m_2=2,~m_3=3,~m_4=4,~m_5=5,~m_6=6$ kg and stiffness values $k_1=k_7=500$, $k_2=k_6=150$, $k_3=k_5=100$, $k_4=50$ N/m. See Section~\ref{uniform} for simulations related to this system.}\label{fig:boxcar}
\end{figure}
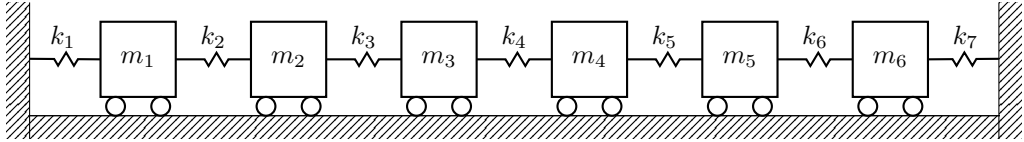

As a simplification and as in~\cite{park2014modal}, we consider systems in free vibration, where the forcing input $\f(t)=\bm{0}$ and the system vibrates freely in response to some set of initial conditions.\footnote{One can also consider modal analysis with forced vibration~\cite{mead1969forced,yamakoshi1990ultrasonic,salawu1995bridge,cho2015forced}, when external forces such as an earthquake, wind, or vehicle loadings are applied.} In this case, the general solution to~\eqref{EOM} takes the form~\cite{Rade}
\begin{align*}
\x(t) = \sum_{k=1}^K \bpsi_k q_k(t),
\end{align*}
where $\bpsi_1, \bpsi_2, \dots, \bpsi_K \in \RRR^N$ are a collection of mode shape vectors whose span characterizes the set of possible displacement profiles of the structure, and $q_1(t), q_2(t), \dots, q_K(t)$ are modal responses in the form of monotone exponentially-decaying sinusoids:
\begin{align*}
q_k(t)=u_k e^{-\zeta_k 2\pi f_k t} \cos(2\pi f_{d_k} t + \theta_k).
\end{align*}
Here, $u_k$ and $\theta_k$ are determined by the initial conditions, and the number of nonzero amplitudes $K \le N$ corresponds to the number of active modes. The parameters $\zeta_k$, $f_k$, and $f_{d_k}=\sqrt{1-\zeta_k^2}f_k$ denote the $k$th damping ratio, the natural frequency, and the damped frequency, respectively, and these parameters along with the mode shapes are intrinsic properties of the system determined by the mass, damping, and stiffness matrices. Modal analysis refers to the identification of these parameters---particularly the mode shapes, damping ratios, and frequencies---from observations of the displacement vector $\x(t)$.

%
%

In recent years, blind source separation (BSS) based methods have become very popular in modal analysis. The authors in \cite{Yang13} and \cite{yu2014estimation} propose a new modal identification algorithm based on sparse component analysis (SCA) to deal with even the underdetermined case where sensors may be highly limited compared to the number of active modes. In \cite{Sadhu13}, a novel decentralized modal identification method termed PARAllel FACtor (PARAFAC) based sparse BSS (PSBSS) method is proposed. Independent component analysis (ICA) is a powerful method to solve the BSS problem. Yang et al.\ present an ICA based method to identify the modal parameters of lightly and highly damped systems, even in cases with heavy noise and nonstationarity \cite{yang2012time}.
Despite the favorable empirical performance of these methods, few of them have been supported by theoretical analysis. In this paper, inspired by promising recent work in line spectrum estimation, we provide theoretical support for ANM as a powerful technique for modal analysis.

Meanwhile, over the past decade, the development of compressive sensing (CS) has highlighted the possibility of capturing essential signal information with sampling rates much lower than the Nyquist rate~\cite{donoho2006compressed,cande2008introduction,candes2006robust}. Exploring the possibility of using CS in modal analysis, Park et al.\ provide a theoretical analysis of a singular value decomposition (SVD) based technique for estimating a structure's mode shapes in free vibration without damping~\cite{park2014modal}. The work in \cite{park2014modal} builds upon a previous observation~\cite{feeny1998physical} that with a sufficiently large number of samples, the SVD technique can identify the mode shapes of systems with uniformly distributed mass (which leads to mutually orthogonal mode shapes) and light or no damping; as such, the results in~\cite{park2014modal} are limited to the assumption that the mode shapes are mutually orthogonal, which is not satisfied for general systems. In a more recent work,  Yang et al.\ propose a BSS based method that can identify non-orthogonal mode shapes from video measurements~\cite{yang2017blind}. The same group also develops another BSS based method to identify modal parameters from uniformly sub-Nyquist (temporally-aliased) video measurements~\cite{yang2017blind1}. Another recent work~\cite{yang2015output} also presents a new method based on a combination of CS and complexity pursuit (CP) to solve the modal identification problem.

In a very recent paper \cite{heckel2016generalized}, Heckel and Soltanolkotabi consider solving a generalized line spectrum estimation problem with convex optimization. In particular, they recover a data vector from its compressed measurements using an ANM formulation that has been mentioned in \cite{Chand12}. Several different classes of random sensing matrices are considered for obtaining the measurements. The result for Gaussian random matrices in~\cite{heckel2016generalized} can be viewed as a special case of our modal analysis result for random temporal compression (Theorem~\ref{THMCS}), where we have multiple measurement vectors and use a different sensing matrix for each measurement vector. Our analysis is inspired by their proof.

Finally, Lu et al.~\cite{lu2015distributed} propose a concatenated ANM approach for joint recovery of frequency sparse signals under a certain joint sparsity model. This problem differs from ours in the signal model, and the paper~\cite{lu2015distributed} does not establish theoretical bounds on the sample complexity.

\subsection{The atomic norm}

Frequency estimation from a mixture of complex sinusoids is a classical problem in signal processing. As mentioned in the introduction, ANM has recently been considered as a technique for solving this problem, both in the SMV ($N=1$) and MMV ($N > 1$) scenarios. Under the right conditions, ANM can achieve exact frequency localization, avoiding the effects of basis mismatch which can plague grid-based techniques. In Section~\ref{main}, we survey various formulations of the ANM problem that can be used with the various compressive measurement protocols for modal analysis. All of these formulations rely on the same core definition of an atomic norm, which is established in~\cite{Li15,Yang14} and repeated here.

Suppose each column of an $M \times N$ data matrix $\X=[\bm{x}_1,~\bm{x}_2,~\cdots ,~\bm{x}_N]$ is a spectrally sparse signal with $K$ distinct frequency components and denoted as
\begin{align}
\bm{x}_i=\sum_{k=1}^K c_{k,i} \a(f_k).
\nonumber
\end{align}
Here, the vector
\begin{equation}
\a(f)= [e^{j2\pi f 0} ,  e^{j2\pi f 1},  \cdots,  e^{j2\pi f (M-1)}]^\top
\label{eq:littlea}
\end{equation}
corresponds to a collection of $M$ samples of a complex exponential signal with frequency $f\in [0,1)$. Note that we use ``$\top$" and ``$*$" to denote transpose and conjugate transpose, respectively.

As shown in \cite{Li15} and  \cite{Yang14}, one can define an atomic set to represent such a data matrix $\X$ with each {\em atom} defined as
\begin{align}
\A(f,\bm{b})=\a(f)\bm{b}^*,
\label{atom}
\end{align}
where $f\in [0,1)$, $\bm{b}\in \mathbb{C}^N$ with $\|\bm{b}\|_2=1$. The corresponding {\em atomic set} can be defined as
\begin{align}
\AA=\{\A(f,\b):~f\in [0,1),~\|\bm{b}\|_2=1 \}.
\nonumber
\end{align}
The {\em atomic norm} of $\X$ is then defined as
\begin{align}
\|\X\|_{\AA}&=\inf\left\{t>0: \X\in t\change{~conv}(\AA)\right\}\nonumber \\
&=\inf\left\{ \sum_k c_k: \X = \sum_k c_k \A(f_k,\b_k),~c_k\geq 0\right\},\nonumber
\end{align}
where $\change{conv}(\AA)$ is the convex hull of $\AA$. This atomic norm is equivalent to the solution of the following semidefinite program (SDP):
\begin{align}
\|\X\|_{\AA}=\inf_{\bm{u}\in\mathbb{C}^M,~\bm{V}\in\mathbb{C}^{N\times N}}\left\{ \frac{1}{2M}\change{Tr}(\mathcal{T}(\bm{u}))+\frac{1}{2}\change{Tr}(\V): ~ \left[  \begin{array}{cc}
  \mathcal{T}(\bm{u})&\X \nonumber\\
  {\X}^*&\V
\end{array}   \right]\succeq 0\right\},\nonumber
\end{align}
where $\mathcal{T}(\bm{u})$ is the Hermitian Toeplitz matrix with the vector $\bm{u}$ as its first column. $\change{Tr}(\cdot)$ denotes the trace of a square matrix. The proof of this SDP form can be found in \cite{Tang13,Li15}.

The {\em dual norm} of $\| \cdot \|_{\AA}$ is defined as
\begin{equation}
\begin{aligned}
\|\Q\|_{\AA}^*&=\sup_{\|\X\|_{\AA}\leq 1} \langle \Q,\X  \rangle_{\RRR}\\
&=\sup_{f\in [0,1), \|\b\|_2=1} \langle \Q,\a(f)\b^* \rangle_{\RRR}\\
&=\sup_{f\in [0,1), \|\b\|_2=1} \langle \b,\Q^*\a(f) \rangle_{\RRR}\\
&=\sup_{f\in [0,1)} \|\Q^*\a(f)\|_2\\
&=\sup_{f\in [0,1)} \|\QQ(f)\|_2, \label{dualnorm}
\end{aligned}
\end{equation}
where $\QQ(f) = \Q^*\a(f)$ is known as the dual polynomial. Above, $\langle \Q,\X  \rangle_{\RRR}=\change{Re}(\langle \Q,\X  \rangle)$ corresponds to the real inner product between two matrices.

\section{Main Result}
\label{main}

\subsection{Preliminaries}
\label{prob}

As in~\cite{park2014modal}, we assume the structure has no damping ($\C = \zero$) and that the real-valued displacement\footnote{Although we refer to displacement data in this work, the ANM method can also be applied to acceleration data. In particular, the ground truth acceleration signal vector will have the same form as~\eqref{signal} but with a different set of amplitudes $A_k$.} signal has been converted to its complex analytic form, which can be accomplished by using the Hilbert transform in practice \cite{boashash2015time, smith2011spectral}. In this case, the ground truth displacement signal vector can be written as
\begin{align}
\x^{\star}(t)=\sum_{k=1}^K A_k \bpsi_k^{\star} e^{j2\pi F_k t},\label{signal}
\end{align}
which is a superposition of $K$ complex sinusoids. Here, $A_k$, $F_k$, and $\bpsi_k^{\star}=[\psi_{1,k}^{\star},\psi_{2,k}^{\star},\cdots,\psi_{N,k}^{\star}]^{\top}\in\CCC^N$ are the complex amplitudes, frequencies, and mode shapes, respectively. We assume without loss of generality that the mode shapes are normalized\footnote{Equation~\eqref{signal} will hold with any choice of normalization for the mode shapes, as any rescaling of $\bpsi_k^{\star}$ can be simply absorbed into $A_k$. Any such renormalization can be applied after the Euclidean-normalized mode shapes are recovered and thus does not affect our results. A certain ``mass normalization'' is desired in some applications. Employing mass normalization requires either knowledge of the mass matrix or extra experiments~\cite{lopez2005some}.} such that $\| \bpsi_k^{\star} \|_2 = 1$. As in Section~\ref{sec:modal}, $N$ denotes the number of sensors. Thus, $\sum_{k=1}^K A_k \psi_{n,k}^{\star} e^{j2\pi F_k t}$ is the displacement signal at the $n$th sensor.

Because we assume that $\x^{\star}(t)$ is an analytic signal, all $F_k \ge 0$.\footnote{It would not be difficult to extend our analysis to the case where the frequencies $F_k$ in~\eqref{signal} can be negative or positive: one would set $F_c=\max_{1\leq k \leq K} |F_k|$ and $T_s < \frac{1}{2 F_c}$, which yields discrete frequencies $f_k \in [-1/2,1/2)$, which is equivalent to the interval $[0,1)$.} Define $F_c=\max_{1\leq k \leq K} F_k$. If one were to take regularly spaced Nyquist samples of $\x^{\star}(t)$ at times
\begin{align}
T=\{t_1,t_2,\ldots, t_M\}=\{ 0,T_s,\ldots,(M-1)T_s \}, \label{eq:Ttimes}
\end{align}
where $T_s < \frac{1}{F_c}$, and then stack the signal from the $n$th sensor as the $n$th column of a data matrix $\X^{\star}$,
one would have
\begin{equation}
\begin{aligned}
\X^{\star}&=\sum_{k=1}^K A_k
\left[
\begin{array}{cccc}
   \psi_{1,k}^{\star} e^{j2\pi F_k t_{1}} & \psi_{2,k}^{\star} e^{j2\pi F_k t_{1}}  &  \cdots & \psi_{N,k}^{\star} e^{j2\pi F_k t_{1}} \\
   \psi_{1,k}^{\star} e^{j2\pi F_k t_{2}} & \psi_{2,k}^{\star} e^{j2\pi F_k t_{2}}  &  \cdots & \psi_{N,k}^{\star} e^{j2\pi F_k t_{2}} \\
  \vdots &\vdots &\ddots & \vdots\\
  \psi_{1,k}^{\star} e^{j2\pi F_k t_{M}} & \psi_{2,k}^{\star} e^{j2\pi F_k t_{M}}  &  \cdots & \psi_{N,k}^{\star} e^{j2\pi F_k t_{M}}
\end{array}
\right]\\
&=\sum_{k=1}^K |A_k|
\left[
\begin{array}{cccc}
   \psi_{1,k} e^{j2\pi f_k 0} & \psi_{2,k} e^{j2\pi f_k 0}  &  \cdots & \psi_{N,k} e^{j2\pi f_k 0} \\
   \psi_{1,k} e^{j2\pi f_k 1} & \psi_{2,k} e^{j2\pi f_k 1}  &  \cdots & \psi_{N,k} e^{j2\pi f_k 1} \\
  \vdots &\vdots &\ddots & \vdots\\
  \psi_{1,k} e^{j2\pi f_k (M-1)} & \psi_{2,k} e^{j2\pi f_k (M-1)}  &  \cdots & \psi_{N,k} e^{j2\pi f_k (M-1)}
\end{array}
\right] \\
&=\sum_{k=1}^K |A_k| \a(f_k) \bpsi_k^{\top}=\sum_{k=1}^K |A_k| \A(f_k,\b_k),
\label{data_uniform}
\end{aligned}
\end{equation}
where $\psi_{n,k}=\psi_{n,k}^{\star} e^{j\phi_{A_k}}$ with $\phi_{A_k}$ being the phase of the complex amplitude $A_k$, $f_k=F_kT_s \in [0,1)$ are the discrete frequencies, $\a(f_k)$ are sampled complex exponentials as defined in~\eqref{eq:littlea}, and $\A(f_k,\b_k)$ are the atoms defined in~\eqref{atom}. The ability to express $\X^{\star}$ as a linear combination of $K$ atoms from the atomic set $\AA$ inspires the use of ANM to recover $\X^{\star}$ from partial information, as we discuss below.

\subsection{Modal analysis for noiseless signals}
\label{sec:mans}

In this section, we survey a broad class of sampling and compression strategies that one might consider in a physical sensor network, and we provide bounds on the sample complexity of these compressive schemes in order to recover a structure's mode shapes and frequencies via ANM.

The performance of ANM in these various scenarios will depend on the minimum separation of the discrete frequencies $f_k$'s, which is defined to be
\begin{align*}
\Delta_f=\min_{k\neq j}| f_k-f_j|,
\end{align*}
where $|f_k-f_j|$ is understood as the wrap-around distance on the unit circle.

We also note that, because each complex mode shape vector $\bpsi_k^{\star}$ is multiplied by a complex amplitude $A_k$ in our model~\eqref{signal}, recovery of these mode shapes and amplitudes is possible only up to a phase ambiguity. We denote our estimated mode shapes as $\hat{\bpsi}_k$ and measure recovery performance using the absolute inner product $|\langle\bpsi_k^\star,\hat{\bpsi}_k\rangle|$. Because $\bpsi_k^{\star}$ and $\hat{\bpsi}_k$ are both normalized, achieving $|\langle\bpsi_k^\star,\hat{\bpsi}_k\rangle| = 1$ corresponds to exact recovery of the mode shape up to the unknown phase term.

\subsubsection{Uniform sampling}
\label{US}

As a baseline, we begin by considering a conventional data collection scheme, in which samples from each sensor are collected uniformly in time with sampling rate $F_s=\frac{1}{T_s}>F_c$. In this case, the data matrix $\X^{\star}$, defined in~\eqref{data_uniform}, is fully observed. In this section, we assume the samples are collected without noise; Section~\ref{MMVAND} revisits the uniform sampling scheme in the case of noisy samples.

To identify the mode shapes and frequencies in this scenario, it can be useful to consider the following ANM formulation:
\begin{align}
\Xh=\arg\min_{\X}\|\X\|_{\AA}~\change{s.t.}~\X=\X^{\star}.
\label{ANMfull}
\end{align}
Although this problem has a trivial solution (namely $\Xh = \X^{\star}$, which is already available), solving this problem in a certain way can reveal information about the mode shapes and frequencies. In particular, \eqref{ANMfull} is equivalent to the following SDP
\begin{align*}
\Xh, \uh, \Vh=\arg\min_{\X,\u,\V} ~~&\frac{1}{2M}\change{Tr}(\mathcal{T}(\bm{u}))+\frac{1}{2}\change{Tr}(\V)\\
\change{s.t.} ~~~&\left[  \begin{array}{cc}
  \mathcal{T}(\bm{u})&\X \\
  {\X}^*&\V
\end{array}   \right]\succeq 0,~~\X=\X^{\star}.
\end{align*}
Certain approaches to solving this SDP\footnote{In practice, one can use the CVX software package \cite{grant2008cvx}.} will return the dual solution $\Q$ directly. From this, following Proposition 1 in \cite{Li15}, one can formulate the dual polynomial $\QQ(f)=\Q^*\a(f)$ to identify the frequencies and mode shapes. In particular, one can identify the frequencies by localizing where the dual polynomial achieves $\| \QQ(f) \|_2=1$.
Note that the data matrix $\X^\star$ can also be rewritten as
\begin{align*}
\X^\star &= [\a(f_1)~ \cdots ~\a(f_K)] [|A_1|\bpsi_1~ \cdots~ |A_K|\bpsi_K]^\top.
\end{align*}
Then, using the estimated data matrix $\Xh$ and estimated frequencies $\fh_k$, one can solve a least-squares problem to estimate the complex amplitudes and mode shapes:
\begin{align*}
\bPsih_A \triangleq [|\Ah_1|\bpsih_1~ \cdots~ |\Ah_K|\bpsih_K] = (\A_{\fh}^\dag \Xh)^\top,
\end{align*}
where $\A_{\fh}^\dag$ denotes the pseudoinverse of the matrix $\A_{\fh}=[\a(\fh_1)~ \cdots ~\a(\fh_K)]$. Because the true mode shapes are assumed to be normalized, we normalize the estimated mode shapes by setting $|\Ah_k|=\|\bPsih_A(:,k)\|_2$ and $\bpsih_k=\frac{\bPsih_A(:,k)}{\|\bPsih_A(:,k)\|_2}$.

The following theorem is adapted from Theorem 4 in \cite{Yang14}.

\begin{Theorem}
\label{THMuniform}
\cite{Yang14} Assume the data matrix $\X^{\star}$ shown in (\ref{data_uniform}) is obtained by uniformly sampling the displacement signal vector $\x^{\star}(t)$ in time with a sampling interval $T_s<\frac{1}{F_c}=\frac{1}{\max_{k} F_k}$. If the number of samples $M$ from each sensor satisfies
\begin{align}
M\geq \max\left\{\frac{4}{\Delta_f}+1, ~257\right \},\label{Muniform}
\end{align}
then ANM perfectly recovers all frequencies $f_k$ and all mode shapes up to a phase ambiguity, i.e.,  $|\langle\bpsi_k^\star,\hat{\bpsi}_k\rangle|=1$.
\end{Theorem}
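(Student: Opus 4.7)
The plan is to prove this via the standard dual certificate approach for atomic norm minimization, adapted to the multiple measurement vector setting. Since the constraint $\X = \X^\star$ in \eqref{ANMfull} is trivially feasible, the content of the theorem is that the unique atomic decomposition achieving $\|\X^\star\|_{\AA}$ is the one indexed by the true frequencies $\{f_k\}$ and mode-shape directions $\{\b_k\}$; equivalently, the optimal dual variable $\Q$ produces a dual polynomial $\QQ(f) = \Q^*\a(f)$ that attains $\|\QQ(f)\|_2 = 1$ exactly on the true support. Once this is established, frequency localization is immediate from $\|\QQ(f)\|_2$, and the mode shapes are recovered by the least-squares step described just before the theorem.

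The first step is to write down the optimality conditions from convex duality: if there exists a matrix $\Q \in \CCC^{M \times N}$ satisfying (i) $\QQ(f_k) = \Q^* \a(f_k) = \b_k$ for $k = 1,\dots,K$, and (ii) $\|\QQ(f)\|_2 < 1$ for every $f \in [0,1) \setminus \{f_1,\dots,f_K\}$, then the decomposition $\X^\star = \sum_k |A_k|\,\A(f_k,\b_k)$ is the unique atomic decomposition realizing $\|\X^\star\|_{\AA}$. The second step, which is the heart of the proof, is to construct such a $\Q$ under the separation and sample-count hypotheses of \eqref{Muniform}. Following the Cand\`es--Fern\'andez-Granda template extended to vector-valued interpolation as in \cite{Yang14}, I would set
\begin{align*}
\QQ(f) = \sum_{k=1}^K \bbeta_k\, \bar{K}(f-f_k) + \bbeta_k'\, \bar{K}'(f-f_k),
\end{align*}
where $\bar{K}$ is a squared Fej\'er-type kernel built from $\a(f)$ with $M$ taps, and the vector coefficients $\bbeta_k,\bbeta_k' \in \CCC^N$ are selected so that the interpolation conditions $\QQ(f_k)=\b_k$ and $\QQ'(f_k) = \zero$ are met (the derivative condition forces each $f_k$ to be a local maximizer of $\|\QQ(f)\|_2$).

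The main technical obstacle is showing $\|\QQ(f)\|_2 < 1$ strictly off the support. This reduces to two sub-tasks: first, proving that the linear system determining $(\bbeta_k,\bbeta_k')$ from $(\b_k,\zero)$ is well-conditioned, which is handled by a Gershgorin-type bound on the block matrix of kernel values and derivatives at pairs $(f_i,f_j)$, and this is where the separation condition $\Delta_f \ge 4/(M-1)$ enters via tail bounds on $\bar{K}$ and its derivatives. Second, bounding $\|\QQ(f)\|_2$ both near each $f_k$ (using a Taylor/Hessian argument so that $\|\QQ(f)\|_2^2$ strictly decreases away from $f_k$ thanks to strict concavity controlled by $\bar{K}''(0)$) and far from every $f_k$ (using decay of $\bar{K}$). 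In the MMV setting, all of these scalar inequalities from the SMV case must be upgraded to inequalities on Euclidean norms of $\CCC^N$-valued quantities; the resulting constants are slightly worse than in SMV, which is the origin of the numerical constant $257$ in \eqref{Muniform}. The dependence of the bound on $M$ but \emph{not} on $N$ is the essential feature and comes from the fact that $\|\b_k\|_2 \le 1$ lets us control the vector interpolation uniformly.

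Once the dual certificate is in hand, concluding the theorem is straightforward: optimality and uniqueness of the atomic decomposition imply $\fh_k = f_k$ after localizing the modulus-one points of $\QQ$, and then $\A_{\fh}$ in the least-squares formula coincides with $\A_f = [\a(f_1)\,\cdots\,\a(f_K)]$, which has full column rank under the separation condition. Hence $\bPsih_A = (\A_f^{\dagger} \X^\star)^\top$ returns exactly $[|A_1|\bpsi_1\,\cdots\,|A_K|\bpsi_K]$, and after Euclidean normalization one obtains $\bpsih_k = \bpsi_k = e^{j\phi_{A_k}}\bpsi_k^\star$, which gives $|\langle\bpsi_k^\star,\bpsih_k\rangle| = 1$ as claimed.
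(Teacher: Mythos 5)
Your proposal is correct and follows essentially the same route as the result it concerns: the paper does not prove Theorem~\ref{THMuniform} itself but imports it from \cite{Yang14}, whose argument is exactly the vector-valued dual-certificate construction you outline (squared Fej\'er kernel interpolation with $\QQ(f_k)=\b_k$, $\QQ'(f_k)=\zero$, Gershgorin-type invertibility under the separation condition, and near/far bounds giving $\|\QQ(f)\|_2<1$ off the support), the same construction the present paper reuses in its Appendix Theorem on dual stability. One small correction: the constant $257$ is not an artifact of upgrading SMV scalar bounds to $\CCC^N$-valued ones; it is inherited directly from the Cand\`es--Fern\'andez-Granda/Tang kernel construction (cutoff frequency at least $128$, hence $M=2f_c+1\ge 257$) and appears identically in the SMV case.
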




This theorem indicates that more measurements are needed at each sensor for perfect recovery as the minimum separation $\Delta_f$ decreases, i.e., as the true frequencies become closer to each other.\footnote{The resolution limit $\frac{4}{M}$ is commonly encountered in atomic norm minimization. A recent work~\cite{yang2016enhancing} proposes a reweighted atomic norm minimization algorithm that can improve upon this resolution limit.} As noted in Section~\ref{sec:modal}, Park et al.~\cite{park2014modal} have provided a theoretical analysis of an SVD based technique for modal analysis. That work also considers uniform sampling and provides a bound on $M$ that is similar to what is required in~\eqref{Muniform}. However, the analysis in~\cite{park2014modal} is limited to the case where the true mode shapes are mutually orthogonal, a condition that we do not require here. Moreover, the SVD guarantees apply only to approximate recovery of the mode shapes; as confirmed in Theorem~\ref{THMuniform}, ANM can offer exact recovery of both the mode shapes and frequencies.

\subsubsection{Synchronous random sampling}

As a first alternative to conventional uniform sampling, we now consider the case where vibration signals at each sensor are sampled randomly in time, but at the same time instants at each sensor. We refer to this data collection scheme as {\em synchronous random sampling}.

To be specific, we suppose that the random sample times are chosen as a subset of the Nyquist grid; thus, the random samples are merely a subset of a collection of uniform samples. Equivalently, we suppose that we observe $\X^{\star}$ (which is defined in~\eqref{data_uniform}) on a set of indices $\Omega_S \times [N]$ with $[N] \triangleq \{1,2, \dots, N\}$ and $\Omega_S \subset T$, where $T$ is defined in~\eqref{eq:Ttimes}. Thus, each column of $\X^{\star}$ is observed at the same times, indexed by $\Omega_S$.

Denoting the observed matrix as $\X^\star_{\Omega_S \times [N]}$, the ANM problem can be formulated as
\begin{align}
\Xh=\arg\min_{\X}\|\X\|_{\AA}~\change{s.t.}~\X_{\Omega_S \times [N]}=\X^\star_{\Omega_S \times [N]},
\label{ANM}
\end{align}
which is equivalent to the following SDP
\begin{align*}
\Xh, \uh, \Vh=\arg\min_{\X,\u,\V} ~~&\frac{1}{2M}\change{Tr}(\mathcal{T}(\bm{u}))+\frac{1}{2}\change{Tr}(\V)\\
\change{s.t.} ~~~&\left[  \begin{array}{cc}
  \mathcal{T}(\bm{u})&\X \\
  {\X}^*&\V
\end{array}   \right]\succeq 0,~~\X_{\Omega_S \times [N]}=\X^\star_{\Omega_S \times [N]}.
\end{align*}
As in Section \ref{US}, one can solve the above SDP, obtain the estimated frequencies from the dual polynomial, and recover the mode shapes and amplitudes by solving a least-squares problem.

The following theorem from~\cite{Yang14} shows that we can recover $\X^{\star}$ and estimate the frequencies accurately with high probability.

\begin{Theorem}
\label{synch}
\cite{Yang14} Suppose the data matrix $\X^{\star}$ is observed on the index set $\Omega_S \times [N]$, with $\Omega_S$ selected uniformly at random as a subset of $T$. Assume that $\{\bpsi_k^{\star}\}_{k=1}^K$ are independent random vectors with $\mathbb{E} \bpsi_k^{\star}=\bm{0}$, chosen independently of the fixed amplitudes $A_k$. That is, each $\bpsi_k^{\star}$ is sampled independently from a zero-mean distribution on the complex hyper-sphere; this distribution may vary with $k$. If $M\geq\frac{4}{\Delta_f}+1$, then there exists a numerical constant $C$ such that
\begin{align}
|\Omega_S|\geq C \max \left\{ \log^2\frac{\sqrt{N}M}{\delta},K\log\frac{K}{\delta}\log \frac{\sqrt{N}M}{\delta} \right\}
\label{eq:syncbound}
\end{align}
is sufficient to guarantee that we can exactly recover $\X^{\star}$ via \eqref{ANM} and perfectly recover the frequencies and mode shapes up to a phase ambiguity with probability at least $1-\delta$.
\end{Theorem}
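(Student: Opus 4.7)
The plan is to prove exact recovery of $\X^{\star}$ from the observed submatrix $\X^\star_{\Omega_S \times [N]}$ by constructing a vector-valued dual certificate, then reading off the frequencies and mode shapes from the dual polynomial exactly as in Section~\ref{US}. By duality for the atomic norm SDP, it suffices to exhibit a matrix $\Q \in \CCC^{M \times N}$ supported on the rows indexed by $\Omega_S$ whose dual polynomial $\QQ(f)=\Q^{*}\a(f)\in\CCC^N$ satisfies the interpolation conditions $\QQ(f_k)=\b_k$ at the true frequencies, where $\b_k$ is the unit-norm sign of the $k$-th atom's coefficient vector, together with the strict boundedness condition $\|\QQ(f)\|_2 < 1$ for every $f\in[0,1)\setminus\{f_k\}$. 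Frequency localization then follows because $\|\QQ(f)\|_2 = 1$ only on the support, and the mode shapes are recovered by the least-squares step already described in Section~\ref{US}.

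First I would construct a candidate certificate of the same flavor as in \cite{Tang13,Yang14}, namely
\begin{align*}
\QQ(f) = \sum_{k=1}^K \balpha_k\, \bar{K}_{\Omega_S}(f-f_k) + \bbeta_k\, \bar{K}_{\Omega_S}'(f-f_k),
\end{align*}
where $\bar{K}_{\Omega_S}$ is the random-subsampled analog of the squared Fej\'er kernel obtained by keeping only Fourier modes indexed by $\Omega_S$ and normalized so that $\EEE\bar{K}_{\Omega_S}$ reproduces the deterministic squared Fej\'er kernel. By construction $\Q$ is supported on the sampled rows. The vector-valued coefficients $\balpha_k,\bbeta_k \in \CCC^N$ are fixed by imposing $\QQ(f_k)=\b_k$ and $\QQ'(f_k)=\zero$, which reduces to inverting a $2K\times 2K$ block interpolation matrix whose expectation is well-conditioned whenever $M\ge 4/\Delta_f+1$, a fact established by the minimum-separation analysis already exploited in \cite{Tang13,Yang14}.

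Second, I would control the two independent sources of randomness: random sampling of $\Omega_S$ and random mode shapes $\bpsi_k^\star$. A matrix Bernstein inequality applied to the rank-one decomposition of the interpolation matrix shows that, with probability at least $1-\delta/2$, its empirical version concentrates around its expectation as soon as $|\Omega_S|$ is of order $\log^2(\sqrt{N}M/\delta)$, yielding a small perturbation of the ideal coefficients. The random mode shapes enter the certificate as independent isotropic unit vectors in $\CCC^N$, so at any fixed $f$ the quantity $\QQ(f)$ is a sum of independent random vectors and a vector Bernstein / Hoeffding inequality delivers $\|\QQ(f)\|_2<1$ with the desired probability; this produces the $K\log(K/\delta)\log(\sqrt{N}M/\delta)$ term in~\eqref{eq:syncbound} after one extends the bound from a sufficiently fine grid to all of $[0,1)$ using a Bernstein inequality for the trigonometric polynomial $\QQ$.

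The main obstacle I anticipate is coordinating these two randomizations while preserving uniformity in $f$. Conditionally on $\Omega_S$, the coefficients $\balpha_k,\bbeta_k$ inherit randomness from the mode shapes through the inverse of a \emph{random} interpolation matrix, which itself depends on $\Omega_S$; one must therefore condition in the right order by first freezing $\Omega_S$ on the high-probability sampling event, then taking probability over the mode shapes, and showing that the resulting bounds on $\balpha_k,\bbeta_k$ hold uniformly on that event. The $\sqrt{N}$ appearing inside the logarithms in~\eqref{eq:syncbound} arises from the MMV-specific step: $\QQ(f)$ is vector-valued rather than scalar and must be discretized and union-bounded in dimension $N$, which is the price one pays for treating all sensors jointly through a single dual polynomial.
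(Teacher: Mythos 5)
Your plan is essentially the proof of the cited source: the paper itself does not prove Theorem~\ref{synch} but quotes it from \cite{Yang14}, and the argument there is exactly the vector-valued dual-certificate construction you describe---a randomly subsampled squared Fej\'er-kernel interpolant supported on $\Omega_S$, concentration of the interpolation matrix via matrix Bernstein over the sampling randomness, and control of $\|\QQ(f)\|_2<1$ off the support via vector Hoeffding over the zero-mean mode shapes, extended from a grid by Bernstein's polynomial inequality. So your proposal matches the established route (note only that zero-mean, not isotropic, is what the sign concentration actually requires, and the $\sqrt{N}$ in~\eqref{eq:syncbound} enters through the dimension-dependent vector concentration bound rather than a union bound over an $N$-dimensional net).
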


The above theorem indicates that the number of measurements $|\Omega_S|$ needed from each sensor for perfect recovery scales almost linearly with the number of active modes $K$. The number of random samples per sensor $|\Omega_S|$ also increases logarithmically with $M$, the number of underlying uniform samples in $T$. The lower limit on $M$ is similar to what appears in Theorem~\ref{THMuniform} for the uniform sampling case. Thus, the total duration over which the signals must be observed does not change. However, what is significant is that in many cases the lower bound on $|\Omega_S|$ will be smaller than $M$, which confirms that in general during this time span it is not necessary to fully sample the uniform data matrix $\X^{\star}$; sensing and communication costs can be reduced by randomly subsampling this matrix.

Because its derivation relies on concentration arguments such as Hoeffding's inequality, Theorem~\ref{synch} does assume the mode shapes to be generated randomly, which is not physically plausible. Moreover, this result actually requires the number $|\Omega_S|$ of measurements per sensor to {\em increase} (albeit logarithmically) as the number of sensors $N$ increases. However, intuition suggests that the opposite should be true: the $N$ signals share a common structure (analogous to a joint sparse model~\cite{baron2009distributed} in distributed compressive sensing), and observing more signals gives more information about this common structure.

It is reasonable to expect that one will actually need {\em fewer} measurements per sensor as the number of sensors increases. This is supported by simulations (not shown in this paper), and it remains an open question to support this with theory for the case of synchronous random sampling.\footnote{After the initial submission of this manuscript, \cite{yang2017on} has appeared as a complement to \cite{Yang14} and improves upon the sample complexity in Theorem~\ref{synch} under a different random assumption on the mode shapes.}

\subsubsection{Asynchronous random sampling}

We now consider the case where vibration signals at each sensor are sampled randomly in time, but at different time instants at each sensor. We refer to this data collection scheme as {\em asynchronous random sampling}. To be specific, suppose that we observe $\X^{\star}$ on the indices $\Omega_A \subset T \times [N]$. This observation model allows each column of $\X^{\star}$ to be observed at different times, all of which are still restricted to be drawn from the uniform sampling grid $T$, defined in~\eqref{eq:Ttimes}.

Recovery of the full data matrix from asynchronous random samples is analogous to the conventional matrix completion problem. This problem can be solved using ANM with the same formulation as in~\eqref{ANM} but with $\Omega_S \times [N]$ replaced by $\Omega_A$. The following theorem from~\cite{Li15} shows that one can again recover $\X^{\star}$ and estimate the frequencies and mode shapes accurately with high probability.\footnote{An explicit proof of Theorem~\ref{asynch}, which concerns MMV ANM, is not included in~\cite{Li15}. We do note that essentially the same conclusion holds if one applies SMV ANM and uses a union bound over the $N$ sensors.}

\begin{Theorem}
\label{asynch}
\cite{Li15} Suppose the data matrix $\X^{\star}$ is observed on the index set $\Omega_A \subset T \times [N]$, which is selected uniformly at random. Assume the signs $\frac{\bpsi_{n,k}^\star}{|\bpsi_{n,k}^\star|}$ are drawn independently from the uniform distribution on the complex unit circle (and independently of the fixed amplitudes $A_k$) and that $M\geq\frac{4}{\Delta_f}+1$. Then there exists a numerical constant $C$ such that
\begin{align}
|\Omega_A|\geq C N \max \left\{ \log^2\frac{MN}{\delta},K\log\frac{KN}{\delta}\log \frac{MN}{\delta} \right\}
\label{eq:asyncbound}
\end{align}
is sufficient to guarantee that we can exactly recover $\X^{\star}$ via \eqref{ANM} and exactly recover the frequencies  and mode shapes up to a phase ambiguity with probability at least $1-\delta$.
\end{Theorem}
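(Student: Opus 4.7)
The plan is to reduce the MMV recovery task to $N$ independent SMV atomic norm minimization problems, one per sensor column, and then invoke the SMV sample complexity result of Tang, Bhaskar, Shah, and Recht~\cite{Tang13} together with a union bound, exactly as the footnote suggests. Because the index set $\Omega_A \subset T \times [N]$ is drawn uniformly at random, for each sensor index $n$ the set of observed time indices $\Omega_n \triangleq \{m \in T : (m,n) \in \Omega_A\}$ behaves like a uniformly random subset of $T$ whose expected cardinality is $|\Omega_A|/N$. A Chernoff (or Bernstein) bound for sampling without replacement shows that, with probability at least $1-\delta/2$, every $|\Omega_n|$ simultaneously exceeds a constant fraction of $|\Omega_A|/N$.

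Conditioning on such a sampling realization, I would apply the SMV exact-recovery theorem of~\cite{Tang13} column by column. The $n$th column of $\X^{\star}$ equals $\sum_{k=1}^K |A_k|\,\psi_{n,k}\,\a(f_k)$, and by hypothesis the phases $\psi_{n,k}^{\star}/|\psi_{n,k}^{\star}|$ are i.i.d.\ uniform on the complex unit circle and independent of the fixed amplitudes $A_k$. This supplies exactly the random-sign condition that the SMV result requires. Provided $M \geq 4/\Delta_f + 1$ and $|\Omega_n| \geq C_0 \max\{\log^2(M/\delta'),\, K\log(K/\delta')\log(M/\delta')\}$, the $n$th column is recovered exactly by SMV ANM with probability at least $1-\delta'$. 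Choosing $\delta' = \delta/(2N)$ and taking a union bound over the $N$ sensors, combined with the earlier high-probability lower bound on the $|\Omega_n|$'s, delivers simultaneous exact recovery of every column under the sample complexity stated in \eqref{eq:asyncbound}.

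Once every column is recovered, the shared frequency set $\{f_k\}$ is read off from the SMV dual polynomials (which with probability one agree across columns, since each identifies the same $K$ frequencies), and the coefficients $\{\psi_{n,k}\}_{n=1}^N$ are assembled into mode-shape vectors and normalized to unit $\ell_2$ norm as described in Section~\ref{US}, giving estimates that agree with $\bpsi_k^{\star}$ up to a global phase. To pass from per-column SMV success to the MMV statement \eqref{ANM}, I would note that $\X^{\star}$ is feasible for \eqref{ANM}, and that the $N$ per-column dual polynomials produced by the SMV certificates of~\cite{Tang13} can be stacked into a matrix-valued dual certificate $\Q$ whose associated vector polynomial $\QQ(f) = \Q^{*}\a(f)$ satisfies $\QQ(f_k) = \b_k$ with $\|\b_k\|_2=1$ on the support and $\|\QQ(f)\|_2 < 1$ off the support; this certifies $\X^{\star}$ as the unique MMV optimum via the dual characterization in~\eqref{dualnorm}.

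The main obstacle is the bookkeeping of failure probabilities: the union bound over the $N$ sensors replaces $\delta$ by $\delta/N$ inside the SMV bound, which is precisely what produces the $\log(MN/\delta)$ and $\log(KN/\delta)$ factors in \eqref{eq:asyncbound} and the overall $N$ out front in the required $|\Omega_A|$. A secondary technical point is showing that the per-sensor sample counts $|\Omega_n|$ concentrate tightly enough that none of the $N$ sensors is starved of samples; handling this cleanly (either by conditioning on $|\Omega_A|$ and using a hypergeometric concentration bound, or by Poissonizing the sampling model and then de-Poissonizing) is where the proof requires the most care, but does not degrade the stated rate.
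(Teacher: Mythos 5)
Your overall strategy---per-column SMV recovery plus a union bound over the $N$ sensors---is essentially what the paper itself suggests: the paper gives no proof of Theorem~\ref{asynch} (it defers to \cite{Li15}), and its footnote states that ``essentially the same conclusion'' follows from SMV ANM with a union bound. Your probability bookkeeping (taking the per-column failure level $\delta/(2N)$, which is what generates the $\log(MN/\delta)$ and $\log(KN/\delta)$ factors and the overall factor $N$ in \eqref{eq:asyncbound}), your handling of the random per-column sample sizes $|\Omega_n|$, and your observation that the sign of each coefficient $|A_k|\psi_{n,k}$ is a fixed rotation of the uniform phase $\psi_{n,k}^\star/|\psi_{n,k}^\star|$ (so the SMV random-sign hypothesis of \cite{Tang13} holds column by column) are all sound; note also that the union bound needs only marginal failure probabilities, so dependence among the $\Omega_n$'s is harmless.

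The genuine gap is the final step, where you pass from $N$ SMV successes to optimality of $\X^{\star}$ for the MMV program \eqref{ANM} by ``stacking'' the per-column SMV dual certificates. An MMV certificate must satisfy $\QQ(f_k)=\b_k$, where $\b_k$ is the unit-$\ell_2$-norm vector proportional to the $k$th coefficient row $\bigl(|A_k|\psi_{1,k},\dots,|A_k|\psi_{N,k}\bigr)$, together with $\|\QQ(f)\|_2<1$ off the support and $\Q$ supported on the observed entries. Stacking the SMV certificates makes the $n$th entry of $\QQ(f_k)$ equal to the unimodular sign of $|A_k|\psi_{n,k}$, so $\|\QQ(f_k)\|_2=\sqrt{N}$; rescaling by $1/\sqrt{N}$ restores unit norm and the off-support bound, but the resulting vector has the right entrywise phases and the \emph{wrong magnitudes}, so it equals $\b_k$ only in the special case where all $|\psi_{n,k}|$ are equal, and no per-column rescaling can fix this simultaneously for all $k$ (it would require weights proportional to $|\psi_{n,k}|$ for every $k$ at once). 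Since per-column SMV exactness does not by itself imply exactness of \eqref{ANM} (the MMV atomic norm is not determined by the columnwise SMV atomic norms), the theorem as stated requires constructing the MMV certificate directly---a matrix-valued polynomial supported on $\Omega_A$ interpolating the random unit vectors $\b_k$, which is where the random-phase assumption is actually used in the \cite{Yang14}/\cite{Li15}-style arguments. Alternatively, you could weaken the conclusion to recovery by solving $N$ separate SMV problems, which is all your union-bound argument certifies and is precisely what the paper's footnote concedes.
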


Up to small differences in the logarithmic factors, the total number $|\Omega_A|$ of samples required in Theorem~\ref{asynch} is comparable to the number of sensors $N$ times the number of samples per sensor $|\Omega_S|$ required in Theorem~\ref{synch}. Many of the same comments on the theorem apply here, including the use of a (different) randomness assumption on the mode shapes, and the fact that the required average number of measurements per sensor $|\Omega_A|/N$ again increases logarithmically as the number of sensors $N$ increases. Again, the lower limit on $M$ is similar to what happens in Theorem \ref{THMuniform}, and so the total duration over which the signals must be observed does not change. However, during this time span the requisite number of asynchronous random samples may be far lower than the number of uniform samples demanded by the Nyquist rate.

Comparing the logarithmic terms in the right hand sides of~\eqref{eq:syncbound} and~\eqref{eq:asyncbound}, we do see that the measurement requirement is slightly stronger for asynchronous random sampling than for synchronous random sampling. However, simulations (e.g., Fig.~\ref{AsyVsSyn}(b)) indicate that the opposite may be true. It remains an open question to support this with theory.

\subsubsection{Random temporal compression}

Inspired by alternatives to random sampling that have been considered in CS, we now consider the case where the displacement signal at each sensor is compressed via random matrix multiplication before being transmitted to a central node. We refer to this as {\em random temporal compression}. This compression strategy was also considered in~\cite{park2014modal}, which provided a theoretical analysis of an SVD based technique for modal analysis.

Let $\x^{\star}_n$ be the $n$th column of the data matrix $\X^\star$, and let $\bPhi_n\in \RRR^{M'\times M}$ be a compressive matrix generated randomly with independent and identically distributed (i.i.d.)\ Gaussian entries. At each sensor, assume that we compute the measurements
\begin{align*}
\y_n=\bPhi_n\x^{\star}_n,~~n=1,\ldots,N.
\end{align*}
At the central node, the problem of recovering the original data matrix $\X^{\star}$ from the compressed measurements can be formulated as
\begin{equation}
\begin{aligned}
\Xh=\arg &\min_{\X} \|\X\|_{\AA}\\
&~~\text{s.t.}~\y_n=\bPhi_n\x_n,~~n=1,\ldots,N,\\
&~~~~~~~~\X=[\x_1,\x_2,\cdots,\x_N],
\label{ANMOri}
\end{aligned}
\end{equation}
which is equivalent to
\begin{align*}
\Xh, \uh, \Vh=\arg\min_{\X,\u,\V} ~~&\frac{1}{2M}\change{Tr}(\mathcal{T}(\bm{u}))+\frac{1}{2}\change{Tr}(\V)\\
\change{s.t.} ~~~&\left[  \begin{array}{cc}
  \mathcal{T}(\bm{u})&\X \\
  {\X}^*&\V
\end{array}   \right]\succeq 0,~~\y_n=\bPhi_n\x_n,~~n=1,\ldots,N,\\
&\X=[\x_1,\x_2,\cdots,\x_N]
\end{align*}

As in Section \ref{US}, one can solve the above SDP, obtain the estimated frequencies from the dual polynomial, and recover the mode shapes and amplitudes by solving a least-squares problem. We have the following result.

\begin{Theorem}
\label{THMCS}
Suppose $\bPhi_1, \bPhi_2, \dots, \bPhi_N \in \RRR^{M'\times M}$ are independently drawn random matrices with i.i.d.\ Gaussian entries having mean zero and variance $1$. Assume that the true frequencies satisfy the minimum separation condition
\begin{align}
\Delta_f\geq\frac{1}{\lfloor(M-1)/4\rfloor}. \label{mise}
\end{align}
Then there exists a numerical constant $C$ such that if
\begin{align}
M' \geq CK\log(M),  \label{Mp}
\end{align}
$\X^{\star}$ is the unique optimal solution of the ANM problem~\eqref{ANMOri} and we can exactly recover the frequencies  and mode shapes up to a phase ambiguity with probability at least $1-e^{-\frac{N(M'-2)}{8}}$.
\end{Theorem}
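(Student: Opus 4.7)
The plan is to establish exact recovery via a dual certificate construction, following the Gaussian-measurement strategy of Heckel and Soltanolkotabi \cite{heckel2016generalized} adapted to the MMV setting with per-sensor independent sensing matrices. Formulating the Lagrangian dual of \eqref{ANMOri}, the measurement operator $\mathcal{A}$ acts column-wise as $[\mathcal{A}(\X)]_n = \bPhi_n \x_n$ with adjoint $[\mathcal{A}^*(\bm{\Lambda})]_{:,n} = \bPhi_n^* \bm{\lambda}_n$, so the dual program maximizes $\sum_n \langle \y_n, \bm{\lambda}_n \rangle_{\RRR}$ subject to $\| \mathcal{A}^*(\bm{\Lambda}) \|_{\AA}^* \leq 1$. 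Standard KKT analysis for MMV ANM as in \cite{Li15,Yang14} shows that $\X^\star$ is the unique optimum provided there exists a dual certificate $\Q = \mathcal{A}^*(\bm{\Lambda})$ whose vector-valued dual polynomial $\QQ(f) = \Q^* \a(f)$ satisfies $\QQ(f_k) = \b_k$ at each active frequency and $\|\QQ(f)\|_2 < 1$ strictly off support.

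Next I would construct the certificate explicitly. For each sensor $n$, form $\tilde{\A}_n = \bPhi_n [\a(f_1)~\cdots~\a(f_K)] \in \RRR^{M' \times K}$ and take $\bm{\lambda}_n$ to be the minimum-norm least-squares solution of the $K$ interpolation equations $\bm{\lambda}_n^* \bPhi_n \a(f_k) = \psi_{n,k}$ for $k=1,\dots,K$, plus a set of derivative conditions forcing $\QQ'(f_k)=0$. The resulting $\QQ$ can be viewed as a Gaussian randomization of the classical Cand{\`e}s--Fern{\'a}ndez-Granda interpolating polynomial built from squared Fej{\'e}r kernels centered at $\{f_k\}$, where \eqref{mise} is precisely the minimum-separation condition required for the uncompressed interpolation to succeed. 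The scaling $M' \geq C K \log M$ in \eqref{Mp} will be invoked to show that $\tilde{\A}_n^* \tilde{\A}_n$ is well-conditioned (say, within a constant factor of $M' \I_K$) with high probability, so each $\bm{\lambda}_n$ is well-defined and bounded in norm.

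I would then bound $\|\QQ(f)\|_2$ uniformly off support by decomposing $\QQ(f) = \QQ_0(f) + \Delta(f)$, where $\QQ_0$ is the noiseless target with $\|\QQ_0(f)\|_2$ strictly less than $1$ away from the active set (and exhibiting local quadratic decay near each $f_k$), and $\Delta(f)$ is the perturbation induced by the randomness of the $\bPhi_n$. For each fixed $f$, $\|\Delta(f)\|_2^2 = \sum_{n=1}^N |\Delta_n(f)|^2$ is a sum of $N$ independent squared Gaussian quantities, so Laurent--Massart tail inequalities for chi-squared random variables produce exactly the $\exp(-N(M'-2)/8)$ probability bound in the statement; the linear-in-$N$ exponent is what ultimately permits $M'$ to shrink as the number of sensors grows. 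Combining this pointwise bound with an $\varepsilon$-net over $f \in [0,1)$ at resolution $O(1/\mathrm{poly}(M))$ and a Taylor expansion in a small neighborhood of each $f_k$ yields the desired strict inequality everywhere off support.

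The main obstacle will be upgrading this pointwise concentration to the required uniform-in-$f$ control of $\|\Delta(f)\|_2$ and its first two derivatives, while simultaneously handling the vector-valued structure peculiar to the MMV formulation. The net cardinality must be absorbed into the $\log M$ factor in \eqref{Mp}, and near each active frequency one needs coupled bounds on $\Delta(f_k)$, $\Delta'(f_k)$, and $\Delta''(f_k)$ sharp enough to preserve the local quadratic decay of $1 - \|\QQ_0(f)\|_2$ so that the perturbed polynomial still attains its maximum exactly on $\{f_k\}$. Translating the scalar concentration tools of \cite{heckel2016generalized} into this vector-valued MMV setting, and verifying that independence of the $\bPhi_n$ across sensors genuinely boosts the concentration rate linearly in $N$, is where the bulk of the technical effort will lie.
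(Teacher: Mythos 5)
Your route (an explicit dual certificate built by per-sensor randomized interpolation of the Cand\`es--Fern\'andez-Granda/Yang--Xie polynomial) is genuinely different from the paper's proof, but as proposed it has a concrete gap at the concentration step, and that gap is exactly where the theorem's distinctive feature lives. The perturbation $\Delta_n(f)$ of your constrained interpolant is not a Gaussian quantity: $\bm{\lambda}_n$ is obtained by solving the $2K$ interpolation/derivative equations, so $\Delta_n(f)$ depends on $\bPhi_n$ through the (pseudo)inverse of the compressed matrix $\bPhi_n[\a(f_1)\,\cdots\,\a(f_K)]$ and its derivative counterpart. Hence $\sum_{n=1}^N|\Delta_n(f)|^2$ is not a sum of independent squared Gaussians, Laurent--Massart tails do not apply, and they certainly do not ``produce exactly'' $e^{-N(M'-2)/8}$. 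More fundamentally, your construction requires \emph{every} sensor's compressed Gram system to be well conditioned so that each $\bm{\lambda}_n$ is bounded (the mode shapes carry no randomness in this theorem, so you cannot argue that badly conditioned sensors are harmless), and a union bound over the $N$ sensors gives a failure probability of order $N e^{-cM'}$, which \emph{worsens} as $N$ grows. Since the whole point of the statement is a failure probability $e^{-N(M'-2)/8}$ that improves linearly in $N$ (this is what allows $M'$ per sensor to shrink as $N$ grows, cf.\ \eqref{eq:ourmprime}), the proposed argument cannot reach the stated conclusion without a genuinely new aggregation idea; the uniform-in-$f$ net and the near-region Taylor control you flag as ``technical effort'' are not the real obstruction.

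For comparison, the paper avoids certificates entirely. Proposition~\ref{TN} reduces uniqueness for \eqref{ANMOri} to $\TT_{\AA}(\X^{\star})\cap\change{null}(\bPhi)=\{\zero\}$; Theorem~\ref{EPhiZ}, a block-diagonal version of Gordon's escape-through-the-mesh bound, gives $\EEE\min_{\Z\in\Omega}\|\bPhi\,\change{diag}(\Z)\|_F\ge\lambda_{M'N}-\omega(\Omega)$ with $\lambda_{M'N}\ge \sqrt{N}M'/\sqrt{M'+1}$, and a single application of Gaussian concentration for the $1$-Lipschitz map $\bPhi\mapsto\min_{\Z\in\Omega}\|\bPhi\,\change{diag}(\Z)\|_F$ (Corollary~\ref{MpGauss}) yields precisely the $1-e^{-N(M'-2)/8}$ probability: the $N$-gain comes from the mean $\lambda_{M'N}\sim\sqrt{NM'}$ of a single global quantity, not from summing per-sensor perturbations. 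The sample complexity \eqref{Mp} then follows by bounding $\omega^2(\TT_{\AA}(\X^{\star})\cap\SSS^{MN-1})$ through the polar cone and the denoising-risk identity, using the MMV atomic norm denoising bound of Theorem~\ref{ANDB}/Corollary~\ref{EAND} to get $\omega^2\le CKN\log(M)$, so that $M'\ge\frac{4}{N}\omega^2(\Omega)$ becomes $M'\ge CK\log(M)$. If you wish to pursue a certificate proof, you would need a construction whose validity is certified by one global concentration event rather than $N$ per-sensor conditioning events; as written, your plan proves at best a version of the theorem with probability $1-NM^{-cK}$-type decay and no improvement in $N$.
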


We prove Theorem~\ref{THMCS} in Section~\ref{proofTHMCS}, extending analysis from Theorem 2 in \cite{heckel2016generalized}, which concerned the SMV ($N=1$) version of this problem.

From Theorem~\ref{THMCS}, we see that modal analysis is possible in the random temporal compression scenario using a number $M'$ of measurements per sensor that scales linearly with the number of active modes $K$. One logarithmic term remains in this bound~\eqref{Mp}, while some of the other logarithmic terms appearing in~\eqref{eq:syncbound} and~\eqref{eq:asyncbound} have disappeared. Expressing the failure probability as $\delta$ for easier comparison with Theorems~\ref{synch} and~\ref{asynch}, we see that when
\begin{align}
M'\geq \max \left\{ \frac{8}{N} \log\left(\frac{1}{\delta}\right) +2, CK\log(M)  \right\}, \label{eq:ourmprime}
\end{align}
exact recovery of the mode shapes and frequencies is possible with probability at least $1-\delta$. In the first term inside this expression, the number $M'$ of measurements per sensor actually {\em decreases} as the number $N$ of sensors increases, a contrast with Theorem~\ref{synch} and~\ref{asynch} where the opposite trend holds. We note that the benefit of increasing the number of sensors $N$ comes from the fact that in the setup of Theorem~\ref{THMCS}, each sensor uses a different measurement matrix $\Phi_n$. When the second term in~\eqref{eq:ourmprime} dominates, the number $M'$ of measurements per sensor still does not increase with $N$. Similar to the random sampling schemes, the total duration over which the signals must be observed does not change, but within this time span significant compression may be possible.

Also in contrast with Theorems~\ref{synch} and~\ref{asynch}, Theorem~\ref{THMCS} requires no randomness assumption on the mode shapes. This allows for its use in practical scenarios, where in general mode shapes will not be generated randomly. It would be interesting to remove the randomness condition from Theorems~\ref{synch} and~\ref{asynch}; we leave this as a question for future work. 


\subsubsection{Random spatial compression}

As a final strategy to compress the data matrix, we consider the scenario where each sensor modulates its sample value by a random number at each time sample, then the sensors transmit these values coherently to the central node, where the modulated values add to result in a single measurement vector. As discussed in~\cite{bajwa2006compressive}, such randomized spatial aggregation of the measurements can be achieved as part of using phase-coherent analog
transmissions to the base station. This strategy is sometimes referred to as {\em compressive wireless sensing}; we refer to it as {\em random spatial compression}.

In random spatial compression, the collected measurements can be expressed as
\begin{align*}
y_m=\langle \X^{\star \top}(:,m), \tilde{\b}_m \rangle=\langle \X^{\star \top},\tilde{\b}_m\bm{e}_m^\top\rangle,~~m=1,\ldots,M
\end{align*}
where $\tilde{\b}_m \in \mathbb{C}^{N\times1}$ and $\bm{e}_m\in \mathbb{R}^{M\times 1}$ is the $m$th canonical basis vector.
%
%
%
%
This allows us to formulate the modal analysis problem as an ANM problem:
\begin{equation}
\begin{aligned}
\Xh = &\arg\min_{\X}\|\X\|_{\AA}~\\
&\change{s.t.}~y_m=\langle\X^\top,\tilde{\b}_m\e_m^\top\rangle,~~1\leq m \leq M,\label{ANMspatial}
\end{aligned}
\end{equation}
which is equivalent to
\begin{align*}
\Xh, \uh, \Vh=\arg\min_{\X,\u,\V} ~~&\frac{1}{2M}\change{Tr}(\mathcal{T}(\bm{u}))+\frac{1}{2}\change{Tr}(\V)\\
\change{s.t.} ~~~&\left[  \begin{array}{cc}
  \mathcal{T}(\bm{u})&\X \\
  {\X}^*&\V
\end{array}   \right]\succeq 0,~~y_m=\langle\X^\top,\tilde{\b}_m\e_m^\top\rangle,~~1\leq m \leq M.
\end{align*}

As in Section \ref{US}, one can solve the above SDP, obtain the estimated frequencies from the dual polynomial, and recover the mode shapes and amplitudes by solving a least-squares problem. The following theorem follows from our previous work~\cite{yang2016super} on super-resolution of complex exponentials from modulations with unknown waveforms.

\begin{Theorem}
\label{thm:randspace}
\cite{yang2016super} Suppose we observe the data matrix $\X^\star$ with the above random spatial compression scheme. Assume that the random vectors $\tilde{\b}_m$ are i.i.d.\ samples from an isotropic and $\mu-$incoherent distribution (see~\cite{yang2016super} for details). Also, suppose that $\bpsi_{k}^\star$ are drawn i.i.d. from the uniform distribution on the complex unit sphere and that the minimum separation condition (\ref{mise}) is satisfied. Then there exists a numerical constant $C$ such that\footnote{Note that the bound in (\ref{eq:randspcresult}) contains $M$ on both sides. One could remove $M$ from the right hand side using the Lambert W-function as in \cite[(64)--(66)]{eftekhari2016stabilizing}. Here, we prefer the form in (\ref{eq:randspcresult}) for simplicity and to highlight the relationship between $M$ and $K,N$.}
\begin{align}
M \geq C\mu KN \log\left(\frac{MKN}{\delta} \right)\log^2 \left( \frac{MN}{\delta} \right)
\label{eq:randspcresult}
\end{align}
is sufficient to guarantee that we can exactly recover $\X^{\star}$ via \eqref{ANMspatial} and exactly recover the frequencies and mode shapes up to a phase ambiguity with probability at least $1-\delta$.
\end{Theorem}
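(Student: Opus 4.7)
The plan is to reduce Theorem~\ref{thm:randspace} to the main recovery guarantee of \cite{yang2016super} for super-resolution of complex exponentials from modulations with unknown waveforms, and verify that the present setup is literally a special case of that framework with matching hypotheses and atomic norm program.

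First, I would rewrite the measurements in a form that exposes the modulation structure. Using the factorization $\X^\star=\sum_{k=1}^K|A_k|\a(f_k)\bpsi_k^\top$ from (\ref{data_uniform}), the $m$-th row of $\X^\star$ is $\sum_k |A_k|e^{j2\pi f_k(m-1)}\bpsi_k^\top$, so
\begin{align*}
y_m \;=\; \langle \X^{\star\top}(:,m),\,\tilde{\b}_m\rangle \;=\; \sum_{k=1}^K |A_k|\,\langle\bpsi_k,\tilde{\b}_m\rangle\,e^{j2\pi f_k (m-1)} \;=\; \sum_{k=1}^K w_k[m]\,e^{j2\pi f_k (m-1)},
\end{align*}
where I define the unknown ``waveform'' $w_k[m]=|A_k|\langle\bpsi_k,\tilde{\b}_m\rangle$. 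The key observation is that $w_k[m]$ is the inner product of the fixed, unknown vector $|A_k|\bpsi_k$ with the known random modulation vector $\tilde{\b}_m$, so each waveform lives in a known $N$-dimensional subspace with unknown coordinates given by the mode shape. This is exactly the model of \cite{yang2016super}: line spectra whose amplitudes are modulated at each sample by an unknown waveform lying in a known low-dimensional subspace.

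Next, I would verify that the convex program (\ref{ANMspatial}) is precisely the atomic norm decoder analyzed in \cite{yang2016super}. The atoms $\A(f,\b)=\a(f)\b^*$ defined in (\ref{atom}) are the same rank-one building blocks used there (complex exponential $\otimes$ waveform-coordinate vector), and the linear measurement $y_m = \langle \X^\top,\tilde{\b}_m\e_m^\top\rangle$ agrees with the measurement operator in that paper. The hypotheses then line up term by term: the minimum separation (\ref{mise}) is the same separation condition used in the dual-certificate construction of \cite{yang2016super}; the isotropy and $\mu$-incoherence of the distribution of $\tilde{\b}_m$ are exactly their assumptions on the modulation vectors; and the uniform distribution of $\bpsi_k^\star$ on the complex unit sphere plays the role of the random-sign / random-coefficient hypothesis on the waveform coordinates. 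Invoking the main theorem of \cite{yang2016super} with these correspondences yields that $\X^\star$ is the unique solution of (\ref{ANMspatial}) with probability at least $1-\delta$ as soon as $M \geq C\mu K N\log(MKN/\delta)\log^2(MN/\delta)$. Exact frequency localization then follows from the standard dual polynomial argument (as in Section~\ref{US}: $\|\QQ(f)\|_2=1$ precisely on $\{f_k\}$), and the mode shapes are then recovered up to a common phase by the least-squares step $\bPsih_A=(\A_{\fh}^\dag\Xh)^\top$ followed by Euclidean normalization.

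The main obstacle is purely a bookkeeping one: verifying that the quantitative guarantee in \cite{yang2016super} instantiates to precisely the bound (\ref{eq:randspcresult}), with the factor $N$ identified as the ambient dimension of the waveform subspace (since each mode shape $\bpsi_k^\star\in\CCC^N$ lives in an $N$-dimensional complex space) and the factor $K$ as the number of active frequencies. The heart of the underlying argument in \cite{yang2016super} is the construction of a random dual certificate: one forms a candidate dual polynomial via the squared Fej\'er kernel of Cand\`es--Fernandez-Granda, perturbs it so that the interpolation constraints $\QQ(f_k)= \bpsi_k/\|\bpsi_k\|$ (up to amplitude phase) are satisfied in expectation, and then uses matrix Bernstein / vector Hoeffding type concentration on the random modulations and random mode-shape directions to show that the inequality $\|\QQ(f)\|_2<1$ holds strictly off the support with high probability. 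The combination of an $\epsilon$-net over $f\in[0,1)$ (contributing a $\log M$ factor) and union bounds over the $K$ interpolation nodes and the $N$-dimensional coefficient space (contributing $\log(KN/\delta)$ factors) produces the $\log\cdot\log^2$ structure in (\ref{eq:randspcresult}). Once the certificate is in hand, uniqueness of $\X^\star$ for (\ref{ANMspatial}) follows from the standard convex-analytic argument for ANM, completing the proof.
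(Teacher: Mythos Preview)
Your proposal is correct and matches the paper's approach: the paper does not give a self-contained proof of Theorem~\ref{thm:randspace} but simply attributes it to \cite{yang2016super}, noting that it ``follows from our previous work on super-resolution of complex exponentials from modulations with unknown waveforms.'' Your reduction---rewriting $y_m$ as a sum of complex exponentials modulated by waveforms $w_k[m]=|A_k|\langle\bpsi_k,\tilde{\b}_m\rangle$ lying in the known $N$-dimensional subspace spanned by the $\tilde{\b}_m$, then matching atoms, separation condition, isotropy/incoherence, and random mode-shape assumptions to the hypotheses of \cite{yang2016super}---is exactly the intended identification, and in fact is more explicit than anything the paper itself writes down.
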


In random spatial compression, the central node receives exactly one compressed measurement at each time instant. Thus, the number of time samples $M$ equals the total number of compressed measurements. Thus, Theorem~\ref{thm:randspace} states that it is sufficient for the total number of compressed measurements to scale essentially linearly with $KN$ (as long as (\ref{mise}) is also satisfied). Since $K$ is the number of active modes and $N$ is the number of sensors (and thus the length of each unknown mode shape), the number of unknown degrees of freedom in this problem scales with $KN$. In this sense, the result in Theorem~\ref{thm:randspace} compares favorably with those in Theorems~\ref{synch}, \ref{asynch}, and~\ref{THMCS}. Theorem~\ref{thm:randspace} does require a randomness assumption on the mode shapes.

We note that, in some applications, once (\ref{mise}) is satisfied (which imposes a lower bound on $M$ that is comparable to what appears in Theorems~\ref{THMuniform}--\ref{THMCS}), it could be the case that~\eqref{eq:randspcresult} is also satisfied. In this case, the same uniform data matrix $\X^\star$ that suffices for perfect recovery according to Theorem~\ref{THMuniform} can be completely compressed in the spatial dimension, reducing the total number of samples from $MN$ to $M$. Thus, significant savings may be possible in structures where the number of sensors $N$ is large. Indeed, up to the point where~\eqref{eq:randspcresult} becomes a stronger condition than (\ref{mise}), one could continue adding sensors without increasing the requisite number of compressed samples.

\subsection{Modal analysis for noisy signals}
\label{MMVAND}

In this section, we revisit the uniform sampling scenario where a data matrix is fully observed, but we now consider the case where the samples are corrupted by additive white Gaussian noise. While the analysis in this section may be of its own independent interest, it is also used in our proof of Theorem~\ref{THMCS}.

We consider observations of the form $\Y=\X^{\star}+\W$, where the entries of $\W$ satisfy $\CC\NN(0,\sigma^2)$, and we consider the following atomic norm denoising problem:
\begin{align}
\min_{\X} \frac{1}{2} \|\Y-\X\|_F^2+\lambda\|\X\|_{\AA}.
\label{AND}
\end{align}
The theorem below provides an upper bound on the recovery error in Frobenius norm and is proved in Section~\ref{ProofANDB}.
\begin{Theorem}
\label{ANDB}
Suppose the true data matrix $\X^{\star}$ is given as in~\eqref{data_uniform} with the true frequencies satisfying the minimum separation condition~\eqref{mise}. Given the noisy data $\Y=\X^{\star}+\W$, where the entries of $\W$ are i.i.d.\ complex Gaussian random variables which satisfy $\CC\NN(0,\sigma^2)$, the estimate $\Xh$ obtained by solving the atomic norm denoising problem~\eqref{AND} (with regularizing parameter $\lambda =\eta \sigma\sqrt{4MN\log(M)}$ and with $\eta\in(1,\infty)$ chosen sufficiently large) will satisfy
\begin{align}
\|\Xh-\X^{\star}\|_F^2\leq C\sigma^2KN\log(M) \label{Ebound}
\end{align}
with probability at least $1-\frac{1}{M^2}$ for a numerical constant $C$.
\end{Theorem}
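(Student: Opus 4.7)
I would extend the SMV atomic norm denoising argument of~\cite{bhaskar2013atomic,tang2015near} to the MMV setting. The proof combines three ingredients: (i) a high-probability bound on the dual atomic norm of the Gaussian noise $\W$, which justifies the prescribed value of $\lambda$; (ii) a basic inequality obtained from the first-order optimality conditions of~\eqref{AND}; and (iii) an MMV interpolating dual certificate available under the minimum separation condition~\eqref{mise}, which upgrades the naive slow rate into the fast rate $O(\sigma^2 KN\log M)$.

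\textbf{Noise dual-norm bound.} By~\eqref{dualnorm}, $\|\W\|_\AA^* = \sup_{f\in[0,1)} \|\W^*\a(f)\|_2$. For each fixed $f$, since $\|\a(f)\|_2^2 = M$, the vector $\W^*\a(f) \in \CCC^N$ has i.i.d.\ $\CC\NN(0,\sigma^2 M)$ entries, so $\|\W^*\a(f)\|_2^2/(\sigma^2 M)$ is one-half of a $\chi^2$ with $2N$ degrees of freedom. A standard chi-square tail bound gives Gaussian-type concentration around its mean $\sigma\sqrt{MN}$. To pass to the supremum over $f\in[0,1)$, I would discretize with an $\epsilon$-net of polynomial resolution in $M$ and apply a union bound, exploiting a polynomial-in-$M$ Lipschitz estimate for $f\mapsto\|\W^*\a(f)\|_2$ coming from $\|\a'(f)\|_2$. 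For $\eta$ chosen sufficiently large, this yields $\|\W\|_\AA^*\le\sigma\sqrt{4MN\log M}=\lambda/\eta$ with probability at least $1-1/M^2$.

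\textbf{Basic inequality and dual certificate.} Writing $\H=\hat\X-\X^\star$, first-order optimality of $\hat\X$ in~\eqref{AND} gives $\Y-\hat\X\in\lambda\,\partial\|\hat\X\|_\AA$, and the subgradient inequality evaluated at $\X^\star$ produces
\[ \|\H\|_F^2 \;\le\; \langle \W,\H\rangle_\RRR \,+\, \lambda\bigl(\|\X^\star\|_\AA-\|\hat\X\|_\AA\bigr). \]
Under~\eqref{mise} I would then invoke an MMV interpolating polynomial $\Q_0$ constructed in the style of~\cite{Yang14,Li15}: $\Q_0^*\a(f_k)=\b_k$ on the support, $\|\Q_0^*\a(f)\|_2\le 1$ elsewhere, with quadratic decay $\|\Q_0^*\a(f)\|_2\le 1-c(f-f_k)^2$ near each $f_k$. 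Duality gives $\|\X^\star\|_\AA=\langle \Q_0,\X^\star\rangle_\RRR$ and $\|\hat\X\|_\AA\ge\langle \Q_0,\hat\X\rangle_\RRR$, so the basic inequality reduces to $\|\H\|_F^2\le\langle \W-\lambda\Q_0,\H\rangle_\RRR$. The fast rate is then obtained by splitting $\H$ into a component localized near the true frequencies and an off-support component: the localized piece is bounded using the restriction of $\W$ to the $K$-dimensional signal subspace (whose energy is at most $\sigma^2 KN\log M$ by a $\chi^2_{2KN}$ tail bound), while the off-support piece is forced to be small by the quadratic decay of $\Q_0$ combined with $\|\W\|_\AA^*\le\lambda/\eta$. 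These two contributions combine to give the stated bound on $\|\H\|_F^2$.

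\textbf{Main obstacle.} The crux is the MMV dual-certificate construction together with the decomposition of $\H$ it enables: one needs a vector-valued interpolating polynomial whose total ``energy'' scales with $K$ rather than $M$ and whose near-support curvature carries workable constants in the MMV regime. Once such a $\Q_0$ is in hand, translating the scalar fast-rate argument of~\cite{tang2015near} into a vector-valued one is largely bookkeeping, but verifying that the vector-valued certificate retains the requisite interpolation and curvature properties for arbitrary sign patterns $\b_k\in\CCC^N$ on the unit sphere is where the real work lies.
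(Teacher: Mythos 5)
Your overall architecture---a discretization bound giving $\|\W\|_{\AA}^*\leq \lambda/\eta$ with probability $1-1/M^2$, the first-order optimality conditions of~\eqref{AND}, and an MMV dual certificate with exact interpolation of the unit vectors $\b_k$, quadratic near-region decay, and a far-region gap---is exactly the skeleton of the paper's proof, and your basic inequality $\|\Xh-\X^\star\|_F^2\leq\langle \W-\lambda\Q_0,\Xh-\X^\star\rangle_{\RRR}$ is a legitimate starting point. The gap is in how you convert this into the fast rate. The error $\E=\Xh-\X^\star$ does \emph{not} split into a piece lying in the fixed subspace spanned by the true atoms $\a(f_k)\b^*$ plus an off-support piece killed by the quadratic decay of $\Q_0$: the estimator places atomic mass at frequencies near, but not equal to, the $f_k$, so the ``localized'' part of $\E$ involves $\a(f)$ for $f$ ranging over a whole neighborhood of $f_k$ and already leaves the signal subspace at first order (it picks up $\a'(f_k)$ directions and higher). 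Consequently your proposed $\chi^2_{2KN}$ bound on the noise restricted to the signal subspace does not control this term; moreover that bound would yield roughly $\sigma^2(KN+\log M)$, which shows the $\sigma^2 KN\log M$ scaling of \eqref{Ebound} does not come from projected noise at all---in the correct argument it enters through $\lambda^2K/M$.

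What actually has to be done---and what the paper does, extending \cite{tang2015near}---is to write $\E=\int_0^1\a(f)\bnu^*(f)\,df$ for the difference measure $\bnu$, bound the error polynomial uniformly, $\sup_f\|\E^*\a(f)\|_2\leq\|\Y-\Xh\|_{\AA}^*+\|\W\|_{\AA}^*\leq 2\lambda$, and then show that the far-region mass $\int_F\|\bnu(f)\|_2\,df$ together with the zeroth, first, and second near-region moments $I_0,I_1,I_2$ of $\bnu$ is $O(K\lambda/M)$. That requires a Taylor-plus-Bernstein argument (Lemma~\ref{UBE}), the certificate bound $\int_0^1\|\QQ(f)\|_2\,df\leq CK/M$, certificates interpolating \emph{arbitrary} unit vectors $\b_k$ (and, for $I_1$, a derivative-type interpolation, which a single $\Q_0$ with curvature does not supply), and a cone-type inequality $\|\PP_{\FF^c}(\bnu)\|_{2,TV}-\|\PP_{\FF}(\bnu)\|_{2,TV}\leq C\eta^{-1}\bigl(K\lambda/M+I_2+\int_F\|\bnu\|_2\bigr)$ extracted from optimality; this last step is where ``$\eta$ sufficiently large'' is genuinely used (to beat the certificate constants $C_a,C_b$), not in the noise bound as you suggest, since $\lambda/\eta=\sigma\sqrt{4MN\log M}$ holds by definition. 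This moment-based machinery is the core of the proof rather than bookkeeping, so as written your argument for \eqref{Ebound} does not go through without importing it.
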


\begin{Corollary}
\label{EAND}
Under the assumptions of Theorem~\ref{ANDB}, the estimate $\Xh$ obtained by solving the atomic norm denoising problem~\eqref{AND} (with regularizing parameter $\lambda =\eta \sigma\sqrt{4MN\log(M)}$ and with $\eta\in(1,\infty)$ chosen sufficiently large) will satisfy
\begin{align}
\EEE \|\Xh-\X^{\star}\|_F^2 \leq C\sigma^2KN\log(M). \label{EG}
\end{align}
\end{Corollary}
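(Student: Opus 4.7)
The plan is to upgrade the high-probability tail bound of Theorem~\ref{ANDB} into a bound on the expected squared Frobenius error via a standard tail-to-expectation conversion. The strategy is to decompose the expectation over the event where the conclusion of Theorem~\ref{ANDB} holds and its complement, use Theorem~\ref{ANDB} directly on the former, and control the latter with a coarse deterministic bound combined with the $M^{-2}$ failure probability.

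Concretely, I let $G$ denote the event on which $\|\Xh-\X^{\star}\|_F^2 \leq C\sigma^2 KN\log(M)$, so that $\PPP(G) \geq 1 - 1/M^2$ by Theorem~\ref{ANDB}. Decomposing
\[
\EEE\|\Xh-\X^{\star}\|_F^2 = \EEE\bigl[\|\Xh-\X^{\star}\|_F^2 \one_G\bigr] + \EEE\bigl[\|\Xh-\X^{\star}\|_F^2 \one_{G^c}\bigr],
\]
the first summand is at most $C\sigma^2 KN\log(M)$ pointwise on $G$, so the task reduces to controlling the bad-event contribution.

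To handle this, I would derive a deterministic bound on $\|\Xh-\X^{\star}\|_F$ from the optimality of $\Xh$ in~\eqref{AND}. Comparing the objective at $\Xh$ and at $\X^{\star}$ gives $\tfrac12\|\Xh-\Y\|_F^2 + \lambda\|\Xh\|_{\AA} \leq \tfrac12\|\W\|_F^2 + \lambda\|\X^{\star}\|_{\AA}$; dropping $\lambda\|\Xh\|_{\AA}$, taking square roots, and invoking the triangle inequality yields $\|\Xh-\X^{\star}\|_F \leq 2\|\W\|_F + \sqrt{2\lambda\|\X^{\star}\|_{\AA}}$, hence $\|\Xh-\X^{\star}\|_F^2 \leq 8\|\W\|_F^2 + 4\lambda\|\X^{\star}\|_{\AA}$. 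Applying Cauchy--Schwarz,
\[
\EEE\bigl[\|\W\|_F^2 \one_{G^c}\bigr] \leq \sqrt{\EEE\|\W\|_F^4 \cdot \PPP(G^c)},
\]
and using the chi-squared moment estimate $\EEE\|\W\|_F^4 = O((MN\sigma^2)^2)$ together with $\lambda = \eta\sigma\sqrt{4MN\log M}$ and $\PPP(G^c) \leq 1/M^2$, the bad-event contribution reduces to something of order $O(N\sigma^2)$ plus a term decaying in $M$, both of which are dominated by $\sigma^2 KN\log(M)$ once the constant $C$ is enlarged appropriately.

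The main obstacle is verifying that the bad-event residual is genuinely of lower order than the main term. In particular, the piece involving $\|\X^{\star}\|_{\AA}$ requires checking that its scaling in $M$ is gentle enough relative to $\sigma^2 KN\log(M)$; this is where signal-dependent constants enter through the $\lambda$-weighted penalty term. Beyond this verification, the remaining work is routine, since all of the probabilistic content has already been absorbed into Theorem~\ref{ANDB}.
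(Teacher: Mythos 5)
Your overall strategy---splitting the expectation over the good event $G$ and its complement, bounding the good part by Theorem~\ref{ANDB}, and controlling the bad part via Cauchy--Schwarz together with $\PPP(G^c)\leq 1/M^2$ and chi-squared moments of $\|\W\|_F$---is exactly the paper's route. The gap is in the deterministic bound you invoke on the bad event. Dropping $\lambda\|\Xh\|_{\AA}$ from the basic inequality gives $\|\Xh-\X^{\star}\|_F^2\leq 8\|\W\|_F^2+4\lambda\|\X^{\star}\|_{\AA}$, and the second term is signal-dependent: under the separation condition $\|\X^{\star}\|_{\AA}=\sum_k|A_k|$, so its contribution to the bad event is of order $\lambda\|\X^{\star}\|_{\AA}\,\PPP(G^c)\asymp \sigma\sqrt{MN\log M}\,\bigl(\sum_k|A_k|\bigr)/M^2$. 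This is \emph{linear} in $\sigma$, while the target $C\sigma^2KN\log(M)$ is quadratic in $\sigma$ with a numerical constant $C$; letting $\sigma\to 0$ with the amplitudes and $M,N,K$ fixed shows that no numerical constant can absorb it. So the verification you flag as ``the main obstacle'' in fact fails, and it cannot be repaired by enlarging $C$ (the same amplitude-dependent, $O(\sigma)$ term survives even if you instead apply Cauchy--Schwarz to $\EEE[\|\E\|_F^4]$ computed from your crude bound).

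The paper closes this hole with a sharper deterministic bound, Lemma~\ref{EF2}: rather than discarding the penalty terms, it keeps the full optimality inequality, passes to the difference measure $\bnu$, and pairs the error with the dual polynomial $\Q$ constructed under the minimum separation condition, for which $\|\Q\|_F^2\leq\|\QQ(f)\|_1\|\QQ(f)\|_{\infty}\leq CK/M$. This yields $\|\E\|_F^2\leq 8\|\W\|_F^2+C\lambda^2 K/M = 8\|\W\|_F^2+C\eta^2\sigma^2KN\log(M)$, with no dependence on the amplitudes $A_k$. From there Lemma~\ref{EF4} gives $\sqrt{\EEE\|\E\|_F^4}\leq C\sigma^2MN\log(M)$, and Cauchy--Schwarz with $\PPP(G^c)\leq 1/M^2$ makes the bad-event contribution of order $\sigma^2N\log(M)$, dominated by the main term. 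To fix your argument you need to replace your crude bound with a Lemma~\ref{EF2}-type estimate; the missing ingredient is the dual certificate with small $L^1$ norm (i.e., the minimum-separation machinery), not merely the optimality of $\Xh$.
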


Note that Theorem \ref{ANDB} provides a bound which extends the SMV case in \cite{tang2015near} to the MMV case. Corollary \ref{EAND} is a direct consequence of Theorem \ref{ANDB} and is proved in Section \ref{ProofEAND}. The above MMV atomic norm denoising problem (\ref{AND}) is also considered in \cite{Li15}. However, there the authors provide only an asymptotic bound on $\EEE \|\Xh-\X^{\star}\|_F^2$.

\section{Simulation Results}
\label{simu}

In this section, we present some experiments on synthetic data to exhibit the performance of ANM based modal analysis in the various sampling and compression scenarios.\footnote{We use ADMM \cite{Li15} to solve the atomic norm denoising problem in Section \ref{noda}. All other simulations are implemented with CVX~\cite{grant2008cvx}.} We use the modal assurance criterion (MAC) to evaluate the quality of recovered mode shapes, which is defined as
\begin{align}
\change{MAC}(\bpsi_k^\star,\hat{\bpsi}_k) =|\langle\bpsi_k^\star,\hat{\bpsi}_k\rangle|, \nonumber
\end{align}
where $\hat{\bpsi}_k$ is the $k$th estimated mode shape and $\bpsi_k^\star$ is the $k$th true mode shape.\footnote{To correctly pair the recovered mode shapes with the true ones, we assume the true frequencies $f_k$ are in ascending order, and we adopt the same convention for the estimated frequencies.}
A value of $\change{MAC}=1$ would indicate perfect recovery of the true mode shape. We consider mode shape recovery to be a success if $\change{MAC}(\bpsi_k^\star,\hat{\bpsi}_k) \geq 0.99$ for all $k$. We consider data matrix recovery to be a success if $\frac{\|\Xh-\X^{\star}\|_F}{\|\X^{\star}\|_F}\leq 10^{-5}$.

\subsection{Uniform sampling}
\label{uniform}

As mentioned in Section~\ref{US}, the full data, uniform sampling case has previously been considered in~\cite{Yang14}. In this experiment, we compare the ANM based algorithm with the SVD based algorithm from~\cite{park2014modal} on a simple 6-degree-of-freedom boxcar system. The boxcar system is shown in Fig.~\ref{fig:boxcar}. We consider this undamped system under the context of free vibration. The system parameters are set as follows: the masses are $m_1=1,~m_2=2,~m_3=3,~m_4=4,~m_5=5,~m_6=6$ kg, and the stiffness values are $k_1=k_7=500$, $k_2=k_6=150$, $k_3=k_5=100$, $k_4=50$ N/m. Thus, the mass, damping and stiffness matrices in~\eqref{EOM} are given as $\M=\change{diag}([m_1,\ldots,m_6])$, $\C=\zero$, and
\begin{align*}
\K=\left[
\begin{array}{cccccc}
k_1+k_2& -k_2&0&0&0&0\\
-k_2&k_2+k_3&-k_3&0&0&0\\
0&-k_3&k_3+k_4&-k_4&0&0\\
0&0&-k_4&k_4+k_5&-k_5&0\\
0&0&0&-k_5&k_5+k_6&-k_6\\
0&0&0&0&-k_6&k_6+k_7
\end{array}
\right],
\end{align*}
respectively. The true mode shapes and natural frequencies of this system can be obtained from the (normalized) generalized eigenvectors and square root of the generalized eigenvalues of the stiffness matrix $\K$ and mass matrix $\M$. In particular, the true frequencies are $F_1 = 0.5384$, $F_2 = 0.8964$, $F_3 = 1.2404$, $F_4 = 1.7434$, $F_5 = 1.7881$, and $F_6 = 4.1218$ Hz.  We collect $M=100$ uniform samples from this system with sampling interval $T_s = 0.9/F_c$, where $F_c=\max_{1\leq k \leq 6} |F_k|$.
The amplitudes are set as $A_1=1$, $A_2=0.85$, $A_3=0.7$, $A_4=0.5$, $A_5=0.25$ and  $A_6=0.1$. We apply both ANM and SVD to the obtained data matrix to identify the modal parameters of this boxcar system. (Note that the SVD based algorithm only estimates the mode shapes and not the frequencies.) Figure~\ref{Uniform_BoxCar_All}(a) shows how the dual polynomial from ANM can be used to localize the frequencies. In particular, we identify the frequencies by identifying where the dual polynomial achieves $\| \QQ(f) \|_2=1$. The true frequencies and estimated frequencies are presented in Fig.~\ref{Uniform_BoxCar_All}(b). The estimated mode shapes from the two algorithms are illustrated in Fig.~\ref{Uniform_BoxCar_All}(c). As mentioned in Section~\ref{sec:modal}, the SVD based algorithm can only return mode shape estimates that are mutually orthogonal (in fact they are orthogonal singular vectors of the data matrix). The true mode shapes in this experiment are not orthogonal, which hampers the performance of the SVD. The ANM algorithm is not restricted to returning orthogonal mode shape estimates, and in this experiment it recovers the mode shapes perfectly. In particular, the MAC for AMN is $(1,1,1,1,1,1)$, while the MAC for SVD is $(0.9176, 0.7231, 0.8060, 0.9749,0.9659,0.9883)$.

In this experiment, the minimum separation $\Delta_f=0.0098$. Theorem~\ref{THMuniform} guarantees that perfect recovery is possible via ANM when $M \geq \max\{\frac{4}{\Delta_f}+1,257 \}=\max \{ 410, 275 \}$. We see perfect recovery in this simulation with $M = 100$ uniform samples.

In the following sections, for convenience we will use random mode shapes and/or discrete frequencies to test the ANM based algorithms.


%

\begin{figure}[t]
\begin{minipage}{0.32\linewidth}
\centering
\includegraphics[width=2.3in]{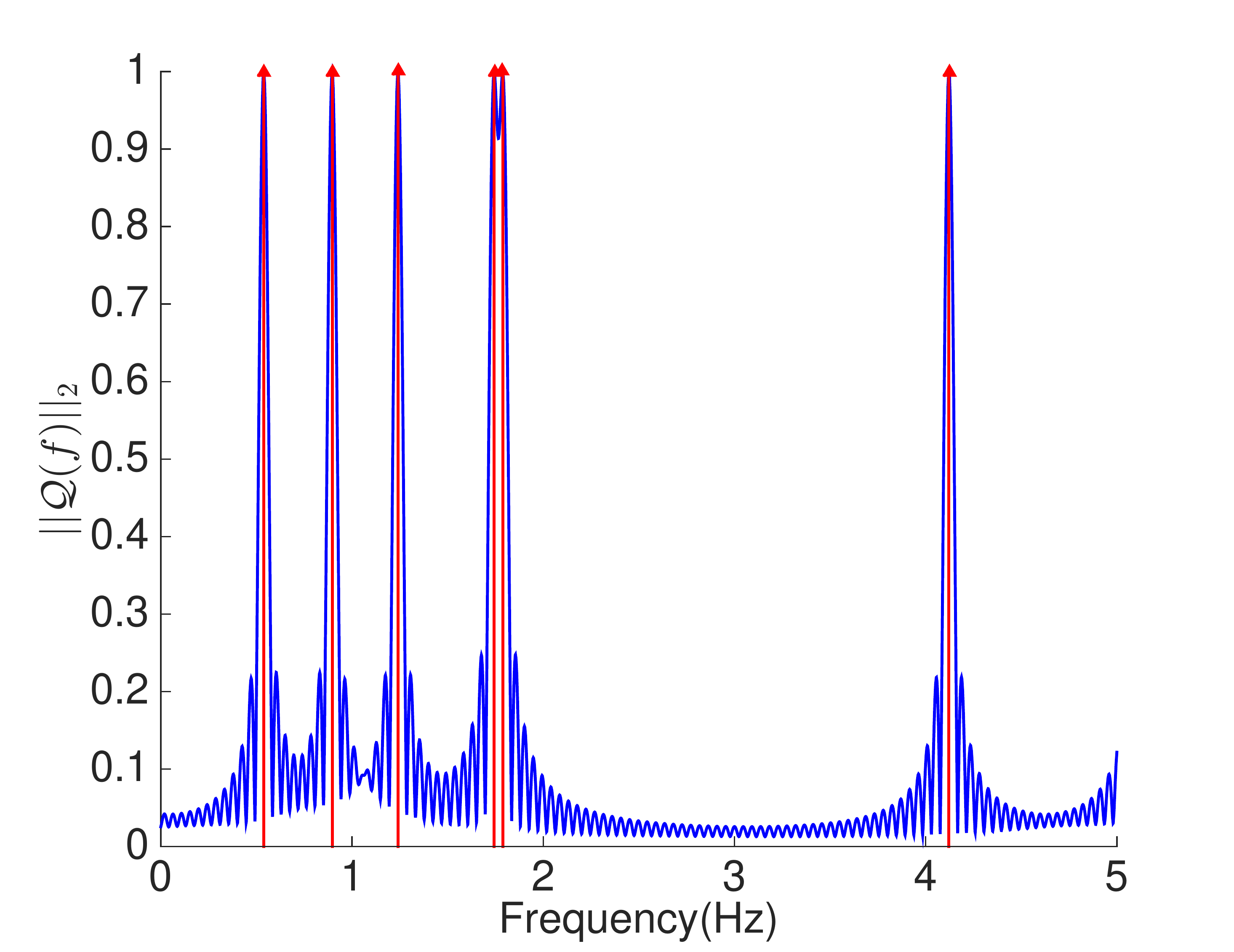}
\centerline{\footnotesize{(a)}}
\end{minipage}
\hfill
\begin{minipage}{0.32\linewidth}
\centering
\includegraphics[width=2.3in]{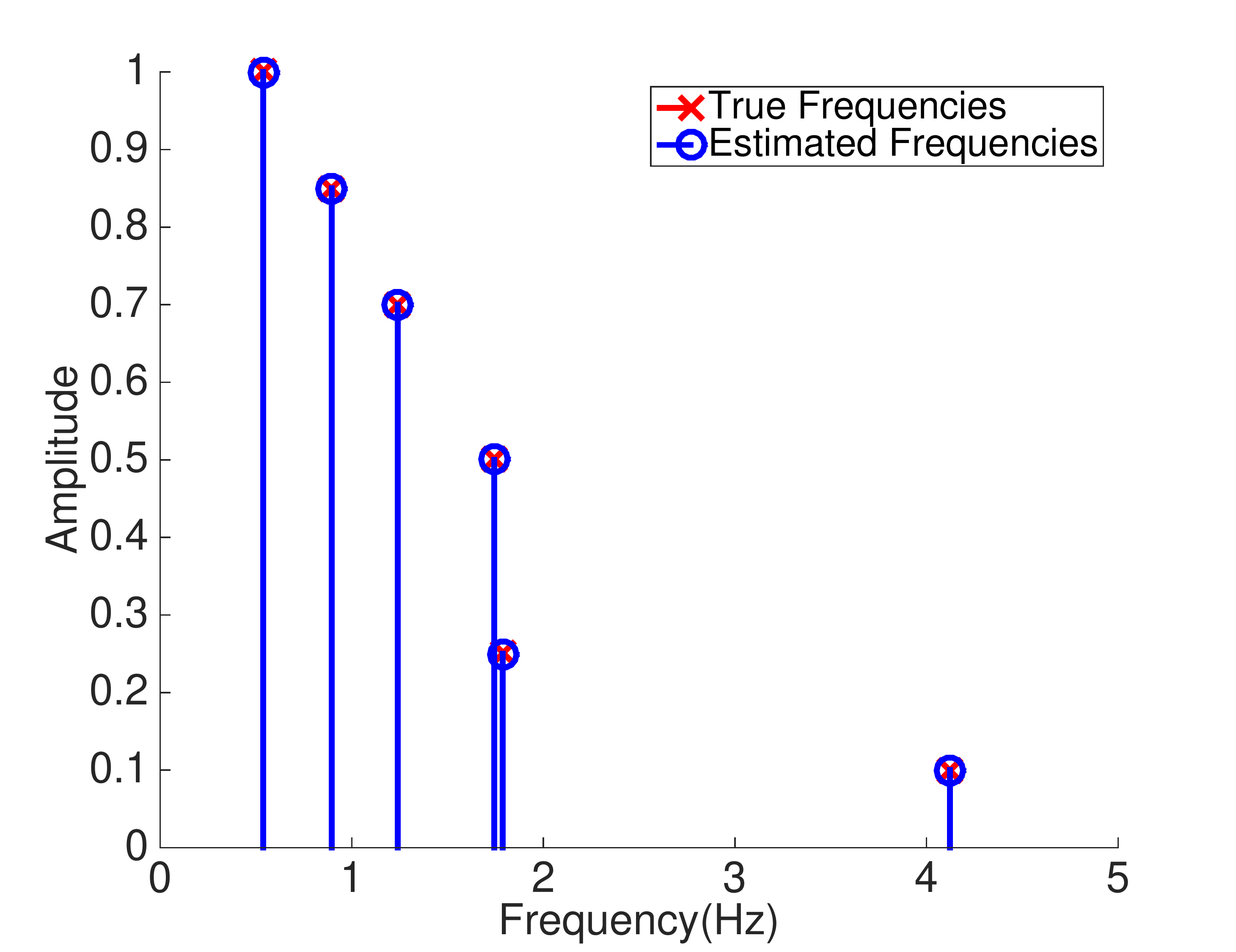}
\centerline{\footnotesize{(b)}}
\end{minipage}
\hfill
\begin{minipage}{0.32\linewidth}
\centering
\includegraphics[width=2.3in]{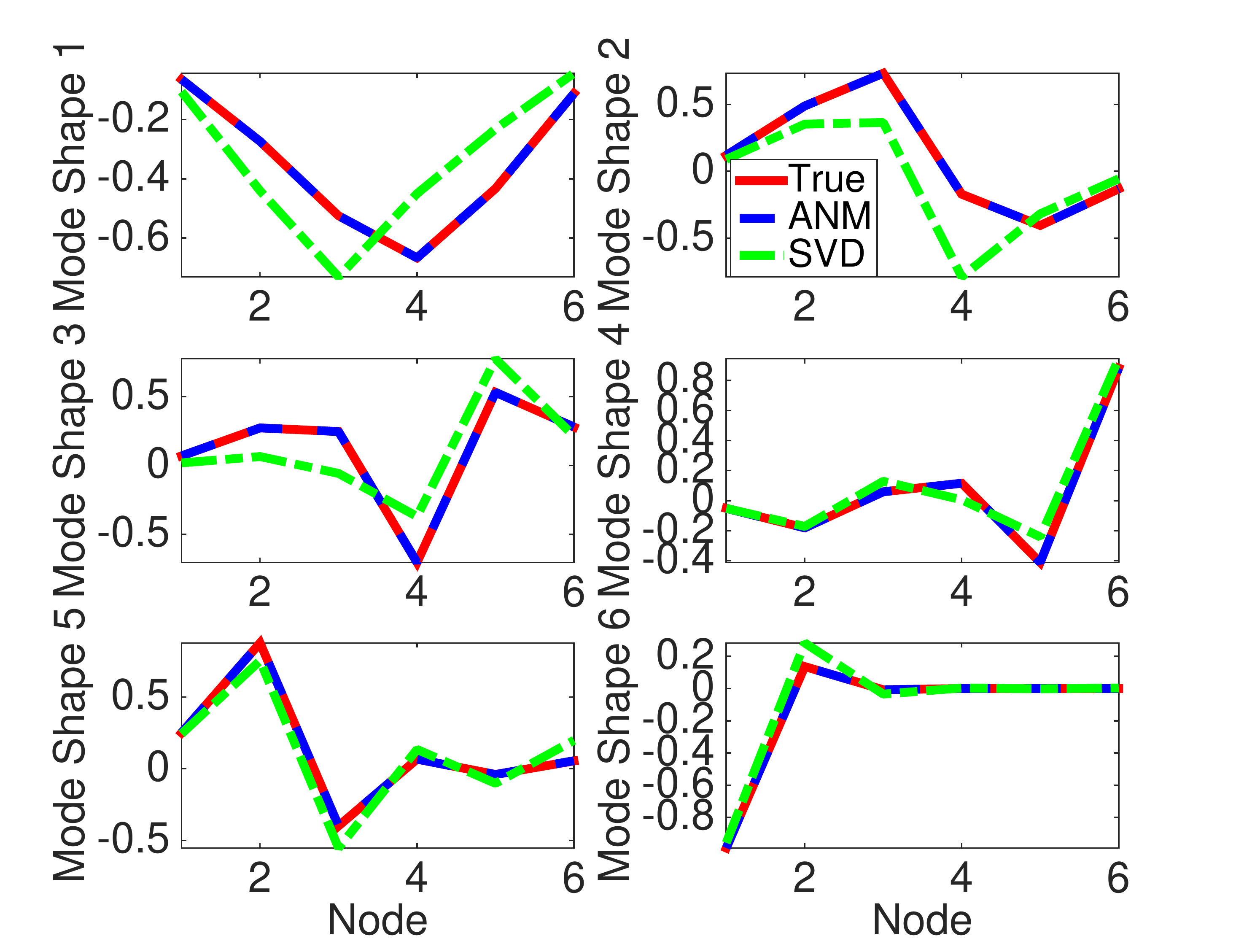}
\centerline{\footnotesize{(c)}}
\end{minipage}
\caption{Uniform sampling in the boxcar system. (a) ANM based frequency localization from dual polynomial $\QQ(f)$. The horizontal axis is shown in units of Hz, which corresponds to the digital frequency $f$ times the sampling rate $\frac{1}{T_s}$.
(b) Estimated frequencies by ANM. (c) Estimated mode shapes by ANM and SVD.}
\label{Uniform_BoxCar_All}
\end{figure}

\subsection{Asynchronous vs.\ synchronous random sampling}
\label{simu_AsyVsSyn}

In this experiment, we compare the performance of asynchronous and synchronous random sampling in a case where the mode shapes are randomly generated but also correlated.\footnote{We generate the first and third mode shapes randomly with i.i.d.\ Gaussian entries and then normalize. The second mode shape is generated by slightly perturbing the first mode shape and then normalizing.} An example of such correlated mode shapes is shown in Fig.~\ref{AsyVsSyn}(a). (Only the first two mode shapes are correlated.) The true discrete frequencies are set to $f_1= 0.1$, $f_2 = 0.15$, and $f_3 = 0.5$. We collect $M=80$ uniform samples from each sensor. However, from these, on average, we keep only $M' < M$ random samples from each sensor, where the value for $M'$ ranges from $2$ to $20$. In the case of synchronous random sampling, we keep exactly $M'$ samples from each sensor at the same times. In the case of asynchronous random sampling, we generate $\Omega_A$ uniformly at random, with $|\Omega_A| = M'N$.

Other parameters are set the same as in Section \ref{uniform}. We perform 300 trials (each with a new set of mode shapes) for each value of $M'$. Figure~\ref{AsyVsSyn}(b) shows that when compared with synchronous sampling, asynchronous random sampling needs fewer observed measurements to achieve the same probability of successful recovery. This observation is reasonable (given the additional diversity in the asynchronous observations) but stands in contrast with the relative difference between the theoretical bounds in Theorems~\ref{synch} and~\ref{asynch}. More work may be needed to theoretically characterize the performance difference between asynchronous and synchronous random sampling.

%

\begin{figure}[t]
\begin{minipage}{0.49\linewidth}
\centering
\includegraphics[width=2.4in]{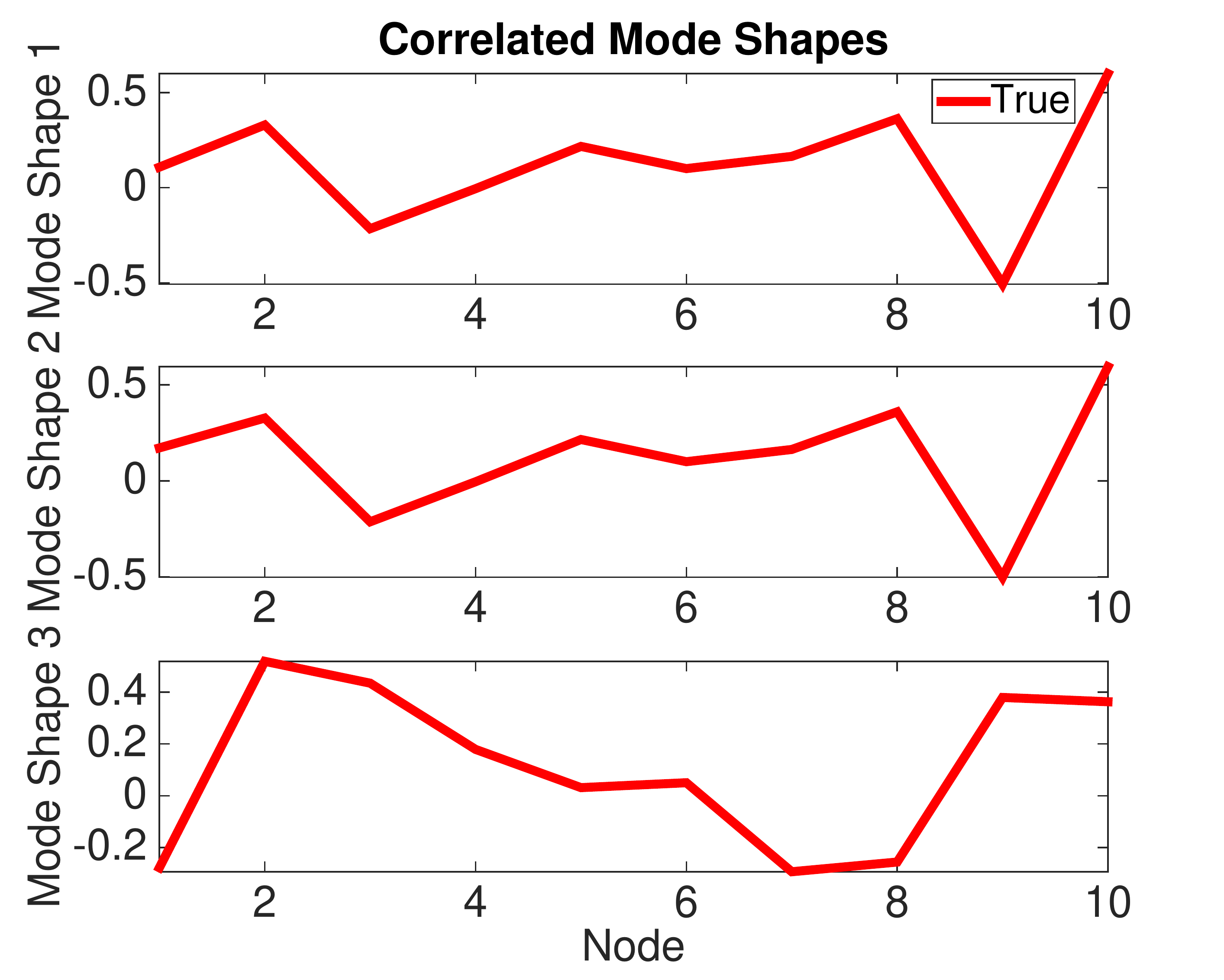}
\centerline{\footnotesize{(a)}}
\end{minipage}
\hfill
\begin{minipage}{0.49\linewidth}
\centering
\includegraphics[width=2.4in]{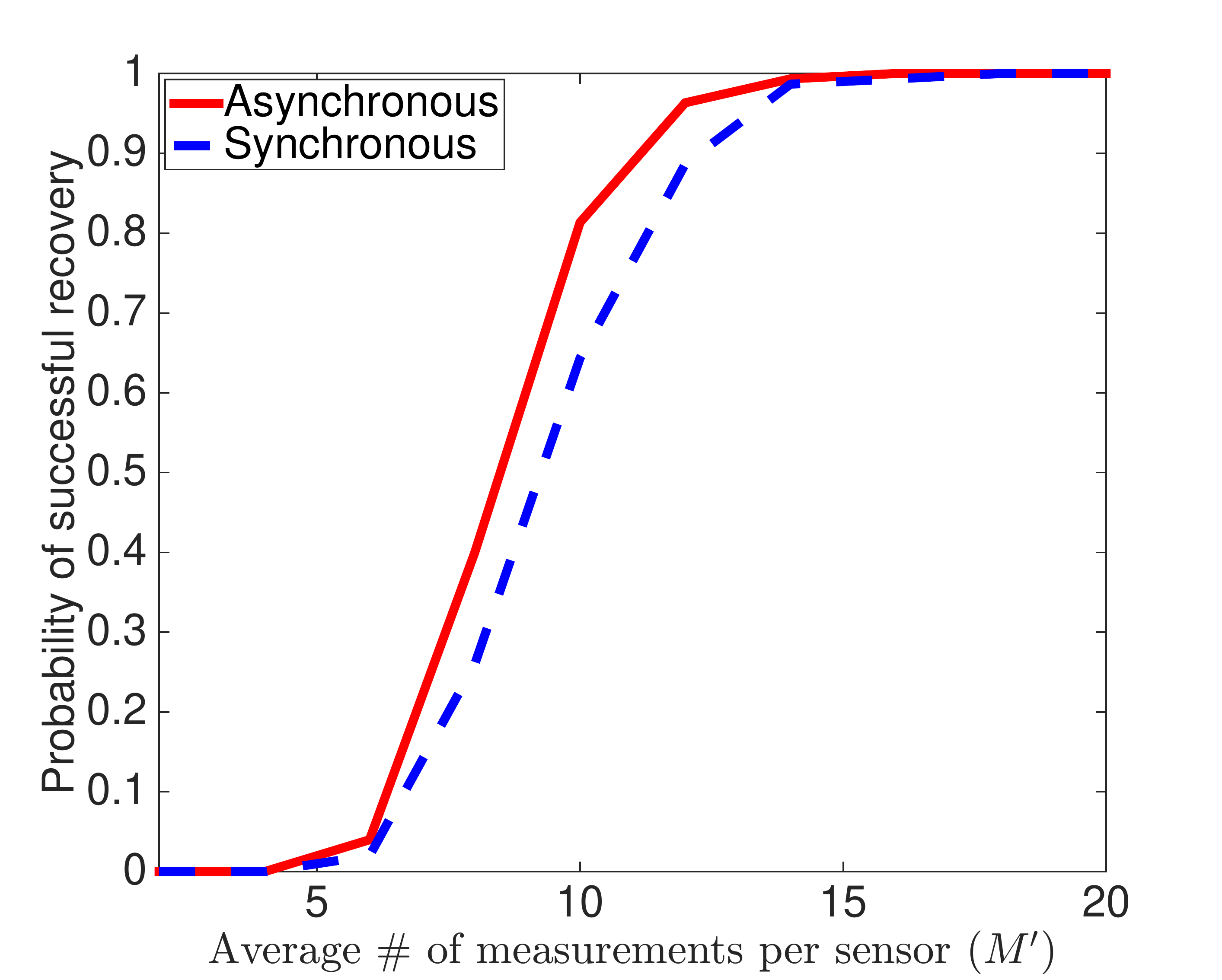}
\centerline{\footnotesize{(b)}}
\end{minipage}
\caption{Asynchronous vs.\ synchronous random sampling. (a) Correlated random mode shapes used in experiments. (b) Probability of successful recovery of mode shapes with each sampling scheme.}
\label{AsyVsSyn}
\end{figure}

\subsection{Random temporal compression}

In the next set of experiments, we generate a series of random matrices $\bPhi_n\in \mathbb{C}^{M' \times M},~n=1,\ldots,N$ to compress a set of $M=80$ uniform samples at each sensor. In the first experiment, we recover the data matrix and mode shapes both jointly (via the MMV approach from \eqref{ANMOri}) and separately (by solving $N$ separate SMV problems) to show the advantage of joint recovery. We choose $M'$ between $3$ and $30$ and perform 100 trials for each value of $M'$. We use random mode shapes, all generated with i.i.d.\ Gaussian entries and then normalized. The true discrete frequencies are set to $f_1= 0.1$, $f_2 = 0.15$, and $f_3 = 0.5$, giving a separation of $\Delta_f = 0.05$, which is slightly smaller than the separation condition prescribed in \eqref{mise}. Figures \ref{RandTemp}(a), (b) show the probability of successful recovery for the data matrix and mode shapes, respectively, when we use separate ANM (dashed lines) and joint ANM (solid lines). The number on each line denotes the number of sensors $N$ used in the experiments. It can be seen that joint recovery outperforms separate recovery significantly. Moreover, these results also indicate that the number of sensors has an important effect on the performance of joint recovery. In particular, for a given number of measurements $M'$, the probability of successful joint recovery will increase as the number of sensors increases, which is consistent with our theoretical analysis. In contrast, the probability of successful separate recovery will decrease.

In the second experiment, we set the number of sensors to be $N=10$ and investigate the minimal number of measurements per sensor $M'$ needed for perfect joint recovery with various numbers of active modes $K$. The true mode shapes are generated randomly and we set $M=100$. For each value of $K$, we randomly pick $K$ discrete frequencies from a frequency set $\FF=0.03: \change{Sep}:0.99$, where $\change{Sep}=2/M$. The amplitudes $A_k,~ k=1,\ldots,K$ are chosen randomly from the uniform distribution between $0$ and $1$. It can be seen in Fig.~\ref{RandTemp}(c) that the minimal number of measurements needed by each sensor for perfect recovery does scale roughly linearly with the number of active modes $K$, as indicated in Theorem \ref{THMCS}.
%

\begin{figure*}[t]
\begin{minipage}{0.32\linewidth}
\centering
\includegraphics[width=2.3in]{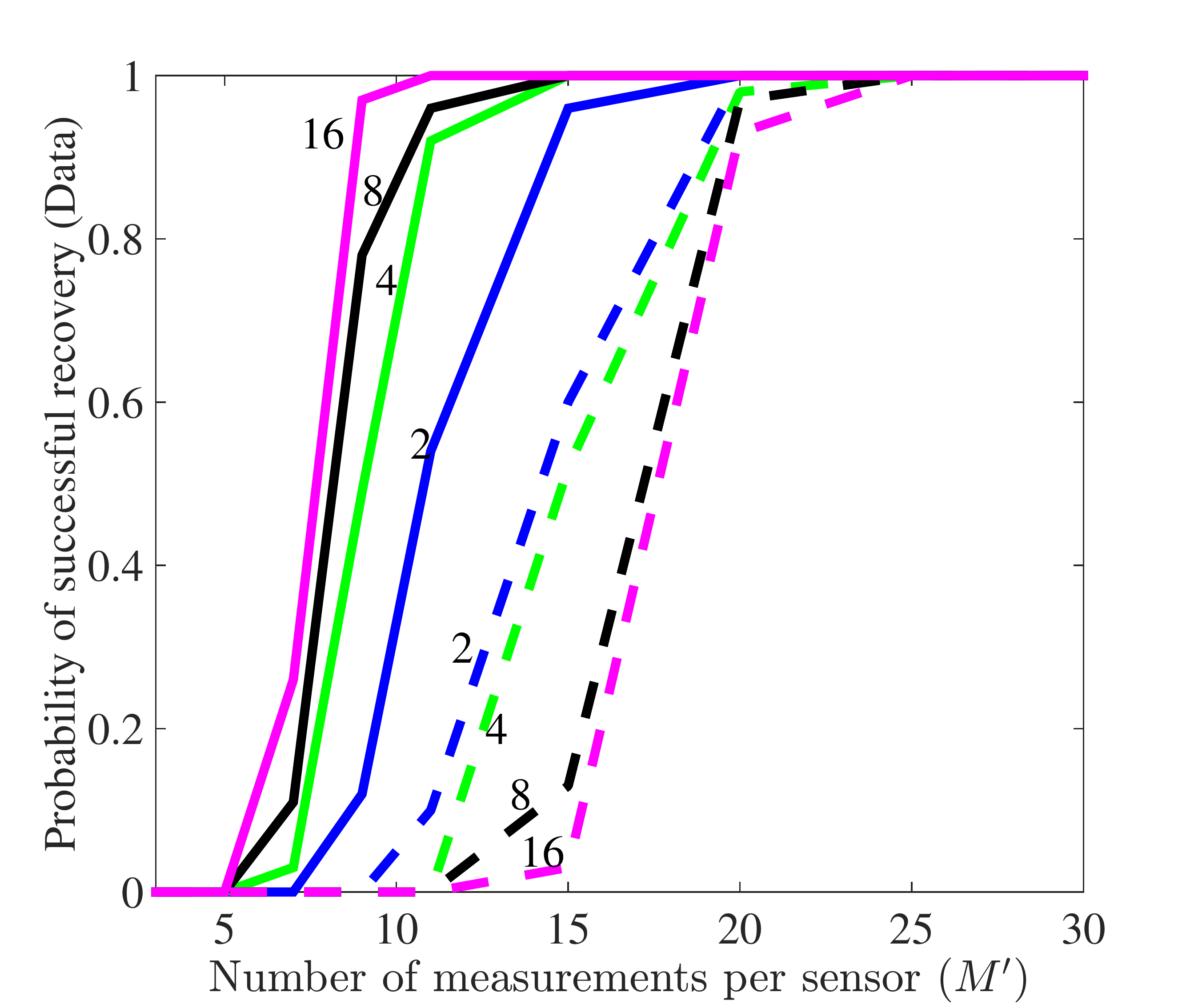}
\centerline{\footnotesize{(a)}}
\end{minipage}
\hfill
\begin{minipage}{0.32\linewidth}
\centering
\includegraphics[width=2.3in]{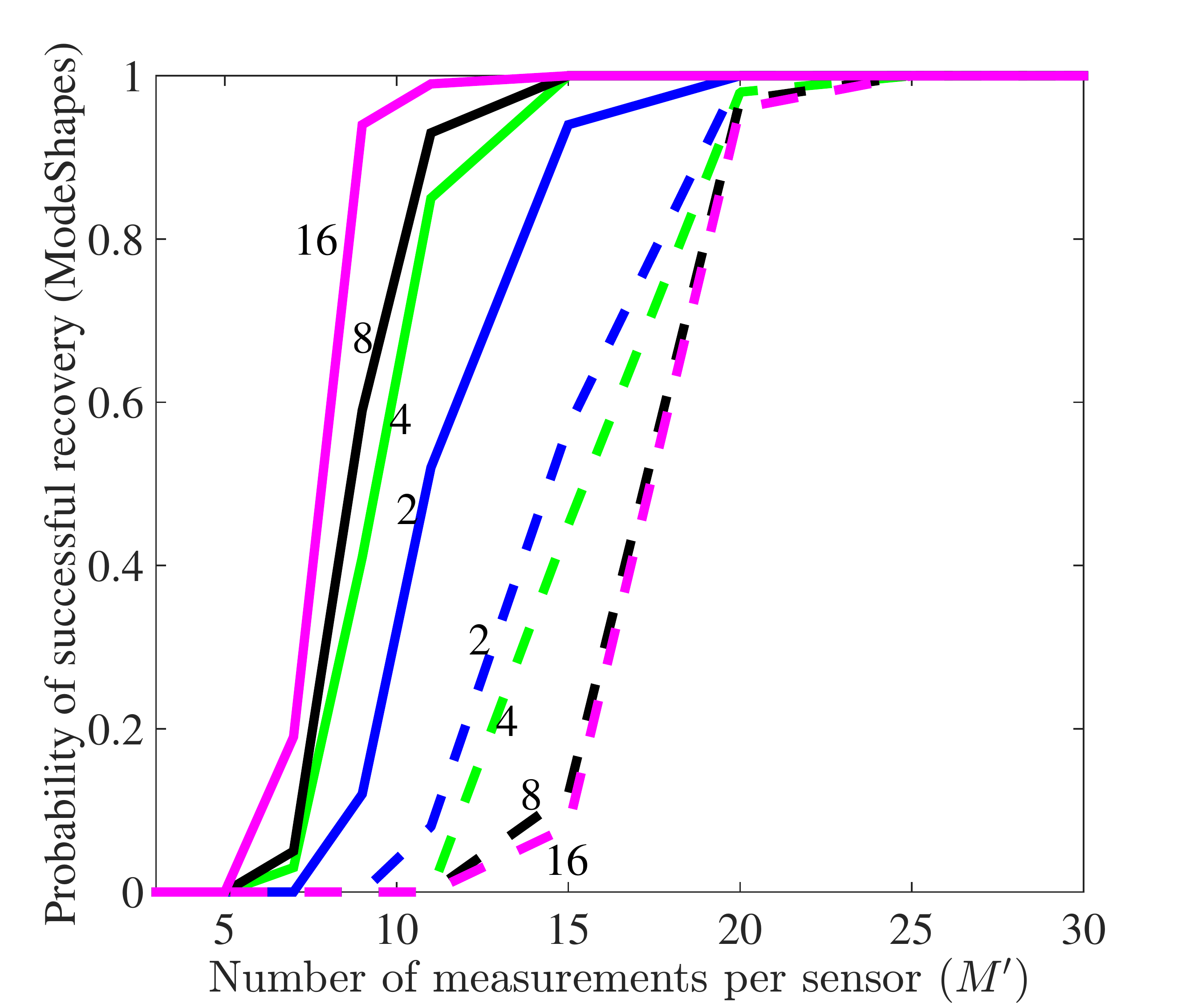}
\centerline{\footnotesize{(b)}}
\end{minipage}
\hfill
\begin{minipage}{0.32\linewidth}
\centering
\includegraphics[width=2.3in]{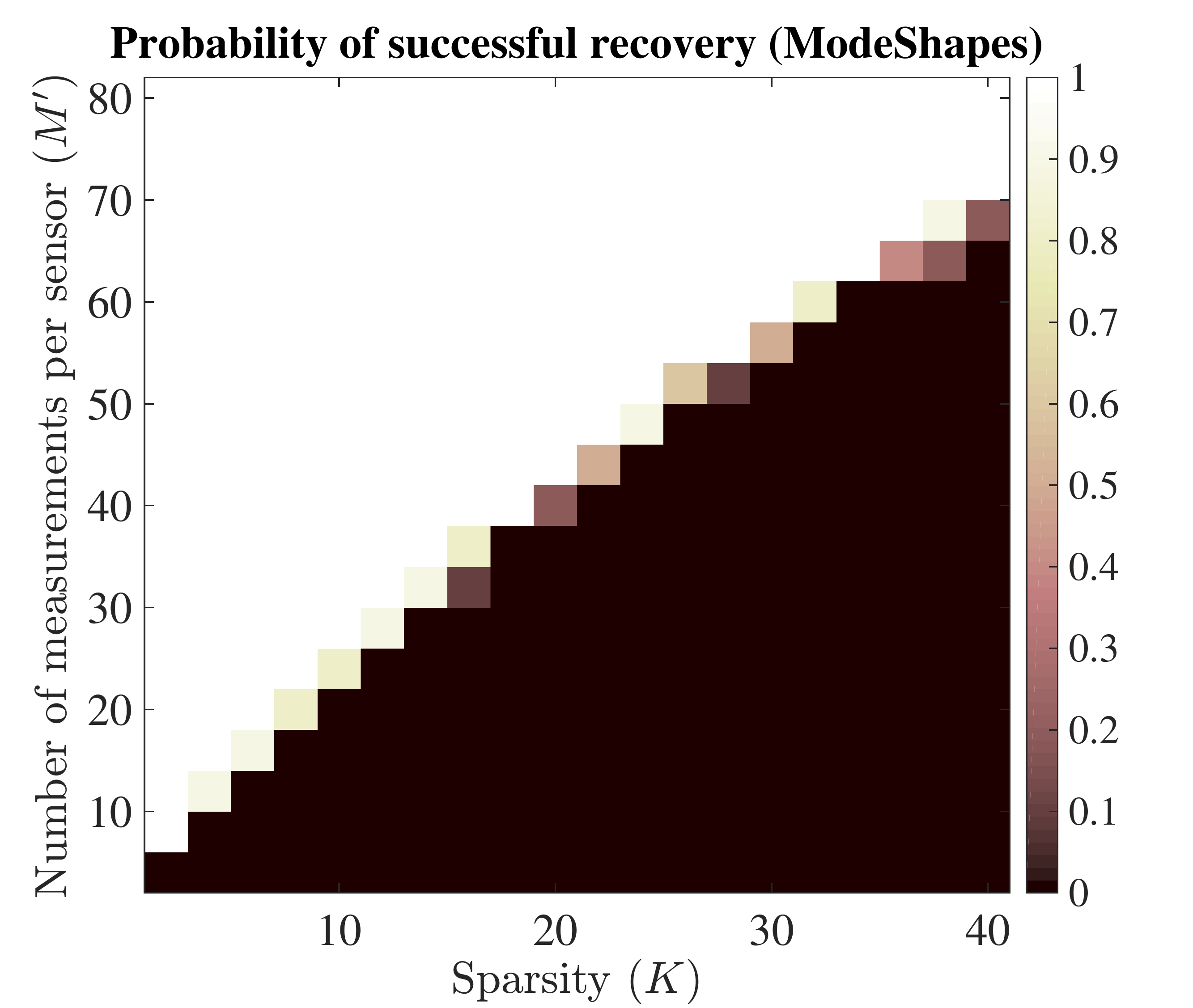}
\centerline{\footnotesize{(c)}}
\end{minipage}
\caption{Random temporal compression. (a)~Recovery of data matrix via joint ANM (solid lines) and separate ANM (dashed lines). (b)~Recovery of mode shapes via joint ANM (solid lines) and separate ANM (dashed lines). (c)~Probability of successful recovery for mode shapes with ANM when $N=10$ is fixed.}
\label{RandTemp}
\end{figure*}

\subsection{Random spatial compression}

Using the same parameters as in Section~\ref{simu_AsyVsSyn}, we simulate the random spatial compression strategy with ANM based modal analysis and compare to the SVD algorithm studied in \cite{park2014modal}, which we apply to a full $M \times N$ data matrix. We generate mode shapes randomly, testing both orthogonal mode shapes and correlated mode shapes. It can be seen from Fig.~\ref{RandSpa}(a), (b) that the SVD based method performs poorly when the mode shapes are correlated. However, the proposed ANM based algorithm performs well in both cases. Note that although SVD based method performs very well when the mode shapes are orthogonal, this is using $M \cdot N = 80 \cdot 10 = 800$ samples while the ANM based algorithm uses only $M=80$ spatially compressed measurements to recover both the mode shapes and the frequencies.


\begin{figure}[t]
\begin{minipage}{0.49\linewidth}
\centering
\includegraphics[width=2.4in]{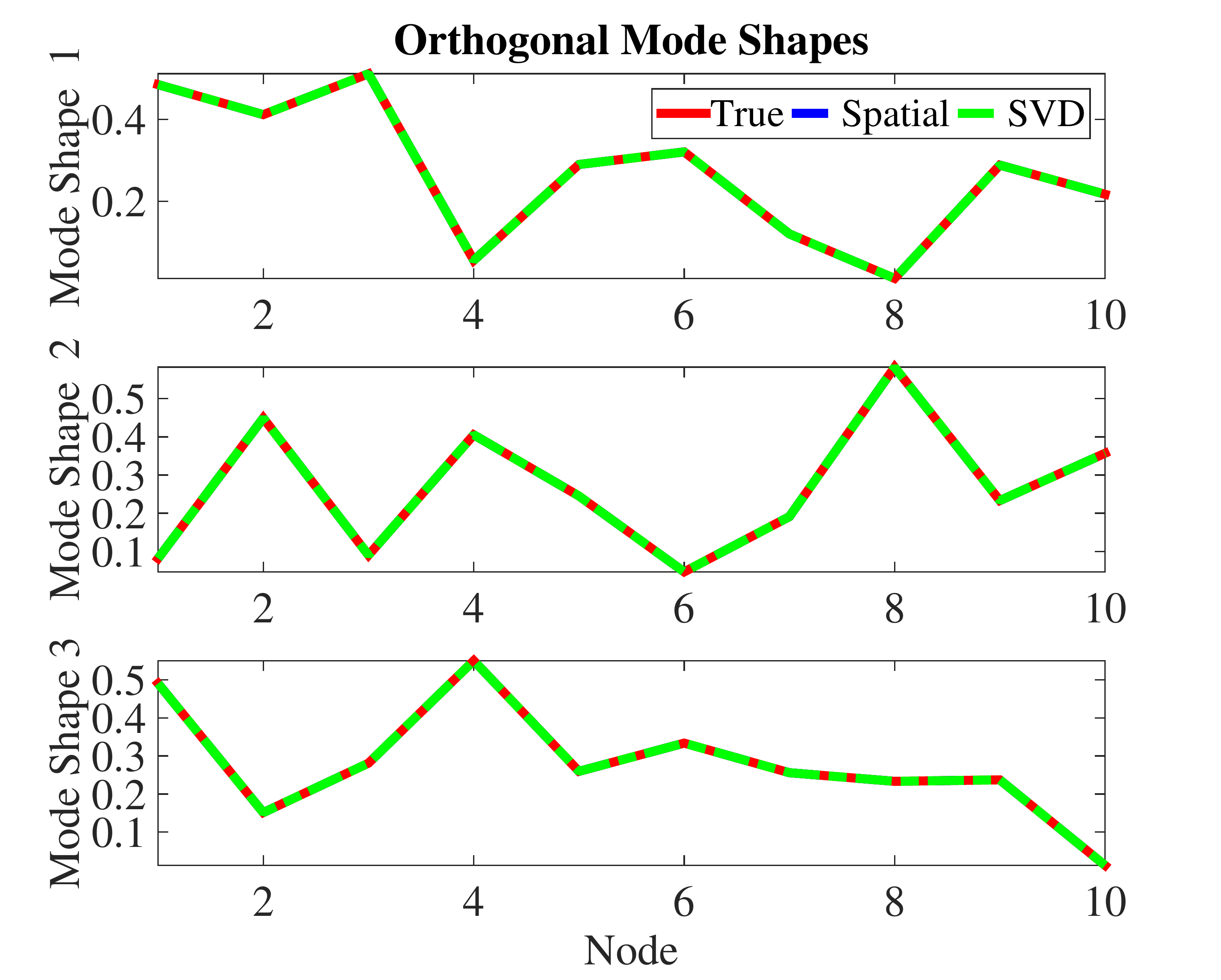}
\centerline{\footnotesize{(a)}}
\end{minipage}
\hfill
\begin{minipage}{0.49\linewidth}
\centering
\includegraphics[width=2.4in]{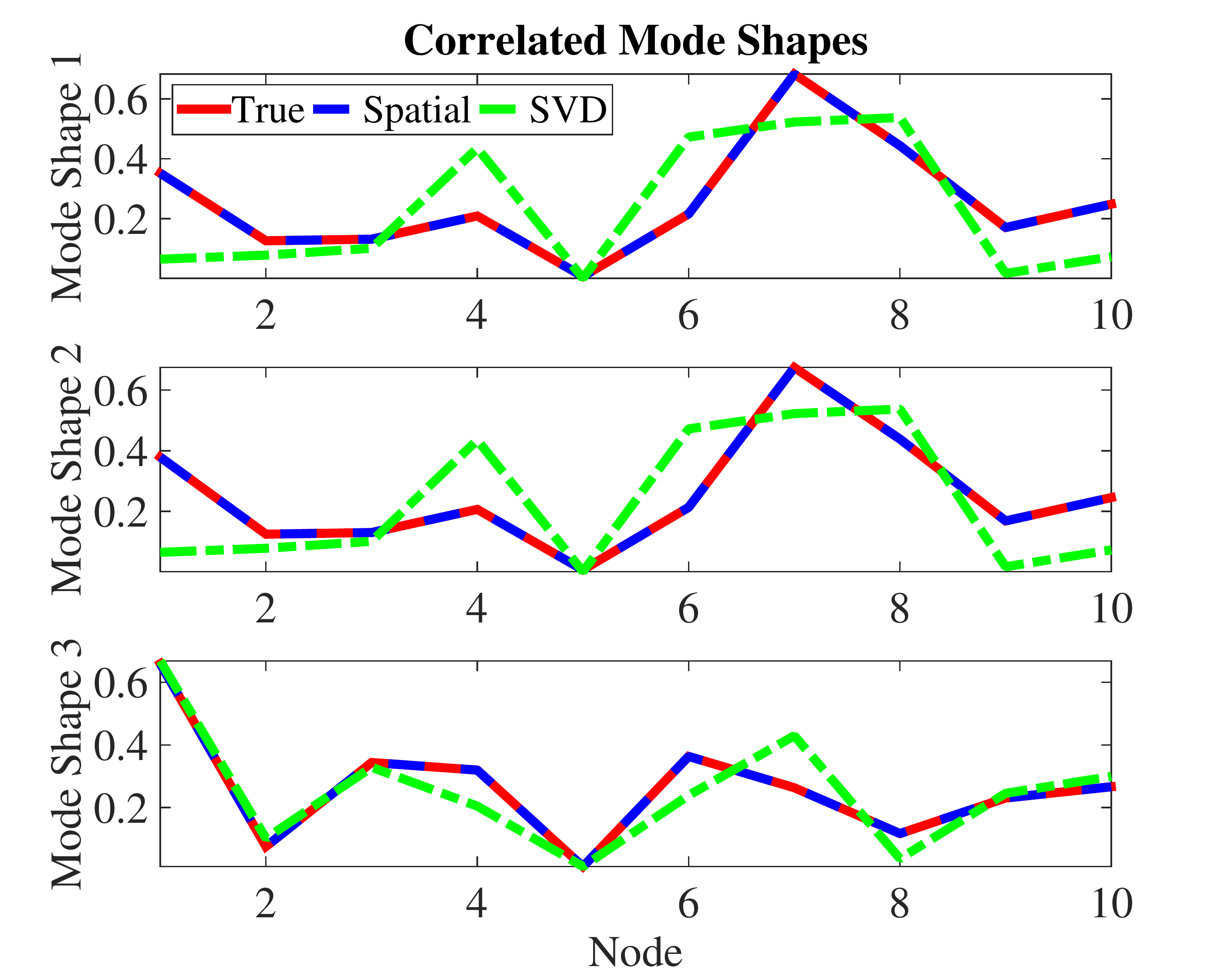}
\centerline{\footnotesize{(b)}}
\end{minipage}
\caption{Random spatial compression. (a) Estimated mode shapes by ANM (on compressed data) and SVD (on uncompressed data) with orthogonal mode shapes. (b) Estimated mode shapes with correlated mode shapes.}
\label{RandSpa}
\end{figure}

\subsection{Noisy data}
\label{noda}

Finally, we simulate the atomic norm denoising problem presented in Section~\ref{MMVAND}. We assume the entries of $\W$ are i.i.d.\ random variables from the distribution ${\cal{CN}}(0,\sigma^2)$. We set the signal-to-noise ratio (SNR) between $15$ and $50$ dB, which corresponds to values of $\sigma$ between $0.1754$ and $0.0031$. We perform 50 trials for each value of $\sigma$, with random mode shapes in each trial. The regularization parameter is set to $\lambda=\sigma\sqrt{4MN\log(M)}$ with $M=80$ and $N=10$. Other parameters are set the same as in Section~\ref{simu_AsyVsSyn}. We define $\|\Xh-\X^\star\|_F^2$ and $\frac{\|\Xh-\X^\star\|_F}{\|\X^\star\|_F}$ as the mean square error (MSE) and relative error, respectively.
It can be seen in Fig.~\ref{UniformNoise}(a) that the MSE is linearly correlated with $\sigma^2$, which is consistent with the theory in Section~\ref{MMVAND}. We also present Fig.~\ref{UniformNoise}(b) to illustrate how the relative error behaves with different noise levels.


 \begin{figure}[t]
\begin{minipage}{0.49\linewidth}
\centering
\includegraphics[width=2.4in]{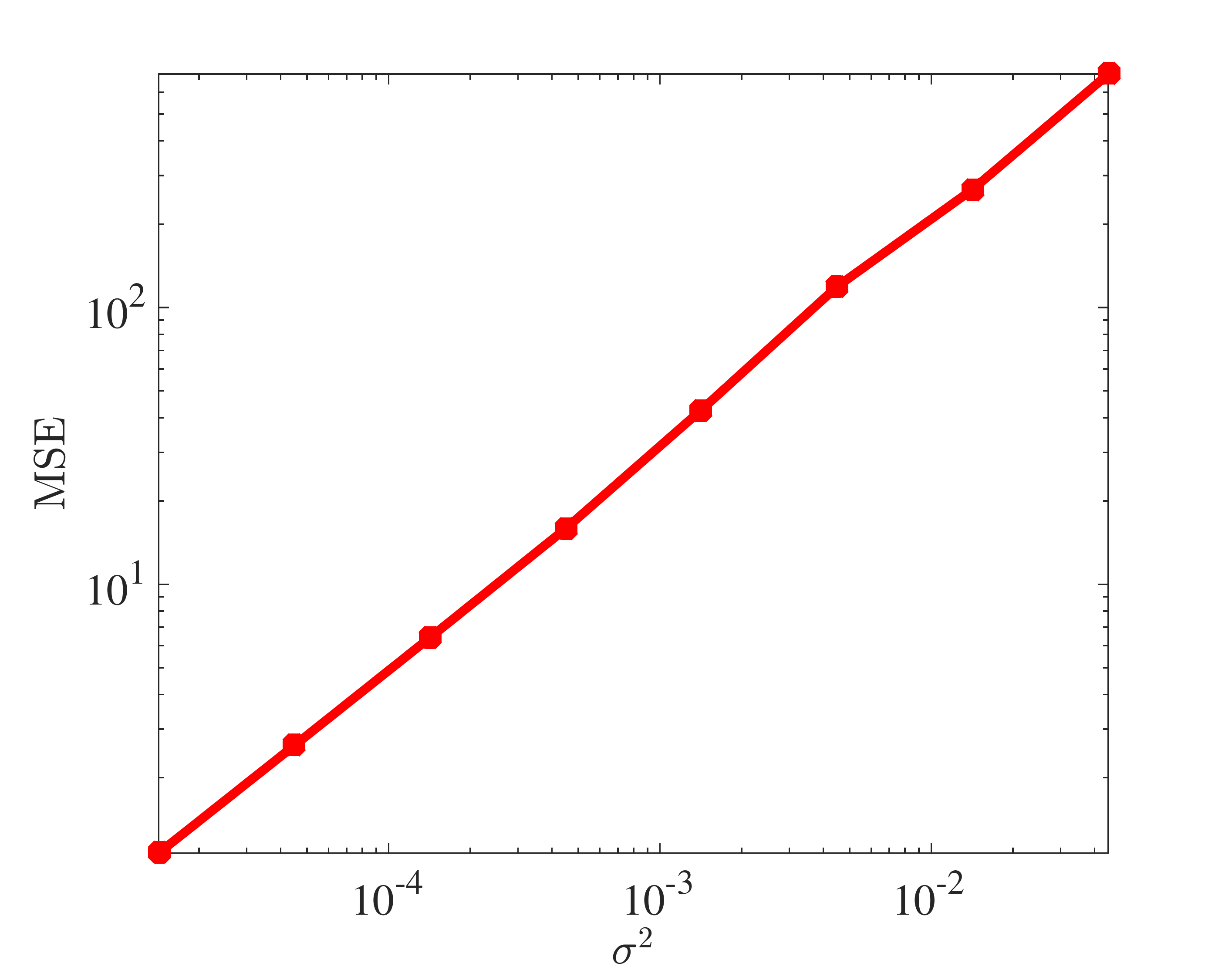}
\centerline{\footnotesize{(a)}}
\end{minipage}
\hfill
\begin{minipage}{0.49\linewidth}
\centering
\includegraphics[width=2.4in]{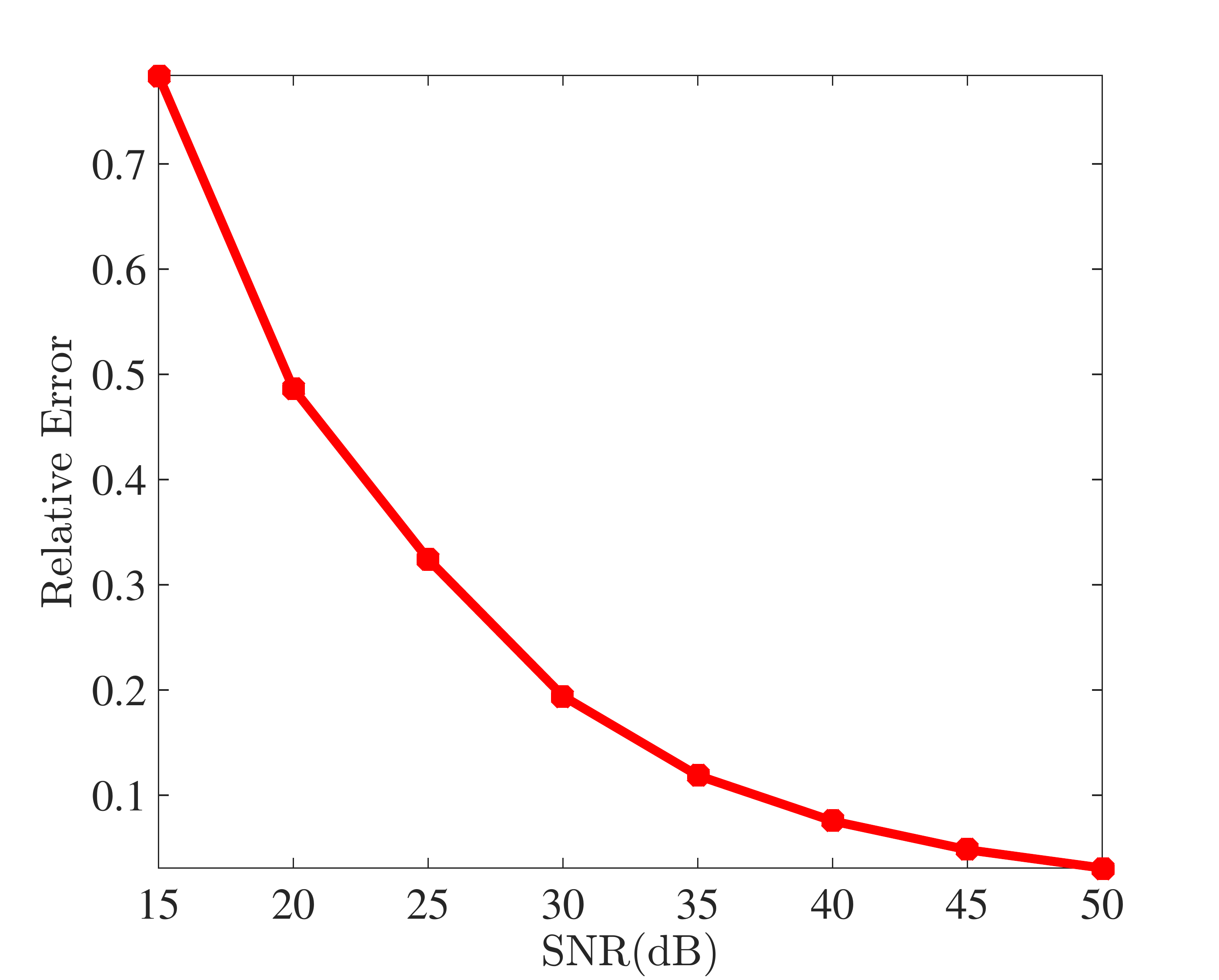}
\centerline{\footnotesize{(b)}}
\end{minipage}
\caption{Atomic norm denoising. (a) Recovered MSE with respect to different noise levels. (b) Relative error with respect to different SNR levels.}
\label{UniformNoise}
\end{figure}

\section{Proofs}
\label{sec:mainproofs}

In this section, we will prove Theorem~\ref{THMCS}, Theorem~\ref{ANDB}, and Corollary~\ref{EAND}. Some proof techniques are inspired by the work in \cite{Chand12, heckel2016generalized, bhaskar2013atomic, tang2015near}.

\subsection{Convex analysis}
\label{ca}

We first review some basic concepts from convex analysis \cite{boyd2004convex}. A set $\Lambda$ is a cone if
\begin{align*}
\theta \x \in \Lambda~ \change{for~any} \x\in \Lambda,
\end{align*}
where $\theta$ is a nonnegative constant. $\Lambda$ is a convex cone if
\begin{align*}
\theta_1 \x_1+\theta_2 \x_2 \in \Lambda
\end{align*}
holds for any $\x_1,~\x_2\in \Lambda$ and $\theta_1,~\theta_2\geq 0$.
The polar cone of a cone $\Lambda$ is
\begin{align*}
\Lambda^\circ \triangleq \{ \z: \langle \x,\z \rangle\leq0 ~\forall~\x\in \Lambda \}.
\end{align*}

The tangent cone and normal cone at $\X$ with respect to the scaled unit ball $\|\X\|_{\AA}\change{conv}(\AA)$ are defined as
\begin{align*}
\TT_{\AA}(\X)\triangleq\change{cone}\{\D:\|\X+\D\|_{\AA}\leq \|\X\|_{\AA}\},
\end{align*}
and
\begin{align*}
\NN_{\AA}(\X)\triangleq\{\S: \langle \S,\D \rangle \leq 0, \forall~ \D ~\change{s.t.} ~\|\X+\D\|_{\AA}\leq \|\X\|_{\AA}\},
\end{align*}
respectively. Note that the tangent cone $\TT_{\AA}(\X)$ is
the set of descent directions of the atomic norm $\| \cdot \|_{\AA}$ at $\X$.
The normal cone $\NN_{\AA}(\X)$ is the polar cone of the tangent cone $\TT_{\AA}(\X)$ and vice-versa.

Let $\Omega$ be a subset of $\SSS^{MN-1}$, where $\SSS^{MN-1}\triangleq\{\Z\in \CCC^{M\times N}:\|\Z\|_F=1 \}$ denotes the unit sphere. Then, the Gaussian width of $\Omega$ is
\begin{align}
\omega(\Omega) \triangleq \EEE_{\H}\left[ \max_{\Z\in\Omega}  \langle \H,\Z  \rangle \right], \label{GWid}
\end{align}
where $\H$ is a Gaussian matrix with i.i.d.\ entries from the distribution $\NN(0,1)$.

\subsection{Proof of Theorem \ref{THMCS}}
\label{proofTHMCS}

We start by showing that the number of measurements $M'$ needed for perfect recovery can be lower bounded with a Gaussian width in Section~\ref{MpG}. Then, we upper bound the Gaussian width with the expectation of the recovery error obtained from the atomic norm denoising problem~\eqref{AND} in Section~\ref{BGW}.

\subsubsection{Bounding $M'$ with a Gaussian width}
\label{MpG}

It can be shown that the ANM problem in (\ref{ANMOri}) is equivalent to the following optimization problem
\begin{align*}
&\min_{\D} \|\X^{\star}+\D\|_{\AA}\\
&~~\text{s.t.}~\bPhi_n\d_n=\zero,~~n=1,\ldots,N,\\
&~~~~~~~~~~~~\D = [\d_1,\d_2,\cdots,\d_N].
\end{align*}
In particular, the above equivalent optimization problem can be obtained by eliminating the equality constraints in (\ref{ANMOri}). Let $\X=\X^\star + \D$ with $\bPhi_n \d_n=\zero,~n=1,\ldots,N$. It can be seen that the two optimization problems are equivalent. 

Let $\bPhi=[\bPhi_1,\bPhi_2,\cdots,\bPhi_N]$. With a slight abuse of notation, we define
\begin{align*}
\change{null}(\bPhi)\triangleq\{\D\in\CCC^{M\times N}: ~ \bPhi_n\d_n=\zero,~~n=1,\ldots,N\}
\end{align*}
as a set of matrices with columns belonging to the null space of the corresponding sensing matrix. We also define a block diagonal matrix $\change{diag}(\Z)$ corresponding to a matrix $\Z=[\z_1,\z_2,\cdots,\z_N]\in \CCC^{M\times N}$ as
\begin{align}
\change{diag}(\Z)\triangleq\MAT{cccc} \z_1 & & &\\ & \z_2 & & \\ & & \ddots & \\ & & & \z_N \mat\in \CCC^{MN\times N}. \label{diag}
\end{align}
Inspired by \cite{Chand12}, we have the following proposition which gives us an optimal condition for exact recovery.
\begin{Proposition}
\label{TN}
$\Xh=\X^{\star}$ is the unique optimal solution to the ANM (\ref{ANMOri}) if and only if
\begin{align*}
\TT_{\AA}(\X^{\star}) \cap \change{null}(\bPhi)=\{\zero\}.
\end{align*}
\end{Proposition}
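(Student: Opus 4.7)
The plan is to use the standard convex-analytic reformulation: the ANM in (\ref{ANMOri}) is equivalent to minimizing $\|\X^\star+\D\|_{\AA}$ over $\D\in\change{null}(\bPhi)$. Indeed, any feasible $\X$ can be written as $\X^\star+\D$ where $\bPhi_n \d_n = \bPhi_n (\x_n - \x_n^\star) = \y_n - \y_n = \zero$ for all $n$, so the feasible perturbations are exactly $\change{null}(\bPhi)$. Uniqueness of $\X^\star$ is then equivalent to $\D=\zero$ being the unique minimizer of $\|\X^\star+\D\|_{\AA}$ over this linear subspace. This is exactly the setup in which a tangent-cone / null-space characterization (as used in \cite{Chand12}) applies.

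For the easy direction, I would assume $\TT_{\AA}(\X^\star)\cap\change{null}(\bPhi)\neq\{\zero\}$ and derive non-uniqueness. Take any nonzero $\D_0$ in this intersection. By the definition of the tangent cone given in Section~\ref{ca}, $\D_0 = \theta\D'$ for some $\theta\geq 0$ and some $\D'$ with $\|\X^\star+\D'\|_{\AA}\leq\|\X^\star\|_{\AA}$. Since $\D_0\neq\zero$ we must have $\theta>0$ and $\D'\neq\zero$; and since $\change{null}(\bPhi)$ is a linear subspace, $\D' = \D_0/\theta \in \change{null}(\bPhi)$. Hence $\X^\star+\D'$ is a feasible point of~\eqref{ANMOri} that is distinct from $\X^\star$ but has atomic norm at most $\|\X^\star\|_{\AA}$. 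Either this norm is strictly smaller (contradicting optimality of $\X^\star$) or it is equal (contradicting uniqueness).

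For the converse, I would suppose $\X^\star$ is not the unique optimum and exhibit a nonzero element in the intersection. If some other optimum $\X=\X^\star+\D$ exists with $\D\neq\zero$, then feasibility forces $\D\in\change{null}(\bPhi)$, and optimality forces $\|\X^\star+\D\|_{\AA}\leq\|\X^\star\|_{\AA}$, so (taking the cone coefficient to be~$1$) $\D\in\TT_{\AA}(\X^\star)$ as well. Therefore $\D\in\TT_{\AA}(\X^\star)\cap\change{null}(\bPhi)\setminus\{\zero\}$.

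I do not foresee a serious obstacle: the entire argument is purely convex-analytic and mirrors the standard null-space condition from \cite{Chand12}. The only mild care needed is in unpacking the cone/scaling in the definition of $\TT_{\AA}(\X^\star)$ to ensure that a nonzero element of the intersection genuinely yields a nonzero feasible perturbation $\D'$ along which the atomic norm does not increase—this is what rules out the degenerate case $\D'=\zero$ in the forward direction.
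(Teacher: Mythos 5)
Your proof is correct and takes essentially the same route as the paper's: identify the feasible set of~\eqref{ANMOri} with $\X^\star+\change{null}(\bPhi)$ and then argue both directions of the tangent-cone characterization (you simply phrase each direction contrapositively). If anything, your explicit unpacking of the conic scaling $\D_0=\theta\D'$ together with the subspace property of $\change{null}(\bPhi)$ is slightly more careful than the paper's terse two-line argument, which relies on the same step implicitly.
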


\begin{proof}
On one hand, if $\X^{\star}$ is the unique optimal solution, then $\|\X^{\star}+\D\|_{\AA}> \|\X^{\star}\|_{\AA}$ holds for all $\D\in \change{null}(\bPhi)/\{\zero\}$. It follows that $\D \notin \TT_{\AA}(\X^{\star})$. Thus, we can get $\TT_{\AA}(\X^{\star}) \cap \change{null}(\bPhi)=\{\zero\}.$

On the other hand, if $\TT_{\AA}(\X^{\star}) \cap \change{null}(\bPhi)=\{\zero\}$, then, for all $\D\in \change{null}(\bPhi)/\{\zero\}$, i.e., $\bPhi_n\d_n=\zero,~~n=1,\ldots,N,$ we have $\|\X^{\star}+\D\|_{\AA}> \|\X^{\star}\|_{\AA}$. Thus, $\X^{\star}$ is the unique optimal solution.
\end{proof}

Define $\Omega=\TT_{\AA}(\X^{\star}) \cap \SSS^{MN-1}$, where $\SSS^{MN-1}=\{\Z\in \CCC^{M\times N}:\|\Z\|_F=1 \}$. It can be seen that $\Omega$ is a subset of the unit sphere. According to Proposition \ref{TN}, to show that $\X^\star$ is the unique optimal solution to (\ref{ANMOri}), we hope to demonstrate that $\Z \notin \change{null}(\bPhi)$ holds for any $\Z=[\z_1,\z_2,\cdots,\z_N]\in \Omega$, i.e.,
\begin{align*}
\sum_{n=1}^N \|\bPhi_n\z_n\|_2>0,
\end{align*}
which will hold if
\begin{align*}
\sum_{n=1}^N \|\bPhi_n\z_n\|_2^2=\|[\bPhi_1\z_1~\bPhi_2\z_2~\cdots~\bPhi_N\z_N]\|_F^2=\|\bPhi \Zh\|_F^2>0.
\end{align*}
Therefore, we need to show
\begin{align*}
\|\bPhi \Zh\|_F>0,
\end{align*}
where $\bPhi=[\bPhi_1, \bPhi_2,\cdots, \bPhi_N]\in \RRR^{M'\times MN}$, and $\Zh= \change{diag}(\Z)\in \CCC^{MN\times N}$ as is defined in (\ref{diag}). Note that we have $\|\Zh\|_F=\|\Z\|_F=1$ since $\Z$ is in a subset of the unit sphere.

Next, we will show that $\|\bPhi\Zh\|_F>0$ holds for all $\Z\in \Omega=\TT_{\AA}(\X^{\star}) \cap \SSS^{MN-1}$ with high probability. It then follows that $\X^\star$ is the unique optimal solution of ANM (\ref{ANMOri}).

\begin{Theorem}
\label{EPhiZ}
Let $\bPhi=[\bPhi_1, \bPhi_2,\cdots, \bPhi_N]\in \RRR^{M'\times MN}$ be a random matrix with i.i.d  Gaussian entries which satisfy $\NN(0,1)$. Let $\Omega$ be a subset of $\SSS^{MN-1}$. For all $\Z=[\z_1,\z_2,\cdots,\z_N]\in \Omega$, define $\Zh= \change{diag}(\Z)$ as (\ref{diag}). Then, we have
\begin{align*}
\EEE \left[ \min_{\Z \in \Omega} \|\bPhi \Zh\|_F \right] \geq \lambda_{M'N}-\omega(\Omega),
\end{align*}
where $\lambda_{M'N}=\EEE_{\G\sim \NN(\zero,\I)}\|\G\|_F\geq \frac{\sqrt{N}M'}{\sqrt{M'+1}}$ and $\omega(\Omega)$ is the Gaussian width of $\Omega$ defined in (\ref{GWid}).
\end{Theorem}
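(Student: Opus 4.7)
The plan is to reduce the claim to Gordon's min-max comparison inequality, generalizing the escape-through-a-mesh argument used in the SMV ($N=1$) treatment of Heckel--Soltanolkotabi. I would first express $\|\bPhi\Zh\|_F$ variationally as the supremum of a Gaussian process, and then introduce a simpler, decoupled comparison Gaussian process whose expected min-max value is exactly $\lambda_{M'N}-\omega(\Omega)$.

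Concretely, writing $\|\bPhi\Zh\|_F=\sup_{\V:\|\V\|_F=1}\langle\bPhi\Zh,\V\rangle=\sup_{\V}\sum_n\bm{v}_n^{\top}\bPhi_n\z_n$ with $\V=[\bm{v}_1,\ldots,\bm{v}_N]$ turns the left-hand side into
\[
\min_{\Z\in\Omega}\sup_{\V:\|\V\|_F=1} Y_{\Z,\V},\qquad Y_{\Z,\V}\triangleq\sum_{n=1}^N\bm{v}_n^{\top}\bPhi_n\z_n,
\]
where $Y$ is a centered Gaussian process whose covariance is $\EEE[Y_{\Z,\V}Y_{\Z',\V'}]=\sum_n(\bm{v}_n^{\top}\bm{v}'_n)(\z_n^{\top}\z'_n)$, using independence of the $\bPhi_n$'s. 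For the comparison process I would take
\[
X_{\Z,\V} \;=\; \langle\G,\V\rangle + \langle\H,\Z\rangle,
\]
with independent standard Gaussians $\G$ and $\H$ sized to match $\V$ and $\Z$. Since $X$ separates over $\V$ and $\Z$, and since $-\H$ has the same distribution as $\H$,
\[
\EEE\min_{\Z\in\Omega}\sup_{\V} X_{\Z,\V} = \EEE\sup_{\V}\langle\G,\V\rangle + \EEE\inf_{\Z\in\Omega}\langle\H,\Z\rangle = \EEE\|\G\|_F - \omega(\Omega) = \lambda_{M'N}-\omega(\Omega).
\]

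Gordon's min-max comparison then yields the desired inequality $\EEE\min_{\Z}\sup_{\V} Y_{\Z,\V}\geq \EEE\min_{\Z}\sup_{\V} X_{\Z,\V}$ once the two covariance conditions
\[
\EEE Y_{\Z,\V}Y_{\Z,\V'}\geq\EEE X_{\Z,\V}X_{\Z,\V'}\ \text{(fixed }\Z\text{)},\qquad \EEE Y_{\Z,\V}Y_{\Z',\V'}\leq\EEE X_{\Z,\V}X_{\Z',\V'}\ (\Z\neq\Z')
\]
are verified. The explicit lower bound on $\lambda_{M'N}$ is a Jensen computation: writing $\|\G\|_F=\|(\|\bm{g}_1\|_2,\ldots,\|\bm{g}_N\|_2)\|_2$ for iid columns $\bm{g}_n\sim\NN(\zero,I_{M'})$ and invoking convexity of the Euclidean norm gives $\EEE\|\G\|_F\geq\sqrt{N}\,\EEE\|\bm{g}_1\|_2\geq \sqrt{N}M'/\sqrt{M'+1}$.

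The hard part is verifying the Gordon covariance conditions. The covariance of $Y$ is the bilinear ``product'' $\sum_n(\bm{v}_n^{\top}\bm{v}'_n)(\z_n^{\top}\z'_n)$, which does not naturally decouple into separate $\V$- and $\Z$-summands the way the covariance of $X$ does. Lining the two sides up relies on $\|\z_n\|\leq\|\Z\|_F=1$ for $\Z\in\Omega\subset\SSS^{MN-1}$ together with Cauchy--Schwarz $|\z_n^{\top}\z'_n|\leq\|\z_n\|\|\z'_n\|$, and may call for a slightly modified comparison process---for example replacing $\langle\G,\V\rangle$ by $\sum_n\|\z_n\|\bm{g}_n^{\top}\bm{v}_n$, in the spirit of the SMV treatment in \cite{heckel2016generalized}---so that the within-block variances of $X$ and $Y$ match before Gordon is applied.
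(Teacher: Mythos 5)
You follow the same overall route as the paper: write $\|\bPhi\Zh\|_F=\sup_{\|\Zt\|_F=1}\langle\bPhi\Zh,\Zt\rangle$, compare the actual process $\Wt_{\Z,\Zt}=\langle\bPhi\Zh,\Zt\rangle$ with the decoupled process $W_{\Z,\Zt}=\langle\G,\Zt\rangle+\langle\H,\Z\rangle$ via Gordon's min--max theorem, evaluate the decoupled side as $\EEE\|\G\|_F-\omega(\Omega)=\lambda_{M'N}-\omega(\Omega)$, and lower bound $\lambda_{M'N}$ (your route via $\|x\|_2\ge\|x\|_1/\sqrt{N}$ and $\EEE\|\g_1\|_2\ge M'/\sqrt{M'+1}$ is fine, and arguably cleaner than the paper's). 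The gap is that you stop exactly at the step that constitutes the proof. The two conditions you propose to verify are the equal-variance covariance form of Gordon's comparison, and in this setting they are not merely unverified but false: taking $\Z=\Z'$ and $\V=\V'$, your actual process has $\EEE Y_{\Z,\V}^2=\sum_n\|\z_n\|_2^2\|\bm{v}_n\|_2^2\le 1$ while your comparison process has $\EEE X_{\Z,\V}^2=\|\V\|_F^2+\|\Z\|_F^2=2$, so the fixed-$\Z$ condition $\EEE Y_{\Z,\V}Y_{\Z,\V'}\ge\EEE X_{\Z,\V}X_{\Z,\V'}$ cannot hold. The paper works instead with squared increments and carries out the block computation explicitly: it expands $\EEE|W_{\Z,\Zt}-W_{\Z',\Zt'}|^2-\EEE|\Wt_{\Z,\Zt}-\Wt_{\Z',\Zt'}|^2$, bounds the cross terms $2\,\change{Re}(\langle\z_n,\z'_n\rangle\langle\zt'_n,\zt_n\rangle)$ by $2\|\z_n\|_2\|\zt_n\|_2\|\z'_n\|_2\|\zt'_n\|_2$ so that the diagonal sum over $n$ can be completed to a full double sum over $(n,m)$ of nonnegative terms, which then factorizes (using $\|\Z\|_F=\|\Zt\|_F=1$) into $2\,\change{Re}\bigl((1-\langle\Zt',\Zt\rangle)(1-\langle\Z,\Z'\rangle)\bigr)\ge0$, and only then invokes \cite{gordon1988milman}. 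That computation is the substance of Theorem~\ref{EPhiZ} and is entirely absent from your proposal.

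Your fallback of rescaling the comparison term to $\sum_n\|\z_n\|_2\,\g_n^{\top}\bm{v}_n$ also would not recover the stated bound. In the SMV case of \cite{heckel2016generalized} that rescaling is harmless because $\|\z\|_2=1$; here the blocks satisfy only $\sum_n\|\z_n\|_2^2=1$, so $\sup_{\|\V\|_F=1}\sum_n\|\z_n\|_2\,\g_n^{\top}\bm{v}_n=\bigl(\sum_n\|\z_n\|_2^2\|\g_n\|_2^2\bigr)^{1/2}$, which depends on $\Z$ and has expectation on the order of $\sqrt{M'}$ rather than $\lambda_{M'N}\approx\sqrt{M'N}$; the clean separation $\EEE\|\G\|_F-\omega(\Omega)$ that produces the theorem's right-hand side is then lost. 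So while your proposal correctly identifies the crux (the block structure prevents the naive covariance matching), it neither completes the increment comparison that the paper performs with the plain decoupled process nor offers a substitute argument that yields $\lambda_{M'N}-\omega(\Omega)$; as written it is a plan whose missing step is the whole argument.
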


\begin{Remark}
The above theorem is based on Gordon's work \cite{gordon1988milman}, and provides us a lower bound for the minimum gain of the operator $\bPhi$ restricted to a set $\Omega$. The proof of this theorem is given in Appendix \ref{ProofEPhiZ}.
\end{Remark}


\begin{Corollary}
\label{MpGauss}
Let $\bPhi=[\bPhi_1,\bPhi_2,\cdots,\bPhi_N]\in \RRR^{M'\times MN}$ be a random matrix with i.i.d. Gaussian entries which satisfy $\NN(0,1)$. Define $\Omega \triangleq \TT_{\AA}(\X^{\star}) \cap \SSS^{MN-1}$ as a subset of the unit sphere. Then, $\X^{\star}$ is the unique optimal solution of ANM (\ref{ANMOri}) with probability at least $1-e^{-\frac{N(M'-2)}{8}}$ if
\begin{align}
M' \geq \frac{4}{N}\omega^2(\Omega). \label{MpO}
\end{align}
\end{Corollary}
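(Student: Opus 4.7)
The plan is to combine Theorem~\ref{EPhiZ} with Gaussian concentration of measure in order to promote the expectation bound into a high-probability statement. By Proposition~\ref{TN} together with the reformulation immediately preceding Theorem~\ref{EPhiZ}, it suffices to prove that
\begin{equation*}
f(\bPhi) \triangleq \min_{\Z \in \Omega}\|\bPhi \Zh\|_F > 0
\end{equation*}
holds with probability at least $1 - e^{-N(M'-2)/8}$, where $\Omega = \TT_{\AA}(\X^\star) \cap \SSS^{MN-1}$.

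First, I would verify that $\bPhi \mapsto f(\bPhi)$ is $1$-Lipschitz on $\RRR^{M' \times MN}$ endowed with the Frobenius norm. For any $\Z \in \Omega$, the constraint $\|\Z\|_F = 1$ forces $\sum_n \|\z_n\|_2^2 = 1$, so $\|\z_n\|_2^2 \leq 1$ for each $n$; the block-diagonal structure of $\Zh$ then yields
\begin{equation*}
\|(\bPhi - \bPhi')\Zh\|_F^2 = \sum_n \|(\bPhi_n - \bPhi'_n)\z_n\|_2^2 \leq \sum_n \|\bPhi_n - \bPhi'_n\|_F^2 \|\z_n\|_2^2 \leq \|\bPhi - \bPhi'\|_F^2,
\end{equation*}
and a standard triangle-inequality argument applied at the minimizer gives $|f(\bPhi) - f(\bPhi')| \leq \|\bPhi - \bPhi'\|_F$. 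Since the entries of $\bPhi$ are i.i.d.\ standard Gaussian, the Borell--TIS inequality gives $\PPP(f(\bPhi) \leq \EEE f(\bPhi) - t) \leq e^{-t^2/2}$, and choosing $t = \EEE f(\bPhi)$ yields $\PPP(f(\bPhi) \leq 0) \leq \exp(-(\EEE f(\bPhi))^2 / 2)$.

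The remaining task is to show $(\EEE f(\bPhi))^2 \geq N(M' - 2)/4$ under the hypothesis $M' \geq 4\omega^2(\Omega)/N$. Theorem~\ref{EPhiZ} provides $\EEE f(\bPhi) \geq \lambda_{M'N} - \omega(\Omega)$, and the elementary identity $M'^2 \geq (M'+1)(M'-1)$ sharpens the stated bound to $\lambda_{M'N} \geq \sqrt{N(M'-1)}$. The hypothesis yields $\omega(\Omega) \leq \sqrt{NM'}/2$, so $\EEE f(\bPhi) \geq \sqrt{N}(\sqrt{M'-1} - \sqrt{M'}/2)$. The only mildly delicate step, and I think the main place one has to be careful, is the algebraic inequality $(\sqrt{M'-1} - \sqrt{M'}/2)^2 \geq (M'-2)/4$; squaring and simplifying reduces it to $(M' - 1/2)^2 \geq M'(M'-1)$, i.e., $1/4 \geq 0$. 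Plugging this back in yields the stated failure probability. I anticipate no serious obstacle beyond the Lipschitz constant bookkeeping and this final algebraic check that matches the implicit constants to the exponent $N(M'-2)/8$.
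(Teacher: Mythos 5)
Your proposal is correct and follows essentially the same route as the paper's proof: reduce to showing $\min_{\Z\in\Omega}\|\bPhi\Zh\|_F>0$ via Proposition~\ref{TN}, establish that this minimum is a $1$-Lipschitz function of $\bPhi$, apply Gaussian concentration, and invoke Theorem~\ref{EPhiZ} to lower bound the expectation. The only differences are cosmetic bookkeeping (you take $t=\EEE f(\bPhi)$ and simplify $\lambda_{M'N}\geq\sqrt{N(M'-1)}$, whereas the paper takes $t=\tfrac{\sqrt{N(M'-2)}}{2}$; your direct block-wise Lipschitz bound also neatly avoids the paper's gradient argument), and the case $M'\leq 2$, which you do not treat, is vacuous since the stated probability bound is then nonpositive.
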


\begin{Remark}
Corollary \ref{MpGauss} is an immediate consequence of Theorem $\ref{EPhiZ}$. It can be seen that the number of measurements $M'$ needed for exact recovery can be lower bounded by a Gaussian width. The proof details (presented in Appendix \ref{ProofMpGauss}) are based on a Gaussian concentration inequality \cite{pisier1986probabilistic}.
\end{Remark}

\subsubsection{Bounding the Gaussian width}
\label{BGW}

This section is dedicated to finding an upper bound on $\omega(\TT_{\AA}(\X^{\star})\cap \SSS^{MN-1})$. First, we define the mean-square distance of a set $\CC$ as
\begin{align*}
\change{dist}(\CC)=\EEE_{\G\sim\NN(\zero,\I)}\left[ \min_{\D\in \CC} \|\G-\D\|_F^2 \right].
\end{align*}

Inspired by the work in \cite{Chand12} and \cite{heckel2016generalized}, we have
\begin{align*}
&\omega^2(\TT_{\AA}(\X^{\star})\cap \SSS^{MN-1})\\
\leq &\!\left[ \EEE_{\G}\!\!\left[ \min_{\D \in \TT_{\AA}^{\circ}(\X^{\star})}\! \|\G-\D\|_F \right]\! \right]^2\!\!
\!\leq \EEE_{\G}\!\!\left[ \min_{\D \in \TT_{\AA}^{\circ}(\X^{\star})}\! \|\G-\D\|_F \!\right]^2\\
=&\EEE_{\G}\!\!\left[ \min_{\D \in \TT_{\AA}^{\circ}(\X^{\star})}\! \|\G-\D\|_F^2 \right]
\!\!=\!\change{dist}(\TT_{\AA}^{\circ}(\X^{\star})\!)\!
\\
=&\change{dist}(\change{cone}(\partial \|\X^{\star}\|_{\AA}))
\leq \change{dist}(\lambda \partial \|\X^{\star}\|_{\AA})\\
=&\max_{\sigma>0} \frac{\EEE_{\G}[ \| \Xh(\X^{\star}+\sigma \G,\sigma\lambda)-\X^{\star} \|_F^2 ]}{\sigma^2},
\end{align*}
where the first two inequalities follow from Proposition 3.6 in \cite{Chand12} and Jensen's inequality, respectively. 
The third equality and last inequality can be found in equation (67) in \cite{heckel2016generalized} while the last equality comes from Theorem 1.1 in \cite{oymak2016sharp}.
Note that $\TT_{\AA}^{\circ}$ is the polar cone of $\TT_{\AA}$, as  defined at the beginning of Section~\ref{ca}.
Here, $\Xh(\Y,\lambda)$ is the solution to the MMV atomic norm denoising problem (\ref{AND}) with $\lambda$ being the regularization parameter.
With the inequality given in (\ref{EG}), we have
\begin{align*}
\frac{\EEE_{\G}[ \| \Xh(\X^{\star}+\sigma \G,\sigma\lambda)-\X^{\star} \|_F^2 ]}{\sigma^2} \leq CKN\log(M),
\end{align*}
which implies
\begin{align}
\omega^2(\TT_{\AA}(\X^{\star})\cap \SSS^{MN-1})&\leq CKN\log(M). \label{OMG}
\end{align}
Thus, plugging (\ref{OMG}) into (\ref{MpO}), we can get (\ref{Mp}) and finish the proof of Theorem \ref{THMCS}. \qed

\subsection{Proof of Theorem \ref{ANDB}}
\label{ProofANDB}

In this section, we prove Theorem \ref{ANDB} by extending the results in \cite{bhaskar2013atomic} and  \cite{tang2015near}  to the MMV case. For the SMV case, it is shown in \cite{bhaskar2013atomic} that a good choice of the regularization parameter can achieve accelerated convergence rates. Inspired by their choice of the regularization parameter, we use
\begin{align*}
\lambda \approx \eta \EEE\|\W\|_{\AA}^*
\end{align*}
in the MMV atomic norm denoising problem (\ref{AND}). Here, $\eta\in (1,\infty)$ is some constant which ultimately must be set large enough to enable the proof of Lemma~\ref{FI2}, and $\W$ is a complex Gaussian matrix. To set $\lambda$, we need to find an upper bound for $\EEE\|\W\|_{\AA}^*$.

\subsubsection{Bounding $\EEE\|\W\|_{\AA}^*$}

\begin{Lemma}
Let $\W\in\CCC^{M\times N}$ be a random matrix with i.i.d.\ complex Gaussian entries from the distribution $\CC\NN(0,\sigma^2)$. Then, there exists a numerical constant $C$ such that
\begin{align*}
\EEE\|\W\|_{\AA}^*\leq C\sigma \sqrt{MN\log(M)}.
\end{align*}
\end{Lemma}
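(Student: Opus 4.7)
By definition $\|\W\|_{\AA}^{*}=\sup_{f\in[0,1)}\|\W^{*}\a(f)\|_{2}$. For each fixed $f$, since $\|\a(f)\|_{2}^{2}=M$ and the columns of $\W$ are independent, the vector $\W^{*}\a(f)\in\CCC^{N}$ has independent $\CC\NN(0,\sigma^{2}M)$ entries. In particular $\EEE\,\|\W^{*}\a(f)\|_{2}\leq\sigma\sqrt{MN}$, and the map from the underlying Gaussian vector to $\|\W^{*}\a(f)\|_{2}$ is $\sigma\sqrt{M}$-Lipschitz, so standard Gaussian concentration gives
\begin{align*}
\Pr\!\left[\|\W^{*}\a(f)\|_{2}\geq\sigma\sqrt{MN}+\sigma\sqrt{M}\,u\right]\leq 2e^{-c u^{2}},\qquad u\geq 0,
\end{align*}
for an absolute constant $c>0$. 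The essence of the proof is to promote this pointwise bound to a uniform bound in $f$.

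\textbf{Step 1 (discretization and union bound).} I would fix a regular grid $\GG\subset[0,1)$ of size $L=M^{3}$ and apply a union bound to the tail above with $u=c'\sqrt{\log M}$. For $c'$ large enough this yields
\begin{align*}
\Pr\!\left[\max_{f\in\GG}\|\W^{*}\a(f)\|_{2}>\sigma\sqrt{MN}+C_{1}\sigma\sqrt{M\log M}\right]\leq \frac{1}{M^{2}}.
\end{align*}

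\textbf{Step 2 (control of the interpolation error).} Each coordinate of $\W^{*}\a(f)$ is a trigonometric polynomial of degree $\leq M-1$ in $f$, and a direct differentiation gives $\|\tfrac{d}{df}\a(f)\|_{2}\leq 2\pi M^{3/2}$. Hence
\begin{align*}
\bigl|\|\W^{*}\a(f)\|_{2}-\|\W^{*}\a(f_{0})\|_{2}\bigr|\leq\|\W\|_{\mathrm{op}}\cdot 2\pi M^{3/2}|f-f_{0}|.
\end{align*}
Using the standard bound $\|\W\|_{\mathrm{op}}\leq C_{2}\sigma(\sqrt{M}+\sqrt{N})$ with probability at least $1-e^{-c_{2}(M+N)}$, and taking $f_{0}\in\GG$ with $|f-f_{0}|\leq 1/L=1/M^{3}$, the interpolation error is at most $O\!\bigl(\sigma(\sqrt{M}+\sqrt{N})/\sqrt{M}\bigr)$, which is dominated by $\sigma\sqrt{MN\log M}$ for all $M,N\geq 1$.

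\textbf{Step 3 (from high probability to expectation).} Combining Steps 1--2, on an event $\mathcal{E}$ of probability at least $1-3/M^{2}$ we get $\|\W\|_{\AA}^{*}\leq C\sigma\sqrt{MN\log M}$. On the complement $\mathcal{E}^{c}$, the crude deterministic bound $\|\W\|_{\AA}^{*}\leq\sqrt{M}\,\|\W\|_{F}$ (from $\|\a(f)\|_{2}=\sqrt{M}$) together with Cauchy--Schwarz and $\EEE\|\W\|_{F}^{2}=MN\sigma^{2}$ gives
\begin{align*}
\EEE\bigl[\one_{\mathcal{E}^{c}}\|\W\|_{\AA}^{*}\bigr]\leq\sqrt{\Pr(\mathcal{E}^{c})}\cdot\sqrt{M\cdot\EEE\|\W\|_{F}^{2}}\leq\frac{1}{M}\cdot\sigma M\sqrt{N},
\end{align*}
which is negligible compared to $\sigma\sqrt{MN\log M}$. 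Adding the two contributions finishes the proof.

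\textbf{Main obstacle.} The only non-routine ingredient is Step 2: one must choose the grid size $L$ so that (i) the union-bound tail retains only a $\sqrt{\log M}$ factor, and (ii) the Lipschitz-type interpolation error remains smaller than the target $\sigma\sqrt{MN\log M}$ despite the factor $\|\W\|_{\mathrm{op}}=O(\sigma(\sqrt{M}+\sqrt{N}))$. Polynomial $L$ (e.g.\ $L=M^{3}$) suffices, but the bookkeeping that tracks how $M$, $N$, and $\sigma$ enter both the Gaussian tail and the operator-norm bound is where care is needed; a cleaner variant would use Bernstein's polynomial inequality to avoid the explicit $\|\W\|_{\mathrm{op}}$ factor, at the cost of slightly messier constants.
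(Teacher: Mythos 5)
Your proposal is correct, but it follows a genuinely different route from the paper. The paper works with the squared quantity $\|\W^*\a(f)\|_2^2$, which is a degree-$M$ trigonometric polynomial in $f$, and uses the mean value theorem together with Bernstein's polynomial inequality to obtain the \emph{deterministic} relation $\sup_f \|\W^*\a(f)\|_2^2 \leq (1+4\pi M/L)\max_{l}\|\W^*\a(l/L)\|_2^2$ for any $L \geq 4\pi M$; it then bounds the expected grid maximum directly, via the elementary bound $\EEE\max_{l\le L} |x_l|^2 \lesssim \log L$ for standard complex Gaussians, with the choice $L \asymp M\log M$. Everything stays at the level of expectations: there is no high-probability event, no operator-norm bound on $\W$, and no tail-integration step. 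Your argument instead uses pointwise Gaussian concentration of the unsquared norm, a union bound over a polynomially fine grid, a Lipschitz interpolation step that brings in $\|\W\|_{\mathrm{op}}$, and finally Cauchy--Schwarz on the bad event with the crude bound $\|\W\|_{\AA}^*\leq\sqrt{M}\|\W\|_F$ to pass back to expectation. Both routes are sound; the paper's is shorter and reuses the same Bernstein discretization later for the probability statement (Lemma~\ref{PW}), whereas yours is more generic, delivers the high-probability bound as a free byproduct, and in fact exposes the slightly sharper order $\sigma(\sqrt{MN}+\sqrt{M\log M})$ on the grid, at the cost of the extra operator-norm event and more bookkeeping. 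Two small remarks: with your grid $L=M^3$ the interpolation error is $O\bigl(\sigma(\sqrt{M}+\sqrt{N})M^{-3/2}\bigr)$, even smaller than the $O\bigl(\sigma(\sqrt{M}+\sqrt{N})/\sqrt{M}\bigr)$ you state (harmless, since both are dominated); and the failure probability $e^{-c_2(M+N)}$ of the spectral-norm event should be trimmed (e.g.\ by adding a $\sqrt{\log M}$ term to the deviation) or small $M$ handled by the crude bound, so that the bad-event contribution stays below $\sigma\sqrt{MN\log M}$ uniformly---routine constant bookkeeping that you already flagged.
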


\begin{proof}
According to the definition of the dual atomic norm in (\ref{dualnorm}), we have
\begin{align*}
(\|\W\|_{\AA}^*)^2&=\sup_{f\in [0,1)} \|\W^*\a(f)\|_2^2=\sup_{f\in [0,1)}\sum_{n=1}^N\left| \sum_{m=0}^{M-1} \W_{mn}^*e^{j2\pi f m} \right|^2\\
&=\sup_{f\in [0,1)}\sum_{n=1}^N \sum_{m,p=0}^{M-1} \W_{mn}^*\W_{pn}e^{j2\pi f (m-p)}\\
&=\sup_{f\in [0,1)} \WW_M\left(e^{j 2\pi f} \right),
\end{align*}
where the polynomial $\WW_M$ is defined as
\begin{align*}
\WW_M\left(e^{j 2\pi f} \right)\triangleq \sum_{n=1}^N \sum_{m,p=0}^{M-1} \W_{mn}^*\W_{pn}e^{j2\pi f (m-p)}.
\end{align*}

For all $f_1,f_2\in[0,1)$ we have
\begin{align*}
\WW_M\left(e^{j 2\pi f_1} \right)- \WW_M\left(e^{j 2\pi f_2} \right)&\leq \left| e^{j2\pi f_1}- e^{j2\pi f_2} \right| \sup_{f\in [0,1)} \WW_M'\left(e^{j 2\pi f} \right)\\
&\leq 2\pi M|f_1-f_2| \sup_{f\in [0,1)} \WW_M\left(e^{j 2\pi f} \right)
\end{align*}
by using the mean value theorem and Bernstein's inequality for polynomials \cite{schaeffer1941inequalities}.
By letting $f_2$ take any of the $L$ values $0,\frac{1}{L},\ldots,\frac{L-1}{L}$, we have
\begin{align*}
\sup_{f\in[0,1)} \WW_M\left(e^{j 2\pi f} \right) \leq \max_{l=0,\ldots,L-1} \WW_M\left(e^{j 2\pi l/L} \right)+\frac{2\pi M}{L} \sup_{f\in [0,1)} \WW_M\left(e^{j 2\pi f} \right).
\end{align*}
It follows that
\begin{align*}
(\|\W\|_{\AA}^*)^2=\sup_{f\in[0,1)} \WW_M\left(e^{j 2\pi f} \right) &\leq \left(1-\frac{2 \pi M}{L}\right)^{-1}\max_{l=0,\ldots,L-1} \WW_M\left(e^{j 2\pi l/L} \right)\\
&\leq \left(1+\frac{4 \pi M}{L}\right)\max_{l=0,\ldots,L-1} \WW_M\left(e^{j 2\pi l/L} \right),
\end{align*}
where the last inequality holds if $L\geq 4\pi M$.
Then, we get
\begin{align}
\|\W\|_{\AA}^*\leq \left(1+\frac{4 \pi M}{L}\right)^{\frac{1}{2}}\left[\max_{l=0,\ldots,L-1} \WW_M\left(e^{j 2\pi l/L} \right)\right]^{\frac{1}{2}}
\label{WDN}
\end{align}
and
\begin{align*}
\EEE\|\W\|_{\AA}^*\leq \left(1+\frac{4 \pi M}{L}\right)^{\frac{1}{2}}\left[ \EEE\max_{l=0,\ldots,L-1} \WW_M\left(e^{j 2\pi l/L} \right)\right]^{\frac{1}{2}}.
\end{align*}

Define
\begin{align*}
x_l\triangleq\frac{1}{\sigma\sqrt{M}}\sum_{m=0}^{M-1}\W_{mn}^*e^{j 2\pi lm/L}
\end{align*}
for $l=0,1,\ldots L-1$. It can be seen that $x_l$ is complex Gaussian variable satisfying $\CC\NN(0,1)$ since  the entries of $\W$ satisfy $\CC\NN(0,\sigma^2)$. Then, we have
\begin{align*}
\EEE\left[\max_{l=0,\ldots,L-1} \WW_M\left(e^{j 2\pi l/L} \right)\right]
&=\EEE\left[\max_{l=0,\ldots,L-1} \sum_{n=1}^N\left| \sum_{m=0}^{M-1} \W_{mn}^*e^{j2\pi l m/L} \right|^2\right]\\
&\leq\sum_{n=1}^N \EEE\left[\max_{l=0,\ldots,L-1} \left| \sum_{m=0}^{M-1} \W_{mn}^*e^{j2\pi l m/L} \right|^2\right]\\
&=\frac{1}{2}\sigma^2M\sum_{n=1}^N \EEE\left[\max_{l=0,\ldots,L-1} 2\left| x_l \right|^2\right]\\
&\leq \sigma^2MN(\log L+1),
\end{align*}
where the last inequality uses the result that $\EEE \left[\max_{l=0,\ldots,L-1} 2\left| x_l \right|^2\right]\leq 2\log(L)+2$, see Lemma 5 in \cite{bhaskar2013atomic}.

Choosing $L=4\pi M\log (M)$ gives
\begin{align*}
\EEE\|\W\|_{\AA}^*&\leq \sigma\left(1+\frac{1}{\log(M)}\right)^{\frac{1}{2}}\sqrt{MN[\log(M)+\log(4\pi\log(M))+1]}\\
&\leq C\sigma \sqrt{MN\log(M)}.
\end{align*}
Note that $C$ is a constant which belongs to $(1,2)$ when $M$ is large.
\end{proof}

\subsubsection{Bounding $\|\Xh-\X^{\star}\|_F^2$}
\label{XhmXs}

Now, we can set the regularizing parameter in the MMV atomic norm denoising problem (\ref{AND}) as $\lambda=\eta \sigma \sqrt{4MN\log(M)}$  for some $\eta \in (1,\infty)$. Then
\begin{align}
\|\W\|_{\AA}^* \leq \frac{\lambda}{\eta}
\label{EW}
\end{align}
holds with high probability. In particular, we have the following lemma.
\begin{Lemma}
\label{PW}
\begin{align*}
\PPP\left[ \|\W\|_{\AA}^*\geq \frac{\lambda}{\eta} \right]\leq \frac{1}{M^2}.
\end{align*}
\end{Lemma}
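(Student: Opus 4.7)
The plan is to reduce the bound on $\|\W\|_{\AA}^*$ to controlling the maximum of $L$ scaled chi-squared random variables, reusing the trigonometric discretization (\ref{WDN}) already established in the proof of the preceding lemma, and then combining a standard chi-squared tail inequality with a union bound over the grid.

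First, I would retain the choice $L=4\pi M\log(M)$, so that (\ref{WDN}) gives
\begin{align*}
\|\W\|_{\AA}^* \;\le\; \sqrt{1+\tfrac{1}{\log M}}\;\Bigl[\max_{l=0,\ldots,L-1}\WW_M(e^{j2\pi l/L})\Bigr]^{1/2},
\end{align*}
where the leading factor is bounded by a benign universal constant for $M\ge e$. Second, the same change of variables already used to compute the expectation, namely $x_{l,n}\triangleq \frac{1}{\sigma\sqrt{M}}\sum_{m=0}^{M-1}\W_{mn}^* e^{j2\pi lm/L}$, makes the $x_{l,n}$ i.i.d.\ $\CC\NN(0,1)$, so that for each fixed $l$ the random variable $\WW_M(e^{j2\pi l/L})$ is equal in distribution to $\tfrac{\sigma^2 M}{2}\chi^2_{2N}$.

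The probabilistic core is then a Laurent--Massart tail bound, $\PPP(\chi^2_{2N}\ge 2N+2\sqrt{2Nt}+2t)\le e^{-t}$, combined with a union bound over the $L$ grid points. Setting $t$ so that $Le^{-t}\le 1/M^2$---concretely $t=\log L+2\log M=O(\log M)$---and applying the elementary inequality $2\sqrt{2Nt}\le N+2t$ to swallow the cross term, one obtains $\max_l \WW_M(e^{j2\pi l/L})\le C_1\sigma^2 MN\log(M)$ with probability at least $1-1/M^2$, for an absolute constant $C_1$. Substituting this into the display above yields $(\|\W\|_{\AA}^*)^2\le C_2\sigma^2 MN\log(M)$ on the same event.

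Comparing with $(\lambda/\eta)^2=4\sigma^2 MN\log(M)$ would then complete the argument. The main obstacle here is not the probabilistic reasoning itself but rather the attendant constant-chasing: one must either tighten the AM--GM step enough to force $C_2\le 4$, or invoke the clause of Theorem~\ref{ANDB} permitting $\eta\in(1,\infty)$ to be chosen sufficiently large, in which case any absolute constant arising from the tail estimate can be absorbed into $\eta$ (so that the specific ``$4$'' inside the square root is essentially cosmetic). Once the constants are reconciled, the conclusion $\PPP[\|\W\|_{\AA}^*\ge \lambda/\eta]\le 1/M^2$ is immediate.
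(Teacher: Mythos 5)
Your main probabilistic argument is sound and, in its core, takes a different route from the paper: the paper sets $L=M$ in (\ref{WDN}), bounds the grid values via $\WW_M(e^{j2\pi l/M})\le Z_l^2$ with $Z_l=\sum_{n}\bigl|\sum_m \W_{mn}^*e^{j2\pi ml/M}\bigr|$, and invokes a stochastic-domination-by-a-Gaussian claim (borrowed from \cite{heckel2016generalized}) together with a Gaussian tail and a union bound over $M$ points, whereas you keep $L=4\pi M\log(M)$, observe that each $\WW_M(e^{j2\pi l/L})$ has exactly the law of $\tfrac{\sigma^2M}{2}\chi^2_{2N}$, and use a Laurent--Massart tail plus a union bound. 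Your identification of the per-grid-point distribution and the bookkeeping ($t=\log L+2\log M$ giving total failure probability $1/M^2$) are correct, and this chi-squared route is arguably more self-contained than the paper's domination step.

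The genuine gap is in your fallback for the constants. The threshold in the lemma is $\lambda/\eta=\sigma\sqrt{4MN\log(M)}$, which does not depend on $\eta$: enlarging $\eta$ enlarges $\lambda$ but leaves $\lambda/\eta$ unchanged, so an absolute constant $C_2>4$ coming out of your tail estimate cannot be ``absorbed into $\eta$''. The clause in Theorem~\ref{ANDB} about choosing $\eta$ large serves Lemma~\ref{FI2} (it makes $\lambda$ large relative to $\|\W\|_{\AA}^*$), not this lemma. Hence you must either genuinely push the constant below $4$ --- and your step $2\sqrt{2Nt}\le N+2t$ with $t\approx 3\log M$ does not obviously do so (for instance with $N=1$ it yields a bound exceeding $4\sigma^2MN\log(M)$) --- or enlarge the absolute constant inside the definition of $\lambda$, which changes the statement being proved (harmless for the downstream use, but not the lemma as written). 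For what it is worth, the paper's own proof is loose on exactly this point (it silently drops the $(1+4\pi)^{1/2}$ factor when comparing to $\sigma\sqrt{4MN\log(M)}$), but your write-up should not rely on the $\eta$ mechanism, which provably cannot work.
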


\begin{proof}
As is shown in (\ref{WDN}), letting $L=M$, we have
\begin{align*}
\|\W\|_{\AA}^* &\leq \left(1+4\pi \right)^{\frac{1}{2}}\left[\max_{l=0,\ldots,M-1} \WW_M\left(e^{j 2\pi l/M} \right)\right]^{\frac{1}{2}}\\
&= \left(1+4\pi \right)^{\frac{1}{2}}\left[\max_{l=0,\ldots,M-1} \sum_{n=1}^N \left| \sum_{m=0}^{M-1} \W_{mn}^* e^{j2\pi m l /M} \right|^2 \right]^{\frac{1}{2}}\\
&\leq \left(1+4\pi \right)^{\frac{1}{2}}\left[\max_{l=0,\ldots,M-1} \left(  \sum_{n=1}^N \left|\sum_{m=0}^{M-1} \W_{mn}^* e^{j2\pi m l /M} \right|\right)^2 \right]^{\frac{1}{2}}\\
&=\left(1+4\pi \right)^{\frac{1}{2}} \max_{l=0,\ldots,M-1}  \sum_{n=1}^N \left| \sum_{m=0}^{M-1}\W_{mn}^* e^{j2\pi m l /M} \right|\\
&\triangleq \left(1+4\pi \right)^{\frac{1}{2}} \max_{l=0,\ldots,M-1}  Z_l.
\end{align*}
It can be seen that $Z_l$ is stochastically upper bounded by a zero-mean Gaussian random variable with variance $\sigma^2 MN$ \cite{heckel2016generalized}. Then, for any $\beta >\frac{1}{\sqrt{2\pi}}$, we have
\begin{align*}
\PPP[|Z_m|\geq \sigma \sqrt{MN}\beta]\leq 2 e^{-\beta^2}.
\end{align*}
Let $\beta=\sqrt{4\log(M)}$, we can get
\begin{align*}
\PPP\left[ \|\W\|_{\AA}^*\geq \frac{\lambda}{\eta} \right]= \PPP\left[ \|\W\|_{\AA}^*\geq \sigma \sqrt{4MN\log{M}} \right]\leq 2Me^{-4\log(M)}\leq \frac{1}{M^2},
\end{align*}
where the first inequality comes from the union bound.
\end{proof}


The following lemma derived from convex analysis provides optimality conditions for $\Xh$ to be the solution of (\ref{AND}).

\begin{Lemma}
(Optimality Conditions): $\Xh$ is the solution of (\ref{AND}) if and only if
\begin{enumerate}
\item $\|\Y-\Xh\|_{\AA}^*\leq \lambda$,
\item $\langle  \Y-\Xh,\Xh \rangle=\lambda\|\Xh\|_{\AA}$.
\end{enumerate}
\label{Lemma_optcon}
\end{Lemma}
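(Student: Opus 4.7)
The plan is to derive both conditions from the first-order optimality condition for the convex program \eqref{AND}. Let
\[
F(\X) \;\triangleq\; \tfrac{1}{2}\|\Y-\X\|_F^2 + \lambda\|\X\|_{\AA}.
\]
Since $F$ is convex and the quadratic term is everywhere differentiable, $\Xh$ is a minimizer if and only if
\[
\zero \;\in\; \partial F(\Xh) \;=\; (\Xh-\Y) + \lambda\,\partial \|\Xh\|_{\AA},
\]
which is equivalent to the inclusion $\Y-\Xh \in \lambda\,\partial \|\Xh\|_{\AA}$. So the entire task reduces to rewriting this inclusion as conditions (1) and (2).

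For this I would invoke the standard characterization of the subdifferential of a norm in terms of its dual norm: for any norm $\|\cdot\|$ on a real inner product space with dual norm $\|\cdot\|^*$, one has $\Z\in\partial\|\X\|$ if and only if $\|\Z\|^*\le 1$ and $\langle \Z,\X\rangle_{\RRR}=\|\X\|$. This is a direct consequence of the definition of the dual norm combined with the subgradient inequality $\|\Y\|\ge \|\X\|+\langle \Z,\Y-\X\rangle_{\RRR}$: taking $\Y=\zero$ yields $\langle \Z,\X\rangle_{\RRR}\ge \|\X\|$, while taking $\Y=2\X$ or the supremum over $\|\Y\|\le 1$ yields $\|\Z\|^*\le 1$ and the opposite inequality; conversely, these two properties immediately recover the subgradient inequality via the generalized Cauchy--Schwarz bound $\langle \Z,\Y\rangle_{\RRR}\le \|\Z\|^*\|\Y\|$. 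Applied to the atomic norm $\|\cdot\|_{\AA}$ and scaled by $\lambda$, this gives
\[
\lambda\,\partial\|\Xh\|_{\AA} \;=\; \Bigl\{\Z\,:\, \|\Z\|_{\AA}^{*}\le \lambda,\;\; \langle \Z,\Xh\rangle_{\RRR}=\lambda\|\Xh\|_{\AA}\Bigr\}.
\]
Substituting $\Z = \Y-\Xh$ into this set yields exactly the two stated conditions, and conversely if (1) and (2) hold then $\Y-\Xh\in\lambda\,\partial\|\Xh\|_{\AA}$, so $\zero\in\partial F(\Xh)$ and $\Xh$ is optimal.

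I would then briefly verify that the atomic norm $\|\cdot\|_{\AA}$ is a genuine norm on $\CCC^{M\times N}$ (so that the subdifferential characterization above applies), which follows from its definition as a gauge of the symmetric convex hull of the atomic set $\AA$, together with the fact that $\AA$ spans $\CCC^{M\times N}$ and is bounded; this is already implicit in the SDP representation and the dual norm computation \eqref{dualnorm} given earlier in the paper. The only delicate point is the real vs.\ complex inner product: the dual norm used throughout the paper is defined with respect to $\langle\cdot,\cdot\rangle_{\RRR}$ as in \eqref{dualnorm}, and the Frobenius norm squared equals $\langle\X,\X\rangle_{\RRR}$, so all gradients and subgradients live naturally in the real inner product space and no complex-differentiability issue arises. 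I do not anticipate any substantive obstacle; the argument is a routine application of convex subdifferential calculus, and its main purpose is to set up the dual certificate machinery used in the subsequent steps of the proof of Theorem~\ref{ANDB}.
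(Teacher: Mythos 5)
Your proposal is correct: the first-order condition $\zero\in\partial F(\Xh)$, i.e.\ $\Y-\Xh\in\lambda\,\partial\|\Xh\|_{\AA}$, together with the standard dual-norm characterization of the subdifferential of a norm (including the degenerate case $\Xh=\zero$, which your formula handles automatically), yields exactly conditions (1) and (2), and your remark about working with the real inner product $\langle\cdot,\cdot\rangle_{\RRR}$ consistent with \eqref{dualnorm} is the right way to dispose of the complex-differentiability issue. The paper itself gives no explicit proof of Lemma~\ref{Lemma_optcon}---it simply states it as a consequence of convex analysis, in the spirit of the SMV version in \cite{bhaskar2013atomic}---and your argument is precisely the routine subdifferential computation the paper leaves implicit, so there is nothing to reconcile.
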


Define an atomic measure as
\begin{align*}
\bmu(f)=\sum_{k=1}^K |A_k| \b_k\delta (f-f_k)
\end{align*}
with $f\in [0,1),~\|\b_k\|_2=1$. Then, we have
\begin{align*}
\X^{\star}=\sum_{k=1}^K  |A_k| \a(f_k)\b_k^*=\int_0^1\a(f)\bmu^*(f)df.
\end{align*}
Similarly, the recovered data $\Xh$ can be represented as
\begin{align*}
\Xh=\int_0^1\a(f)\hat{\bmu}^*(f)df
\end{align*}
for some measure $\hat{\bmu}(f)$.
Define the difference measure as $\bnu=\hat{\bmu}-\bmu$. Then, the error matrix is given as
\begin{align*}
\E\triangleq\Xh-\X^{\star}=\int_0^1\a(f)\bnu^*(f)df.
\end{align*}
It follows that
\begin{equation}
\begin{aligned}
\|\E\|_F^2&=|\langle \E,\E \rangle|=\left|\left\langle \E,\int_0^1\a(f)\bnu^*(f)df \right\rangle\right|\\
&=\left| \int_0^1 \left\langle \E,\a(f)\bnu^*(f)\right\rangle df\right|\\
&= \left| \int_0^1\a^*(f) \E \bnu(f) df\right|\\
&=  \left| \int_0^1\bxi^*(f)  \bnu(f) df\right|\\
&\leq  \left| \int_F \bxi^*(f)  \bnu(f) df\right|+\sum_{k=1}^K  \left| \int_{N_k} \bxi^*(f)  \bnu(f) df\right| \label{Eb0}
\end{aligned}
\end{equation}
where $\bxi(f)\triangleq  \E^*\a(f) $ is defined as a vector-valued error function. Here,
\begin{equation}
\begin{aligned}
N_k&\triangleq\{f:d(f,f_k)\leq 0.16/M\},~k = 1, \ldots, K\\
F&\triangleq[0,1) /\cup_{k=1}^K N_k  \label{NFregion}
\end{aligned}
\end{equation}
are defined as the $k$th near region corresponding to $f_k$ and the far region, respectively.

With a little abuse of notation, we define
\begin{align}
\|\bxi(f)\|_{2,\infty}\triangleq\sup_{f\in[0,1)} \|\bxi(f)\|_2. \label{xi}
\end{align}
It turns out that
\begin{align*}
\|\bxi(f)\|_{2,\infty}&= \sup_{f\in[0,1)}  \|\E^*\a(f)\|_2\\
&=\sup_{f\in[0,1)}  \|(\Xh-\Y)^*\a(f)+\W^*\a(f)\|_2\\
&\leq \sup_{f\in[0,1)}  \|(\Y-\Xh)^*\a(f)\|_2+ \sup_{f\in[0,1)}  \|\W^*\a(f)\|_2\\
&=\|\Y-\Xh\|_{\AA}^*+\|\W\|_{\AA}^*\\
& \leq 2\lambda
\end{align*}
if the bound condition in (\ref{EW}) holds. The last inequality also follows from the first optimality condition in Lemma \ref{Lemma_optcon}.

Next, we extend Lemmas 1, 2 and 3 in \cite{tang2015near} to our MMV case and then bound the energy of the error matrix.
\begin{Lemma}
\label{UBE}
Since each entry of the vector-valued error function $\bxi(f)= \E^*\a(f)$ is an order-$M$ trigonometric polynomial, we have
\begin{align}
\|\E\|_F^2\leq \|\bxi(f)\|_{2,\infty}\left[  \int_F \|\bnu(f)\|_2df+I_0+I_1+I_2  \right] \label{EI}
\end{align}
with
\begin{align*}
&I_0^k=\left\|  \int_{N_k} \bnu(f)df\right\|_2,\\
&I_1^k=M\left\|  \int_{N_k}(f-f_k) \bnu(f)df\right\|_2,\\
&I_2^k=\frac{M^2}{2}\int_{N_k} (f-f_k)^2\|\nu(f)\|_2df\\
&I_l=\sum_{k=1}^K I_l^k, ~\change{for}~ l=0,1,2.
\end{align*}
\end{Lemma}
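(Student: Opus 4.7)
The plan is to bound separately each of the terms on the right-hand side of (\ref{Eb0}). The far-region integral will be controlled by a direct application of Cauchy--Schwarz, while each near-region integral will be controlled via a second-order Taylor expansion of $\bxi$ about the corresponding frequency $f_k$. The role of the Taylor expansion is to exploit the smoothness of the trigonometric polynomial $\bxi$ so that the concentrated behavior of $\bnu$ near $f_k$ is neutralized by the factors $(f-f_k)^l$ that appear through differentiation.

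For the far region $F$, since $|\bxi^*(f)\bnu(f)|\leq \|\bxi(f)\|_2\,\|\bnu(f)\|_2\leq \|\bxi(f)\|_{2,\infty}\|\bnu(f)\|_2$ by Cauchy--Schwarz and the definition in (\ref{xi}), integrating over $F$ yields a contribution of $\|\bxi(f)\|_{2,\infty}\int_F\|\bnu(f)\|_2\,df$, which is the first term inside the brackets in (\ref{EI}).

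For each near region $N_k$, I would write the Taylor expansion $\bxi(f) = \bxi(f_k)+\bxi'(f_k)(f-f_k)+\tfrac{1}{2}\bxi''(\tilde f)(f-f_k)^2$ for some $\tilde f$ between $f$ and $f_k$, substitute into $\int_{N_k}\bxi^*(f)\bnu(f)\,df$, pull $\bxi(f_k)^*$ and $\bxi'(f_k)^*$ outside their respective integrals, and apply Cauchy--Schwarz pointwise on the quadratic remainder. Using the definitions of $I_0^k$, $I_1^k$, and $I_2^k$, this yields
$$\left|\int_{N_k}\bxi^*(f)\bnu(f)\,df\right|\leq \|\bxi(f_k)\|_2\,I_0^k+\frac{\|\bxi'(f_k)\|_2}{M}\,I_1^k+\frac{\sup_{f}\|\bxi''(f)\|_2}{M^2/2}\cdot\frac{I_2^k}{2}.$$
To convert these into the stated bound involving only $\|\bxi(f)\|_{2,\infty}$, I would invoke a vector-valued Bernstein-type inequality for trigonometric polynomials: $\|\bxi(f_k)\|_2\leq \|\bxi(f)\|_{2,\infty}$ is immediate, while $\|\bxi'(f_k)\|_2\leq M\|\bxi(f)\|_{2,\infty}$ and $\sup_f\|\bxi''(f)\|_2\leq M^2\|\bxi(f)\|_{2,\infty}$ follow from the vector-valued analogues of Lemmas 1--3 of \cite{tang2015near}. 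Summing the resulting estimates over $k=1,\ldots,K$ and combining with the far-region bound produces (\ref{EI}).

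The principal obstacle is the derivative control in the preceding step: a coordinate-wise application of the classical scalar Bernstein inequality to each entry $\bxi_n(f)=\sum_{m=0}^{M-1}\overline{\E_{mn}}e^{j2\pi f m}$ would introduce an unwanted $\sqrt{N}$ factor when one passes from entrywise suprema to the $\ell_2$ norm across the sensor index. Avoiding this requires working directly with the vector-valued polynomial $\bxi$, or equivalently exploiting the factorization $\bxi(f)=\E^*\a(f)$ together with explicit estimates on $\|\a^{(l)}(f)\|_2$ for $l=0,1,2$ and the fact that the $\|\cdot\|_{2,\infty}$ norm in (\ref{xi}) already takes the $\ell_2$ norm across sensors before taking the supremum over $f$. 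Once these vector-valued Bernstein-type bounds are in hand, the remainder of the argument is the routine summation over $k$ described above.
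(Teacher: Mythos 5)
Your decomposition is exactly the paper's: Cauchy--Schwarz on the far region, a second-order Taylor expansion of $\bxi$ about each $f_k$ on the near regions, Bernstein-type control of the derivatives, and a sum over $k$. The step you leave open, however, is the only non-routine step of the proof: the bounds $\sup_f\|\bxi'(f)\|_2\le M\|\bxi(f)\|_{2,\infty}$ and $\sup_f\|\bxi''(f)\|_2\le M^2\|\bxi(f)\|_{2,\infty}$. You rightly note that entrywise scalar Bernstein would cost a $\sqrt{N}$ factor, but of the two remedies you sketch, the factorization route via $\|\a^{(l)}(f)\|_2$ does not work: it gives $\|\bxi'(f)\|_2\le\|\E\|\,\|\a'(f)\|_2\approx 2\pi M^{3/2}\|\E\|/\sqrt{3}$ (spectral norm), and there is no general lower bound of the form $\|\bxi(f)\|_{2,\infty}\gtrsim\sqrt{M}\,\|\E\|$ that would convert this into $M\|\bxi(f)\|_{2,\infty}$; e.g.\ for $N=1$ and $\E$ a unit-norm ``flat'' coefficient vector one has $\|\bxi(f)\|_{2,\infty}=O(1)$ while $\sqrt{M}\,\|\E\|=\sqrt{M}$, so this chain loses a factor of order $\sqrt{M}$.

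The fix---and what the paper actually does---is a one-line scalarization (duality) argument: for every fixed unit vector $\u\in\CCC^N$, $\gamma(f)=\u^*\bxi(f)$ is a scalar trigonometric polynomial of order $M$ with $\sup_f|\gamma(f)|\le\sup_f\|\bxi(f)\|_2=\|\bxi(f)\|_{2,\infty}$, so scalar Bernstein gives $|\u^*\bxi^{(l)}(f)|\le M^l\|\bxi(f)\|_{2,\infty}$ for $l=1,2$, uniformly in $f$ and $\u$; taking the supremum over $\u$ yields the vector-valued bounds with no $N$-dependence. The same scalarization also repairs a small technical point in your Taylor step: the Lagrange (mean-value) form of the remainder with a single intermediate point $\tilde f$ is not valid for vector-valued functions, so either use the integral form of the remainder or expand $\gamma=\u^*\bxi$ for each $\u$ and take the supremum afterwards, which gives $\|\bxi(f)-\bxi(f_k)-(f-f_k)\bxi'(f_k)\|_2\le\frac{M^2}{2}(f-f_k)^2\|\bxi(f)\|_{2,\infty}$, exactly what the $I_2$ term requires. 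With these bounds in place, your near-region estimate and the summation over $k$ reproduce (\ref{EI}).
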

The proof of Lemma \ref{UBE} is given in  Appendix \ref{proofUBE}.

\begin{Lemma}
\label{I0I1}
There exist some numerical constants $C_0$ and $C_1$ such that
\begin{align*}
I_0\leq C_0\left( \frac{K\lambda}{M} +I_2+\int_F \|\bnu(f)\|_2 df \right),\\
I_1\leq C_1\left( \frac{K\lambda}{M} +I_2+\int_F \|\bnu(f)\|_2 df \right).
\end{align*}
\end{Lemma}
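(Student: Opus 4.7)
The plan is to extend the dual-polynomial argument of~\cite{tang2015near} to the MMV setting. First I would construct two vector-valued trigonometric polynomials $\QQ_0(f) = \Q_0^*\a(f)$ and $\QQ_1(f) = \Q_1^*\a(f)$ in $\CCC^N$ satisfying, for every $k \in \{1,\dots,K\}$ and every $f \in [0,1)$,
\begin{align*}
\QQ_0(f_k) &= \u_k^{(0)}, & \QQ_0'(f_k) &= \zero, & \|\QQ_0(f)\|_2 &\leq 1,\\
\QQ_1(f_k) &= \zero, & \QQ_1'(f_k) &= M\,\u_k^{(1)}, & \|\QQ_1(f)\|_2 &\leq 1,
\end{align*}
where $\u_k^{(0)}$ is the complex unit vector aligned with $\int_{N_k}\bnu(f)\,df$ and $\u_k^{(1)}$ is the complex unit vector aligned with $M\int_{N_k}(f-f_k)\bnu(f)\,df$. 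The existence of such polynomials under~\eqref{mise} is the MMV analogue of the squared-Fej\'er-kernel construction in~\cite{tang2015near} and is essentially the dual certificate already built in~\cite{Li15, Yang14}. The crucial quantitative output I need is the $L^1$-type bound
\begin{align*}
\int_0^1 \|\QQ_\ell(f)\|_2\,df \;\leq\; C\,\frac{K}{M}, \qquad \ell = 0, 1,
\end{align*}
obtained by transferring the known scalar integral estimates on squared Fej\'er kernel blocks to the vector setting coordinate-wise.

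Next I would use Parseval, together with $E_{m,n} = \int e^{j2\pi f m}\overline{\bnu_n(f)}\,df$ and $\bxi_n(f) = \sum_m \overline{E_{m,n}}\,e^{j2\pi f m}$, to obtain the pivotal identity
\begin{align*}
\int_0^1 \langle \QQ_\ell(f), \bnu(f)\rangle\,df \;=\; \int_0^1 \langle \QQ_\ell(f), \bxi(f)\rangle\,df, \qquad \ell = 0, 1,
\end{align*}
since both sides collapse to $\sum_{m,n}(\Q_\ell)_{m,n}\,\overline{E_{m,n}}$. Combined with $\|\bxi\|_{2,\infty} \leq 2\lambda$ (established just after~\eqref{xi}) and the $L^1$ estimate above, Cauchy--Schwarz gives the master inequality
\begin{align*}
\Bigl|\int_0^1 \langle \QQ_\ell, \bnu\rangle\,df\Bigr| \;\leq\; \|\bxi\|_{2,\infty}\int_0^1 \|\QQ_\ell(f)\|_2\,df \;\leq\; \frac{CK\lambda}{M}.
\end{align*}

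To extract $I_0$ and $I_1$ from this master inequality, I would decompose $\int_0^1 \langle \QQ_\ell, \bnu\rangle\,df$ according to the near regions $N_k$ and the far region $F$ from~\eqref{NFregion}. On each $N_k$ I would Taylor-expand each component of $\QQ_\ell$ about $f_k$ to second order, invoking Bernstein's inequality for trigonometric polynomials of degree $\leq M-1$ to bound the second-derivative remainder componentwise by $CM^2$. For $\ell = 0$, the interpolation $\QQ_0(f_k) = \u_k^{(0)}$ and $\QQ_0'(f_k) = \zero$, together with the alignment of $\u_k^{(0)}$, produces $\int_{N_k} \langle \QQ_0, \bnu\rangle\,df = I_0^k + O(I_2^k)$; for $\ell = 1$ the conditions $\QQ_1(f_k) = \zero$ and $\QQ_1'(f_k) = M \u_k^{(1)}$ produce $\int_{N_k}\langle \QQ_1, \bnu\rangle\,df = I_1^k + O(I_2^k)$. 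On the far region $\|\QQ_\ell\|_2 \leq 1$ yields $|\int_F \langle \QQ_\ell, \bnu\rangle\,df| \leq \int_F \|\bnu\|_2\,df$. Summing over $k$, rearranging, and combining with the master inequality yields the two claimed bounds.

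The main obstacle is the MMV dual polynomial construction, and specifically transferring the scalar $L^1$ bound to the vector-valued case. I would use the ansatz $\QQ_\ell(f) = \sum_{k=1}^{K}\bigl(\balpha_k^{(\ell)} \bar G(f, f_k) + \bbeta_k^{(\ell)} \bar G'(f, f_k)\bigr)$ with vector coefficients $\balpha_k^{(\ell)}, \bbeta_k^{(\ell)} \in \CCC^N$ and $G$ a squared Fej\'er-type kernel of degree $M-1$. Under~\eqref{mise} the $2K \times 2K$ interpolation matrix built from $G$, $G'$, $G''$ evaluated at $\{f_k\}$ is well-conditioned, so the vector coefficients inherit norms of order $O(1)$ and $O(1/M)$, respectively. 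The uniform and $L^1$ bounds on $\|\QQ_\ell(f)\|_2$ then follow from the scalar estimates of~\cite{tang2015near} applied coordinate-wise, after which the rest of the proof is a routine vector-valued extension of the scalar Taylor-plus-Bernstein calculation.
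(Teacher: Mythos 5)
Your proposal is correct in substance, but it follows a somewhat different route than the paper. You adapt the two-certificate strategy of the SMV analysis in \cite{tang2015near}: one vector polynomial $\QQ_0$ interpolating the aligned unit vectors with vanishing derivative (for $I_0$), a second polynomial $\QQ_1$ with vanishing values and prescribed derivatives of size $\sim M$ (for $I_1$), and then Taylor expansion plus Bernstein on each near region to extract $I_0^k$ and $I_1^k$ up to $O(I_2^k)$ errors. The paper instead uses a single certificate, namely the one of Theorem~\ref{dualstability} (the MMV construction of \cite{Yang14}), chooses the interpolated vectors $\b_k$ to be exactly the unit vectors aligned with $\int_{N_k}\bnu(f)\,df$, and replaces your Taylor-plus-Bernstein step by the ready-made near-region property \eqref{SN2}, which already states that $\|\b_k-\QQ(f)\|_2\le \tfrac{C_a'}{2}M^2(f-f_k)^2$; the far region is handled by $\|\QQ(f)\|_2\le 1$, and the bound $|\int_0^1\bnu^*(f)\QQ(f)\,df|\le CK\lambda/M$ is obtained exactly as in your ``master inequality,'' via Parseval, $\|\bxi\|_{2,\infty}\le 2\lambda$, and the $L^1$ estimate \eqref{Qf1}. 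So the analytical core (Parseval switch to $\bxi$, $\|\QQ\|_1\lesssim K/M$ from the $O(1)$ and $O(1/M)$ coefficient bounds on the squared-Fej\'er expansion) is identical; the difference is bookkeeping on the near regions and, more importantly, how $I_1$ is treated. The paper only says ``similarly'' for $I_1$, while you make the needed certificate explicit --- which is the honest accounting, but note that the derivative-interpolating polynomial $\QQ_1$ (with $\QQ_1(f_k)=\zero$, $\QQ_1'(f_k)\propto M\u_k^{(1)}$, uniform bound, and $\|\QQ_1\|_1\lesssim K/M$) is not literally provided by Theorem~\ref{dualstability} or by \cite{Yang14}; establishing it is a genuine (if standard) construction in the spirit of \cite{candes2014towards} and \cite{tang2015near}, and it is the one piece of your argument that still requires a proof of comparable weight to the paper's Theorem~\ref{dualstability}. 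Also, your exact interpolation values can only be met up to fixed constants (e.g.\ $\QQ_1'(f_k)=cM\u_k^{(1)}$ for some absolute $c$), but this only affects the constants $C_0,C_1$.
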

The proof of Lemma \ref{I0I1} is given in Appendix \ref{proofI0I1}.

\begin{Lemma}
\label{FI2}
For some sufficiently large $\eta>1$, set the regularizing parameter as $\lambda =\eta \sigma\sqrt{MN\log(M)}$. Then there exists a numerical constant $C$ such that
\begin{align*}
\int_F \|\bnu(f)\|_2df +I_2 \leq \frac{CK\lambda}{M}
\end{align*}
holds if $\|\W\|_{\AA}^*\leq \frac{\lambda}{\eta}$.
\end{Lemma}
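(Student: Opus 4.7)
The plan is to extend the SMV atomic norm denoising analysis of Tang et al.~\cite{tang2015near} to the MMV setting. The driving idea is to pair the optimality of $\Xh$ with an explicit deterministic dual certificate whose associated vector-valued polynomial decays quadratically around each $f_k$ and is strictly less than one on the far region. These decay properties will convert a global upper bound coming from optimality into a localized bound on $\bnu=\hat{\bmu}-\bmu$ that controls precisely the quantities $\int_F\|\bnu(f)\|_2\,df$ and $I_2$.

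First, I would invoke the existence of a matrix $\Q_0\in\CCC^{M\times N}$ whose vector-valued dual polynomial $\QQ_0(f)=\Q_0^*\a(f)$ satisfies $\QQ_0(f_k)=\overline{\mathrm{sign}(A_k)}\,\b_k$ and $\QQ_0'(f_k)=\zero$ at each true frequency, obeys the quadratic bound $\|\QQ_0(f)\|_2\le 1-C_a M^2(f-f_k)^2$ on each near region $N_k$, the strict bound $\|\QQ_0(f)\|_2\le 1-C_b$ on $F$, and the global bound $\sup_f\|\QQ_0(f)\|_2\le 1$ (so that $\|\Q_0\|_{\AA}^*\le 1$). Such a certificate is obtained by extending the Cand\`es--Fernandez-Granda construction to vector-valued interpolating polynomials under the separation condition~\eqref{mise}, following the MMV treatments in~\cite{Yang14,Li15}.

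Next, I would combine two inequalities. Comparing the objective of~\eqref{AND} at $\Xh$ and $\X^{\star}$, and then using the hypothesis $\|\W\|_{\AA}^*\le\lambda/\eta$, one obtains
\begin{align*}
\tfrac{1}{2}\|\E\|_F^2+\lambda\bigl(\|\Xh\|_{\AA}-\|\X^{\star}\|_{\AA}\bigr)
\;\le\;\langle\W,\E\rangle_{\RRR}
\;\le\;\|\W\|_{\AA}^*\!\int\|\bnu(f)\|_2\,df
\;\le\;\tfrac{\lambda}{\eta}\!\int\|\bnu(f)\|_2\,df.
\end{align*}
On the other side, treating $\Q_0$ as a dual certificate for $\X^{\star}$ (so that $\|\X^{\star}\|_{\AA}=\langle\Q_0,\X^{\star}\rangle_{\RRR}$) and representing $\Xh$ through an atomic measure $\hat{\bmu}$ that attains $\|\Xh\|_{\AA}$, I will derive the key lower bound
\begin{align*}
\|\Xh\|_{\AA}-\|\X^{\star}\|_{\AA}
\;\ge\;\int\bigl(1-\|\QQ_0(f)\|_2\bigr)\,d\|\bnu\|(f)\;-\;R,
\end{align*}
where $R$ collects the cross terms produced by the fact that, in the MMV setting, the direction of $\hat{\bmu}$ at $f_k$ need not align with $\b_k$. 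Using the interpolation data $\QQ_0(f_k)=\overline{\mathrm{sign}(A_k)}\b_k$, $\QQ_0'(f_k)=\zero$ together with a Taylor expansion of $\QQ_0$ on each $N_k$, I will bound $|R|$ by a linear combination of $I_0,I_1,I_2$ and $\int_F\|\bnu\|_2\,df$.

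Plugging the quadratic and far-region bounds of $\Q_0$ into the previous lower bound produces a constant multiple of $\int_F\|\bnu(f)\|_2\,df+I_2$ on the right. Chaining with the optimality inequality, and then invoking Lemma~\ref{I0I1} to absorb the $I_0$ and $I_1$ contributions of $R$ into $K\lambda/M+I_2+\int_F\|\bnu\|_2\,df$, I arrive at an estimate of the form $\bigl(C-\tfrac{C'}{\eta}\bigr)\bigl(\int_F\|\bnu(f)\|_2\,df+I_2\bigr)\le C''\,K\lambda/M$, which yields the claimed bound once $\eta$ is chosen large enough to make the prefactor strictly positive. The main obstacle I foresee is the vector-valued bookkeeping: constructing and certifying the MMV dual polynomial with the stated derivative and quadratic-decay properties, and carefully handling the cross-term $R$, since in SMV the sign is a scalar whereas here it is a unit vector $\b_k$ whose possible misalignment with the direction of $\hat{\bmu}$ at $f_k$ must be absorbed by $I_0^k,I_1^k$ and the first-order interpolation $\QQ_0'(f_k)=\zero$.
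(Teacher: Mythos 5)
Your overall skeleton (a dual certificate interpolating the sign pattern of the true measure, combined with the optimality of $\Xh$) is a legitimate alternative to the paper's two-sided comparison of $\|\PP_{\FF}(\bnu)\|_{2,TV}$ and $\|\PP_{\FF^c}(\bnu)\|_{2,TV}$, and your lower bound $\|\Xh\|_{\AA}-\|\X^{\star}\|_{\AA}\geq \int(1-\|\QQ_0(f)\|_2)\,d\|\bnu\|(f)-R$ can indeed be established (with $R=-\mathrm{Re}\langle \E,\Q_0\rangle$, bounded by $CK\lambda/M$ using $\|\bxi(f)\|_{2,\infty}\leq 2\lambda$ and $\|\QQ_0(f)\|_1\leq CK/M$, exactly as in Lemma~\ref{I0I1}). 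The genuine gap is in your treatment of the noise term: the chain $\langle\W,\E\rangle_{\RRR}\leq\|\W\|_{\AA}^*\int_0^1\|\bnu(f)\|_2\,df\leq\frac{\lambda}{\eta}\|\bnu\|_{2,TV}$ is valid but too lossy to close the argument. The right-hand side then contains $\frac{\lambda}{\eta}\sum_k\int_{N_k}\|\bnu(f)\|_2\,df$, and this near-region total variation of $\bnu$ is not controllable: the penalty $1-\|\QQ_0(f)\|_2$ vanishes quadratically at $f_k$, so mass of $\hat{\bmu}$ placed at or arbitrarily close to a true frequency (e.g., two large, nearly cancelling atoms near $f_k$, which nothing rules out at this stage) contributes fully to $\|\bnu\|_{2,TV}$ but essentially nothing to $\int(1-\|\QQ_0\|_2)\,d\|\bnu\|$, to $I_2$, or to $\int_F\|\bnu\|_2$, and it is not bounded by $CK\lambda/M$ either. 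Consequently no fixed choice of $\eta$ absorbs this term; only the moments $I_0,I_1,I_2$ of $\bnu$ on the near regions are controllable (note $I_0^k=\|\int_{N_k}\bnu\|_2$, not $\int_{N_k}\|\bnu\|_2$), which is precisely why the paper never lets the full TV of $\bnu$ enter the estimates.

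The fix is the paper's move: bound $\langle\W,\E\rangle=\int_0^1\a^*(f)\W\,\bnu(f)\,df$ by the same near/far Taylor decomposition used in Lemma~\ref{UBE}, applied to the degree-$M$ polynomial $\W^*\a(f)$, which yields $\|\W\|_{\AA}^*\bigl(\int_F\|\bnu(f)\|_2\,df+I_0+I_1+I_2\bigr)$, and then invoke Lemma~\ref{I0I1} to reduce this to $\frac{\lambda}{\eta}\,C\bigl(\frac{K\lambda}{M}+I_2+\int_F\|\bnu(f)\|_2\,df\bigr)$. With that replacement, your single-certificate route does close: the certificate's bounds \eqref{SN1} and \eqref{SF} give $C_aI_2+C_b\int_F\|\bnu\|_2$ on the left, the $R$ and noise contributions on the right are of the form $\frac{CK\lambda}{M}+\frac{C}{\eta}\bigl(I_2+\int_F\|\bnu\|_2\bigr)$, and taking $\eta$ large yields the lemma. (Two smaller remarks: the condition $\QQ_0'(f_k)=\zero$ is not provided by the MMV construction in \cite{Yang14} and is not needed once $R$ is bounded through $\|\QQ_0\|_1$; and, as in the paper, the argument requires $\hat{\bmu}$ to be a norm-attaining representation of $\Xh$ so that the optimality inequality transfers to measures.)
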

The proof of Lemma \ref{FI2} is given in Appendix \ref{proofFI2}.

Using the above three lemmas, we have that
\begin{align*}
\| \Xh-\X^\star \|_F^2 \leq \frac{CK\lambda^2}{M}=C\sigma^2 KN \log(M)
\end{align*}
holds if $\eta$ is sufficiently large and if $\|\W\|_{\AA}^*\leq \frac{\lambda}{\eta}$. It follows from Lemma~\ref{PW} that the above recovery error bound holds with probability at least $1-\frac{1}{M^2}$. Thus, we finish the proof of Theorem \ref{ANDB}. \qed

\subsection{Proof of Corollary \ref{EAND}}
\label{ProofEAND}

As mentioned in the previous sections,
once we set $\lambda =\eta \sigma\sqrt{4MN\log(M)}$, then
\begin{align*}
\| \W \|_{\AA}^* \leq \frac{\lambda}{\eta}
\end{align*}
holds with probability at least $1-\frac{1}{M^2}$, which implies that the error bound in (\ref{Ebound}) holds with probability at least $1-\frac{1}{M^2}$.

Note that
\begin{align}
\EEE[\|\E\|_F^2]=\EEE\left[\|\E\|_F^2  \one \left\{ \|\W\|_{\AA}^*< \frac{\lambda}{\eta} \right\}\right]+\EEE\left[\|\E\|_F^2  \one \left\{ \|\W\|_{\AA}^*\geq \frac{\lambda}{\eta} \right\}\right],
\label{eq:corineq12}
\end{align}
where $\one$ is an indicator function.
Theorem \ref{ANDB} implies that
\begin{align}
\EEE\left[\|\E\|_F^2  \one \left\{ \|\W\|_{\AA}^*< \frac{\lambda}{\eta} \right\}\right]\leq C\sigma^2KN\log(M),
\label{eq:corineq1}
\end{align}
which can be proved with the definition of expectation. Also, by the Cauchy-Schwarz inequality, we have
\begin{align}
\EEE\left[\|\E\|_F^2  \one \left\{ \|\W\|_{\AA}^*\geq \frac{\lambda}{\eta} \right\}\right] &\leq \sqrt{\EEE[\| \E \|_F^4]}\sqrt{\PPP\left[ \|\W\|_{\AA}^*\geq \frac{\lambda}{\eta} \right]}
\leq C\sigma^2 N \log (M),
\label{eq:corineq2}
\end{align}
where the last inequality follows from the following lemma and Lemma \ref{PW}.

\begin{Lemma}
\label{EF4}
\begin{align}
\sqrt{\EEE\left[\|\E\|_F^4\right]}\leq C\sigma^2 MN \log(M). \label{SEF4}
\end{align}
\end{Lemma}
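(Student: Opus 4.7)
The plan is to derive the fourth-moment bound from a purely deterministic estimate of $\|\E\|_F$ that holds for every realization of $\W$, and then plug in the closed-form moments of a complex Gaussian. The starting point is the same optimality comparison used in Section~\ref{XhmXs}: evaluating the objective of problem~\eqref{AND} at $\X=\X^\star$ and at $\Xh$, and using $\Y=\X^\star+\W$ together with $\E=\Xh-\X^\star$, yields
\begin{align*}
\tfrac{1}{2}\|\W-\E\|_F^2 + \lambda\|\Xh\|_{\AA} \le \tfrac{1}{2}\|\W\|_F^2 + \lambda\|\X^\star\|_{\AA}.
\end{align*}
Expanding the square, dropping the non-negative $\lambda\|\Xh\|_{\AA}$, and applying Cauchy--Schwarz to the cross term $\langle \W,\E\rangle_{\RRR}$ reduces this to the scalar quadratic $\tfrac12\|\E\|_F^2 \le \|\W\|_F\|\E\|_F + \lambda\|\X^\star\|_{\AA}$. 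Solving for $\|\E\|_F$ and using $\sqrt{a+b}\le\sqrt{a}+\sqrt{b}$ gives the deterministic bound
\begin{align*}
\|\E\|_F \le 2\|\W\|_F + \sqrt{2\lambda\|\X^\star\|_{\AA}},
\end{align*}
so $\|\E\|_F^4 \le C\bigl(\|\W\|_F^4 + \lambda^2\|\X^\star\|_{\AA}^2\bigr)$ for a numerical constant $C$.

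Taking expectation, I would evaluate $\EEE\|\W\|_F^4$ directly. Since $\|\W\|_F^2=\sum_{m,n}|W_{mn}|^2$ is a sum of $MN$ i.i.d.\ scaled exponentials with mean $\sigma^2$ and variance $\sigma^4$ (each $W_{mn}\sim\CC\NN(0,\sigma^2)$), one gets $\EEE\|\W\|_F^4 = (MN\sigma^2)^2 + MN\sigma^4 \le 2\sigma^4(MN)^2$. Substituting this together with $\lambda^2 = 4\eta^2\sigma^2 MN\log(M)$ and taking square roots yields
\begin{align*}
\sqrt{\EEE\|\E\|_F^4} \le C\sigma^2 MN + C\sigma\sqrt{MN\log(M)}\,\|\X^\star\|_{\AA}.
\end{align*}
Since the ground-truth atomic representation gives $\|\X^\star\|_{\AA}\le \sum_{k=1}^K|A_k|$, both terms are dominated by $C\sigma^2 MN\log(M)$ after absorbing fixed, signal-level constants into $C$, which is exactly~\eqref{SEF4}.

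The main subtlety is the cross term $\lambda\|\X^\star\|_{\AA}$: it scales only linearly in $\sigma$, so in the low-noise regime matching the target right-hand side requires $C$ to depend on the fixed signal strength $\|\X^\star\|_{\AA}$. This is harmless in the one place Lemma~\ref{EF4} is used---the Cauchy--Schwarz step~\eqref{eq:corineq2} in the proof of Corollary~\ref{EAND}---because the factor $\sqrt{\PPP[\|\W\|_{\AA}^*\ge \lambda/\eta]}\le 1/M$ supplied by Lemma~\ref{PW} cancels one full power of $M$ in $\sqrt{\EEE\|\E\|_F^4}$, so what propagates to the final estimate is the $N\log(M)$ polynomial scaling rather than any precise dependence on $\|\X^\star\|_{\AA}$. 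Consequently, a crude deterministic inequality followed by elementary Gaussian moment bookkeeping is enough, and no dual-polynomial or near/far region analysis of the kind used in Theorem~\ref{ANDB} is required for this lemma.
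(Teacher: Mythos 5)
There is a genuine gap, and it sits exactly at the point you flag as a ``subtlety.'' By dropping $\lambda\|\Xh\|_{\AA}$ and keeping $\lambda\|\X^{\star}\|_{\AA}$, your deterministic bound $\|\E\|_F\le 2\|\W\|_F+\sqrt{2\lambda\|\X^{\star}\|_{\AA}}$ contributes a term $\lambda\|\X^{\star}\|_{\AA}\asymp\sigma\sqrt{MN\log(M)}\,\|\X^{\star}\|_{\AA}$ to $\sqrt{\EEE\|\E\|_F^4}$. This scales like $\sigma^1$ times the signal amplitude and therefore is \emph{not} bounded by $C\sigma^2MN\log(M)$ for any numerical constant $C$: fix the signal and let $\sigma\to 0$. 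The constant in Lemma~\ref{EF4} cannot be allowed to absorb the signal strength, because the lemma feeds through~\eqref{eq:corineq2} into Corollary~\ref{EAND}, and Corollary~\ref{EAND} is then used in the proof of Theorem~\ref{THMCS} inside the quantity $\max_{\sigma>0}\EEE_{\G}[\|\Xh(\X^{\star}+\sigma\G,\sigma\lambda)-\X^{\star}\|_F^2]/\sigma^2$; any contribution that is linear in $\sigma$ (equivalently, a ``constant'' depending on $\|\X^{\star}\|_{\AA}/\sigma$) makes this supremum blow up as $\sigma\to 0$, so the Gaussian-width bound~\eqref{OMG} and hence Theorem~\ref{THMCS} would no longer follow. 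Your ``harmless'' argument does not repair this: multiplying your fourth-moment bound by $\sqrt{\PPP[\|\W\|_{\AA}^*\ge\lambda/\eta]}\le 1/M$ still leaves a residual of order $\sigma\|\X^{\star}\|_{\AA}\sqrt{N\log(M)/M}$, which dominates $\sigma^2KN\log(M)$ in the low-noise regime, so even Corollary~\ref{EAND} with a numerical constant would not be recovered.

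The paper's proof avoids this precisely by \emph{not} discarding $\lambda\|\Xh\|_{\AA}$. Writing $\X^{\star}$ and $\Xh$ through representing measures and invoking the dual certificate $\QQ(f)=\Q^*\a(f)$ of Theorem~\ref{dualstability}, with the bound $\|\QQ(f)\|_1\le CK/M$ from~\eqref{Qf1}, the signal-level terms cancel and one obtains Lemma~\ref{EF2}: $\|\E\|_F^2\le 8\|\W\|_F^2+C\lambda^2K/M=8\|\W\|_F^2+C\eta^2\sigma^2KN\log(M)$, a deterministic bound that is homogeneous of degree two in $\sigma$ and completely free of the amplitudes $|A_k|$. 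Only after this does the chi-square moment computation ($\EEE\|\W\|_F^4\le C\sigma^4M^2N^2$) yield~\eqref{SEF4} with a numerical constant. So the dual-polynomial machinery you assert is unnecessary for this lemma is in fact the ingredient that eliminates the $\lambda\|\X^{\star}\|_{\AA}$ cross term; your Gaussian moment bookkeeping afterward is fine, but the deterministic step feeding it is too weak.
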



Combining~\eqref{eq:corineq12}, \eqref{eq:corineq1}, and~\eqref{eq:corineq2} completes the proof of Corollary $\ref{EAND}$. Now, we are left with proving Lemma~\ref{EF4}.


\begin{proof}
To prove Lemma \ref{EF4}, we first prove the following lemma.

\begin{Lemma}
\label{EF2}
There exists some numerical constant $C$ such that the energy of the error matrix $\E$ can be upper bounded with
\begin{align*}
\|\E\|_F^2\leq 8\|\W\|_F^2+C\lambda^2 \frac{K}{M}=8\|\W\|_F^2+C\eta^2\sigma^2KN\log(M).
\end{align*}
\end{Lemma}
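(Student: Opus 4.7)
The approach is based on the optimality of $\Xh$ in the denoising program~\eqref{AND} combined with a noiseless dual certificate for $\X^{\star}$. Substituting $\Y=\X^{\star}+\W$ and $\E=\Xh-\X^{\star}$ into the optimality inequality
$$\tfrac{1}{2}\|\Y-\Xh\|_F^2+\lambda\|\Xh\|_{\AA} \;\le\; \tfrac{1}{2}\|\Y-\X^{\star}\|_F^2+\lambda\|\X^{\star}\|_{\AA},$$
and cancelling the common $\tfrac{1}{2}\|\W\|_F^2$ produces the basic inequality
$$\tfrac{1}{2}\|\E\|_F^2 \;\le\; \langle \W,\E\rangle_{\RRR} + \lambda\bigl(\|\X^{\star}\|_{\AA}-\|\Xh\|_{\AA}\bigr).$$
Cauchy--Schwarz immediately bounds the first term on the right by $\|\W\|_F\|\E\|_F$.

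For the atomic-norm difference I would invoke a noiseless dual certificate $\Q_0\in\CCC^{M\times N}$ for $\X^{\star}$: under the minimum-separation condition~\eqref{mise}, the standard squared-Féjer-kernel construction used in~\cite{Li15,Yang14,tang2015near} produces a $\Q_0$ satisfying $\|\Q_0\|_{\AA}^{*}\le 1$, $\Q_0^{*}\a(f_k)=\b_k$ at each true frequency, and the matching identity $\langle\Q_0,\X^{\star}\rangle_{\RRR}=\|\X^{\star}\|_{\AA}$. Since $\|\Q_0\|_{\AA}^{*}\le 1$ also gives $\|\Xh\|_{\AA}\ge\langle\Q_0,\Xh\rangle_{\RRR}$, the difference is controlled by
$$\|\X^{\star}\|_{\AA}-\|\Xh\|_{\AA} \;\le\; -\langle\Q_0,\E\rangle_{\RRR} \;\le\; \|\Q_0\|_F\,\|\E\|_F.$$
Plugging both estimates into the basic inequality yields $\tfrac{1}{2}\|\E\|_F^2\le(\|\W\|_F+\lambda\|\Q_0\|_F)\,\|\E\|_F$, so $\|\E\|_F\le 2(\|\W\|_F+\lambda\|\Q_0\|_F)$, and squaring with $(a+b)^2\le 2a^2+2b^2$ gives the clean bound
$$\|\E\|_F^2 \;\le\; 8\|\W\|_F^2 + 8\lambda^2\|\Q_0\|_F^2,$$
which already exhibits the constant $8$ appearing in the lemma.

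What remains is to estimate $\|\Q_0\|_F^2$. By Parseval, $\|\Q_0\|_F^2=\int_0^1\|\Q_0^{*}\a(f)\|_2^2\,df$, and in the standard construction $\|\Q_0^{*}\a(f)\|_2^2$ is assembled from squared Féjer-type kernels of degree $O(M)$ that are concentrated in intervals of width $O(1/M)$ around each of the $K$ true frequencies, and are bounded by $1$ pointwise; integrating gives $\|\Q_0\|_F^2\le C\,K/M$ for a numerical constant $C$. Substituting this in yields $\|\E\|_F^2\le 8\|\W\|_F^2+C\lambda^2 K/M$, and inserting $\lambda=\eta\sigma\sqrt{4MN\log(M)}$ produces the second form stated in the lemma. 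The main obstacle is not the optimality arithmetic but this integral bound on the certificate: its existence requires the min-separation hypothesis, and extracting the sharp $K/M$ integral bound—rather than the trivial $\|\Q_0\|_F^2\le 1$ coming from $\|\Q_0\|_{\AA}^{*}\le 1$—requires reusing and quantifying the Féjer-kernel construction already hidden inside the proof of Lemma~\ref{FI2} in Appendix~\ref{proofFI2}.
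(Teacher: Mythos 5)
Your proposal is correct and follows essentially the same route as the paper: the basic optimality inequality, Cauchy--Schwarz on $\langle \W,\E\rangle$, a dual certificate to control the atomic-norm difference (the paper phrases this step via the TV norms of the representing measures plus Parseval), and the same intermediate bound $\|\E\|_F^2\leq 8\|\W\|_F^2+8\lambda^2\|\Q\|_F^2$. The only step you state heuristically is $\|\Q\|_F^2\leq CK/M$ (concentration alone is not quite an argument, since the kernels have tails); the paper makes it precise by writing $\|\Q\|_F^2\leq \|\QQ(f)\|_{\infty}\|\QQ(f)\|_1$ and proving $\|\QQ(f)\|_1\leq CK/M$ from the coefficient bounds $\|\balpha\|_{2,\infty}\leq C_{\alpha}$, $\|\bbeta\|_{2,\infty}\leq C_{\beta}/M$ together with $\int_0^1|\KK(f)|\,df\leq C/M$ and $\int_0^1|\KK'(f)|\,df\leq C$, which is exactly the quantified Fej\'er-kernel computation you point to.
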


\begin{proof}
The optimality of $\Xh$ implies
\begin{align*}
\frac{1}{2}\| \Y-\Xh \|_F^2+\lambda \|\hat{\bmu}\|_{2,TV}\leq \frac{1}{2}\| \Y-\X^\star \|_F^2+\lambda \|\bmu\|_{2,TV},
\end{align*}
which is equivalent to
\begin{align}
\frac{1}{2}\| \E \|_F^2+\lambda (\| \PP_{\FF}(\hat{\bmu})\|_{2,TV}+\| \PP_{\FF^c}(\hat{\bmu})\|_{2,TV})\leq \langle \E,\W \rangle+\lambda \|\bmu\|_{2,TV}. \label{E1}
\end{align}

Since $\hat{\bmu}=\bnu+\bmu$, we have
\begin{align*}
\| \PP_{\FF}(\hat{\bmu})\|_{2,TV}&=\| \PP_{\FF}(\bnu+\bmu)\|_{2,TV}\\
&= \| \bmu\|_{2,TV}+\| \PP_{\FF}(\bnu)\|_{2,TV}\\
&= \| \bmu\|_{2,TV}+ \| \bnu\|_{2,TV}-\| \PP_{\FF^c}(\bnu)\|_{2,TV}\\
&= \| \bmu\|_{2,TV}+ \| \bnu\|_{2,TV}-\| \PP_{\FF^c}(\hat{\bmu})\|_{2,TV}.
\end{align*}
Then, the inequality (\ref{E1}) becomes
\begin{align*}
\frac{1}{2}\| \E \|_F^2+\lambda \| \bnu\|_{2,TV}\leq \langle \E,\W \rangle,
\end{align*}
which implies
\begin{equation}
\begin{aligned}
\frac{1}{2}\| \E \|_F^2&\leq \langle \E,\W \rangle-\lambda \| \bnu\|_{2,TV}\\
&=\langle \E,\W \rangle-\lambda \int_0^1 \bnu^*(f)\QQ(f)df\\
&=\langle \E,\W \rangle-\lambda \langle \E,\Q \rangle. \label{EW1}
\end{aligned}
\end{equation}
We have used  Parseval's theorem in  the last equality. Moreover,
\begin{equation}
\begin{aligned}
-\lambda \langle \E,\Q \rangle&\leq \lambda \|\E\|_F \|\Q\|_F= \left(\frac{\|\E\|_F}{2}\right)(2\lambda \|\Q\|_F)\\
&\leq \frac{1}{8} \|\E\|_F^2+2\lambda^2 \|\Q\|_F^2. \label{EW2}
\end{aligned}
\end{equation}
Combining (\ref{EW1}) and (\ref{EW2}) and using the Cauchy-Schwarz inequality, we get
\begin{align*}
\frac{3}{8}\| \E \|_F^2&\leq  \left(\frac{\|\E\|_F}{2}\right)(2 \|\W\|_F)+2\lambda^2 \|\Q\|_F^2\\
&\leq \frac{1}{8}\| \E \|_F^2+2 \|\W\|_F^2+2\lambda^2 \|\Q\|_F^2.
\end{align*}
As a consequence, we have
\begin{align*}
\| \E \|_F^2 &\leq 8\|\W\|_F^2+8\lambda^2 \|\Q\|_F^2\\
&=8\|\W\|_F^2+8\lambda^2 \langle \QQ(f),\QQ(f) \rangle\\
&\leq 8\|\W\|_F^2+8\lambda^2 \|\QQ(f)\|_1  \|\QQ(f)\|_{\infty} \\
&\leq 8\|\W\|_F^2+C\lambda^2\frac{K}{M},
\end{align*}
where the last inequality follows from $\|\QQ(f)\|_1\leq C\frac{K}{M}$ as is shown in (\ref{Qf1}).
This completes the proof of Lemma \ref{EF2}.
\end{proof}

Now, continuing the proof of Lemma \ref{EF4}, note that $\| \W/\sigma \|_F^2$ is a $\chi^2$ distribution with $MN$ degrees of freedom. It follows that
\begin{align*}
\EEE[\| \W \|_F^2]&=\sigma^2 MN,\\
\EEE[\| \W \|_F^4]&=\sigma^4 MN(MN+2)\leq C\sigma^4M^2N^2.
\end{align*}
Then, with Lemma \ref{EF2}, we have
\begin{align*}
\EEE[\| \E \|_F^4]&\leq 64\EEE[\| \W \|_F^4]+C\EEE[\| \W \|_F^2]\eta^2\sigma^2KN\log(M)+C\eta^4\sigma^4K^2N^2\log^2(M)\\
&\leq C\sigma^4M^2N^2+C\eta^2\sigma^4MKN^2\log(M)+C\eta^4\sigma^4K^2N^2\log^2(M)\\
&\leq C\sigma^4M^2N^2\log^2(M).
\end{align*}
By taking the square root, we obtain (\ref{SEF4}). This completes the proof of Lemma \ref{EF4}.
\end{proof}


%
%


\section{Conclusion}
\label{conc}

ANM can be used for modal analysis under a variety of different spatio-temporal sampling and compression schemes. In the noiseless case, theoretical analysis shows that perfect recovery of mode shapes and frequencies is possible under certain conditions. In particular, a minimum separation condition is required for all of the theorems in Section~\ref{sec:mans}: to distinguish any closely spaced pair of frequencies, it is necessary to observe the signals over a long time span, regardless of whether or how the samples are compressed.

While compression does not allow the observation time span to be shortened, it does allow the number of samples to be reduced. Using synchronous random sampling, asynchronous random sampling, and random temporal compression, for example, exact recovery of mode shapes and frequencies is possible when the average number of samples per sensor is roughly proportional to the number of active modes. Using random spatial compression, exact recovery is possible when the total number of compressed measurements scales with the number of degrees of freedom.

Currently, the theoretical results for random sampling and for random spatial compression require randomness assumptions on the mode shapes and, for random sampling, indicate that the performance worsens as the number of sensors increases. Removing and improving these aspects of the results would be worthy of further study. Another open question is to theoretically characterize the performance improvements of asynchronous random sampling over synchronous random sampling.

Damping and external forces may be present in practical scenarios. In this work, we have focused on developing a theoretical foundation for the idealized model of free vibration without damping. Extending the ANM framework and analysis to accommodate damping and forced inputs are interesting questions for future work. We do note, however, that although our theory has focused on systems with no damping, ANM can empirically work well even on systems with slight proportional damping. To show this, we repeat the uniform sampling experiment on the 6-degree-of-freedom system but with damping. For simplicity, we only consider proportional damping of the form $\C = \alpha\M+\beta\K$ with $\alpha = 0.01$ and $\beta = 0.006$. Here, we change the stiffness values to $k_1 = k_7 = 500, ~k_2 = 300, ~k_3 = 100,~ k_4 = 50, ~k_5 = 400, ~k_6 = 200$ N/m in order to slightly increase the minimum separation of true frequencies. The other parameters are the same as in Section~\ref{uniform}. Similarly, the true mode shapes and undamped natural frequencies of this system are obtained from the (normalized) generalized eigenvectors and square root of the generalized eigenvalues of $\K$ and $\M$. In particular, the true undamped natural frequencies are $F_1 = 0.6239,~ F_2 = 0.9820,~ F_3 = 1.7161, ~F_4 = 1.9365, ~F_5 = 2.3272, ~F_6 = 4.6879$ Hz. The damping ratios of this system are $\xi_1 = 0.0130,~\xi_2 = 0.0193, ~\xi_3 = 0.0328,~ \xi_4 = 0.0369,~ \xi_5 = 0.0442,~ \xi_6 = 0.0885$, and the true damped natural frequencies are given as $F_{d_k}=F_k \sqrt{1-\xi_k^2}$. In particular, $F_{d_1} = 0.6239,~ F_{d_2} = 0.9819,~ F_{d_3} = 1.7151, ~F_{d_4} = 1.9351, ~F_{d_5} = 2.3250, ~F_{d_6} = 4.6595$ Hz. The true  amplitudes are set as $A_1 = 1,~A_2 = 0.85,~A_3 = 0.7,~A_4 = 0.5,~A_5 = 0.25,~A_6 = 0.7$.
In order to make the experiment as close to realistic as possible, we first collect $300$ real-valued uniform samples from this system with sampling interval $T_s = 0.5/F_c$, where $F_c=\max_{1\leq k \leq 6} |F_k|$. We then compute the Hilbert transform of these real-valued samples to obtain the analytic samples. To eliminate border effects, we perform ANM only on the first $100$ analytical samples.

The damped real-valued uniform samples and the reconstructed mode shapes are presented in Figure~\ref{BoxcarDamping}. Although the input signals contain damping, we use the conventional undamped version of ANM to estimate the frequencies and the mode shapes. It can be seen that ANM still performs much better than SVD in general since the true mode shapes are not mutually orthogonal. In particular, the MAC for ANM is (0.9996, 0.9986, 0.9825, 0.9129, 0.8452, 0.9579), while the MAC for SVD is (0.9882, 0.9380, 0.8726, 0.7288, 0.9435, 0.7547). In addition, the estimated frequencies are $\hat{F}_1 = 0.6245,~ \hat{F}_2 = 0.9824,~ \hat{F}_3 = 1.7205, ~\hat{F}_4 = 1.9218, ~\hat{F}_5 = 2.3122, ~\hat{F}_6 = 4.5608$ Hz. Although the ANM algorithm returns reasonably accurate estimates, because it does not explicitly account for damping, it cannot perfectly recover the mode shapes and frequencies. We leave the development of such a damped ANM algorithm to future work.


\begin{figure}[t]
\begin{minipage}{0.49\linewidth}
\centering
\includegraphics[width=2.4in]{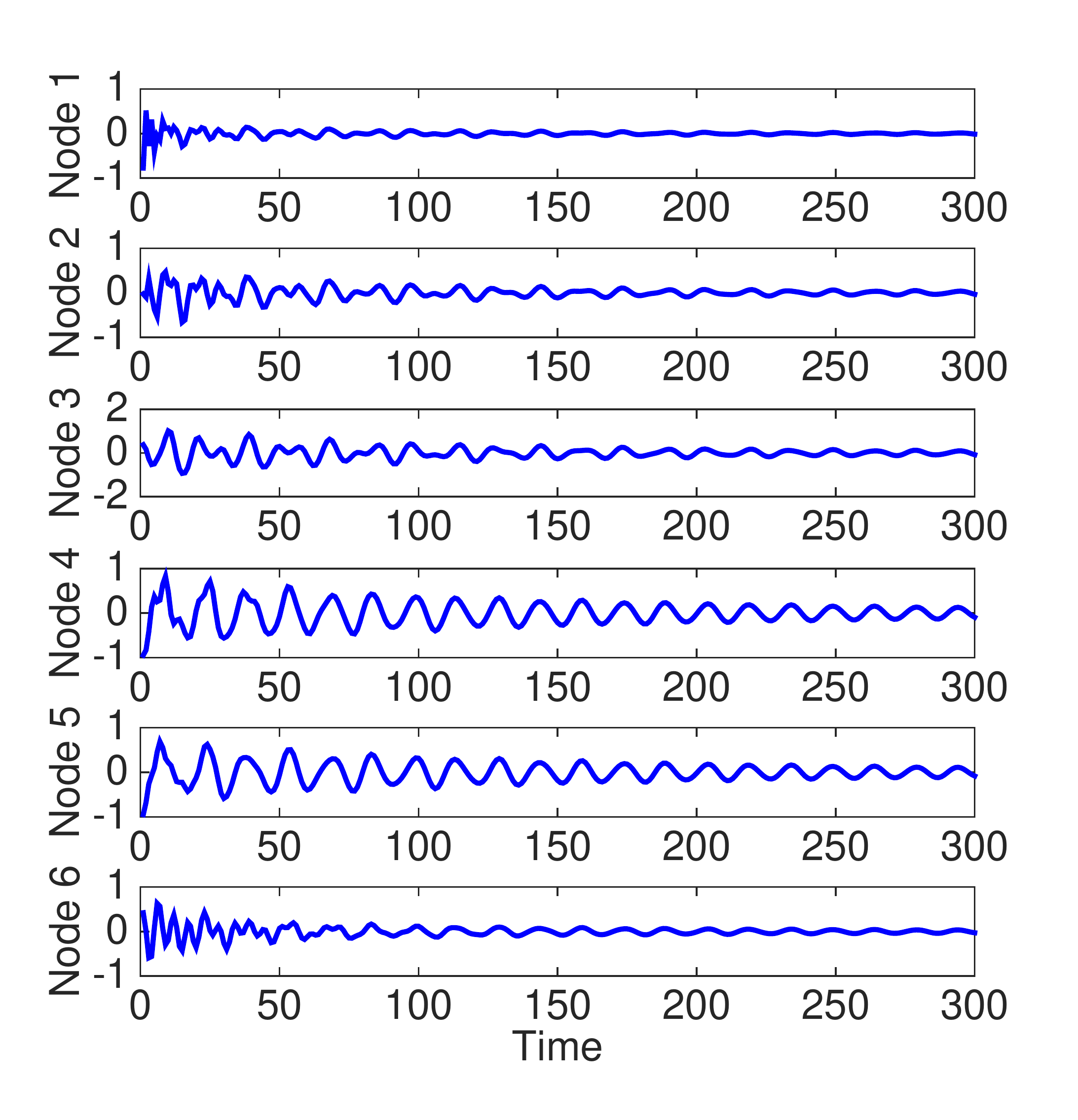}
\centerline{\footnotesize{(a)}}
\end{minipage}
\hfill
\begin{minipage}{0.49\linewidth}
\centering
\includegraphics[width=2.4in]{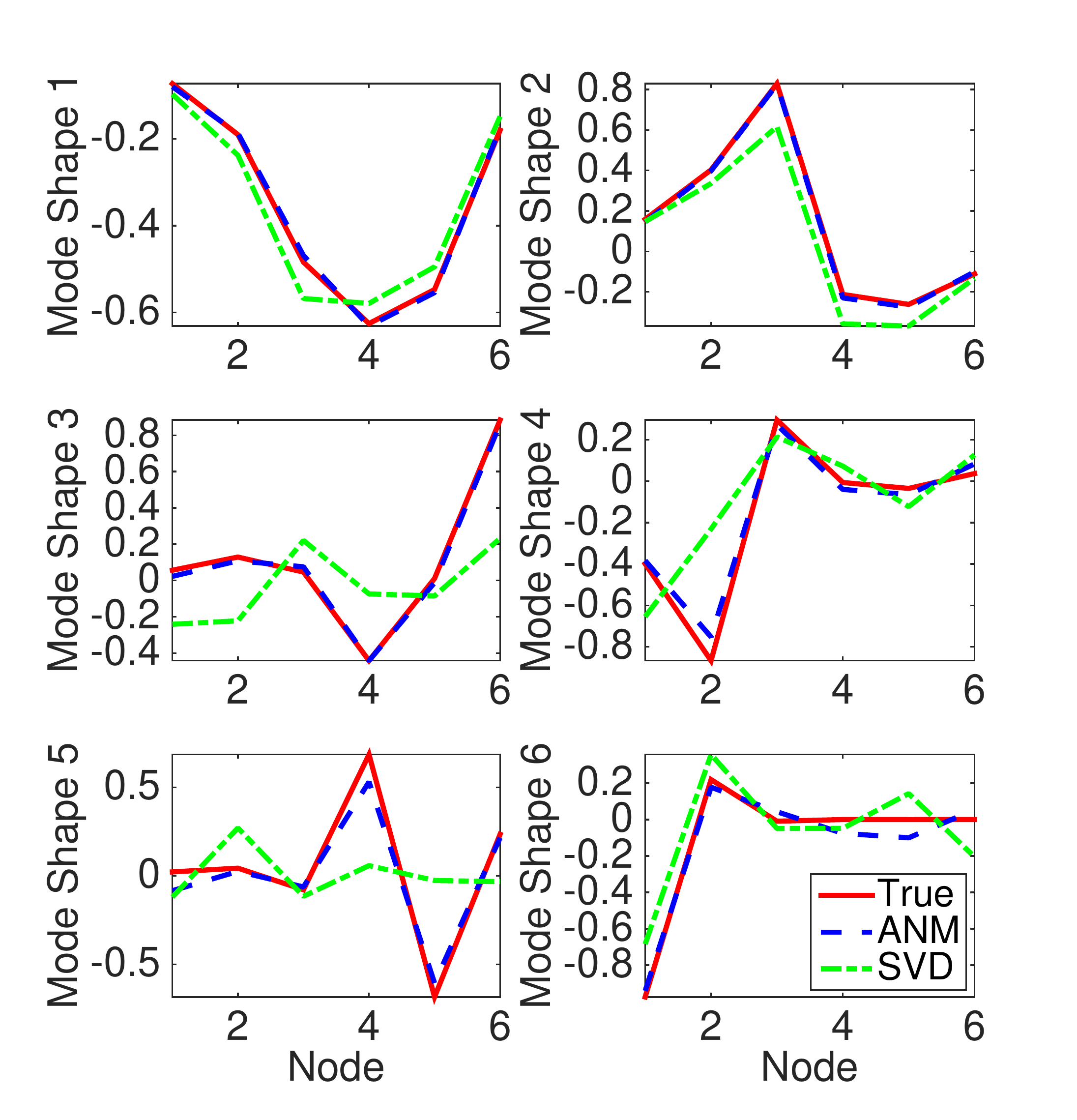}
\centerline{\footnotesize{(b)}}
\end{minipage}
\caption{Uniform sampling in the boxcar system with damping. (a) Uniform samples from each node (sensor). (b) Estimated mode shapes by ANM and SVD.}
\label{BoxcarDamping}
\end{figure}

Finally, in the noisy case, we have extended the SMV atomic norm denoising problem to an MMV atomic norm denoising problem and derived non-asymptotic theoretical bounds for the recovery error. While this analysis may be of its own independent interest, it is also used in our proof of Theorem~\ref{THMCS}.

\section{Acknowledgements}
\label{ackn}

The authors would like to thank Zhihui Zhu and Qiuwei Li at the Colorado School of Mines for many helpful discussions on the proof of Theorem \ref{THMCS}.
This work was supported by NSF grant CCF-1409258, NSF grant CCF-1464205, and NSF CAREER grant CCF-1149225. Preliminary versions of this work were presented at IEEE ICMEW 2016~\cite{li2016atomic} and IEEE ICASSP 2017~\cite{li2017atomic}.

\bibliographystyle{ieeetr}
\bibliography{ANMnotes2}

\begin{thebibliography}{10}

\bibitem{cunha2006experimental}
A.~Cunha and E.~Caetano, ``Experimental modal analysis of civil engineering
  structures,'' {\em Sound and Vibration}, vol.~40, no.~6, pp.~12--20, 2006.

\bibitem{kammer1991sensor}
D.~C. Kammer, ``Sensor placement for on-orbit modal identification and
  correlation of large space structures,'' {\em Journal of Guidance, Control,
  and Dynamics}, vol.~14, no.~2, pp.~251--259, 1991.

\bibitem{marshall1985modal}
K.~D. Marshall, ``Modal analysis of a violin,'' {\em The Journal of the
  Acoustical Society of America}, vol.~77, no.~2, pp.~695--709, 1985.

\bibitem{o2014compressed}
S.~O'Connor, J.~Lynch, and A.~Gilbert, ``Compressed sensing embedded in an
  operational wireless sensor network to achieve energy efficiency in long-term
  monitoring applications,'' {\em Smart Materials and Structures}, vol.~23,
  no.~8, p.~085014, 2014.

\bibitem{Chand12}
V.~Chandrasekaran, B.~Recht, P.~A. Parrilo, and A.~S. Willsky, ``The convex
  geometry of linear inverse problems,'' {\em Foundations of Computational
  Mathematics}, vol.~12, no.~6, pp.~805--849, 2012.

\bibitem{Tang13}
G.~Tang, B.~N. Bhaskar, P.~Shah, and B.~Recht, ``Compressed sensing off the
  grid,'' {\em IEEE Transactions on Information Theory}, vol.~59, no.~11,
  pp.~7465--7490, 2013.

\bibitem{Li15}
Y.~Li and Y.~Chi, ``Off-the-grid line spectrum denoising and estimation with
  multiple measurement vectors,'' {\em IEEE Transactions on Signal Processing},
  vol.~64, no.~5, pp.~1257--1269, 2016.

\bibitem{yang2014continuous}
Z.~Yang and L.~Xie, ``Continuous compressed sensing with a single or multiple
  measurement vectors,'' in {\em IEEE Workshop on Statistical Signal Processing
  (SSP)}, pp.~288--291, 2014.

\bibitem{heckel2016generalized}
R.~Heckel and M.~Soltanolkotabi, ``Generalized line spectral estimation via
  convex optimization,'' {\em arXiv preprint arXiv:1609.08198}, 2016.

\bibitem{Yang14}
Z.~Yang and L.~Xie, ``Exact joint sparse frequency recovery via optimization
  methods,'' {\em IEEE Transactions on Signal Processing}, vol.~64, no.~19,
  pp.~5145--5157, 2014.

\bibitem{Duarte11}
M.~F. Duarte and R.~G. Baraniuk, ``Spectral compressive sensing,'' {\em Applied
  and Computational Harmonic Analysis}, vol.~35, no.~1, pp.~111--129, 2013.

\bibitem{Chi11}
Y.~Chi, L.~L. Scharf, A.~Pezeshki, and A.~R. Calderbank, ``Sensitivity to basis
  mismatch in compressed sensing,'' {\em IEEE Transactions on Signal
  Processing}, vol.~59, no.~5, pp.~2182--2195, 2011.

\bibitem{Herman10}
M.~Herman and T.~Strohmer, ``General deviants: An analysis of perturbations in
  compressed sensing,'' {\em IEEE Journal of Selected Topics in Signal
  Processing}, vol.~4, no.~2, pp.~342--349, 2010.

\bibitem{yang2016sparse}
Z.~Yang, J.~Li, P.~Stoica, and L.~Xie, ``Sparse methods for
  direction-of-arrival estimation,'' {\em arXiv preprint arXiv:1609.09596},
  2016.

\bibitem{schmidt1982signal}
R.~Schmidt, ``A signal subspace approach to multiple emitter location and
  spectral estimation.,'' Ph.D. dissertation, Stanford University, 1981.

\bibitem{yang2016super}
D.~Yang, G.~Tang, and M.~B. Wakin, ``Super-resolution of complex exponentials
  from modulations with unknown waveforms,'' {\em IEEE Transactions on
  Information Theory}, vol.~62, no.~10, pp.~5809--5830, 2016.

\bibitem{bhaskar2013atomic}
B.~N. Bhaskar, G.~Tang, and B.~Recht, ``Atomic norm denoising with applications
  to line spectral estimation,'' {\em IEEE Transactions on Signal Processing},
  vol.~61, no.~23, pp.~5987--5999, 2013.

\bibitem{tang2015near}
G.~Tang, B.~N. Bhaskar, and B.~Recht, ``Near minimax line spectral
  estimation,'' {\em IEEE Transactions on Information Theory}, vol.~61, no.~1,
  pp.~499--512, 2015.

\bibitem{Ewins00}
D.~J. Ewins, {\em Modal testing: {T}heory and practice}, vol.~6.
\newblock Research studies press Letchworth, 1995.

\bibitem{park2014modal}
J.~Y. Park, M.~B. Wakin, and A.~C. Gilbert, ``Modal analysis with compressive
  measurements,'' {\em IEEE Transactions on Signal Processing}, vol.~62, no.~7,
  pp.~1655--1670, 2014.

\bibitem{mead1969forced}
D.~Mead and S.~Markus, ``The forced vibration of a three-layer, damped sandwich
  beam with arbitrary boundary conditions,'' {\em Journal of Sound and
  Vibration}, vol.~10, no.~2, pp.~163--175, 1969.

\bibitem{yamakoshi1990ultrasonic}
Y.~Yamakoshi, J.~Sato, and T.~Sato, ``Ultrasonic imaging of internal vibration
  of soft tissue under forced vibration,'' {\em IEEE Transactions on
  Ultrasonics, Ferroelectrics, and Frequency Control}, vol.~37, no.~2,
  pp.~45--53, 1990.

\bibitem{salawu1995bridge}
O.~S. Salawu and C.~Williams, ``Bridge assessment using forced-vibration
  testing,'' {\em Journal of Structural Engineering}, 1995.

\bibitem{cho2015forced}
D.~S. Cho, B.~H. Kim, J.-H. Kim, N.~Vladimir, and T.~M. Choi, ``Forced
  vibration analysis of arbitrarily constrained rectangular plates and
  stiffened panels using the assumed mode method,'' {\em Thin-Walled
  Structures}, vol.~90, pp.~182--190, 2015.

\bibitem{Rade}
D.~Rade and V.~Steffen~Jr, ``Structural dynamics and modal analysis,'' {\em
  EOLSS--Encyclopedia of Life Support}, 2008.

\bibitem{Yang13}
Y.~Yang and S.~Nagarajaiah, ``Output-only modal identification with limited
  sensors using sparse component analysis,'' {\em Journal of Sound and
  Vibration}, vol.~332, no.~19, pp.~4741--4765, 2013.

\bibitem{yu2014estimation}
K.~Yu, K.~Yang, and Y.~Bai, ``Estimation of modal parameters using the sparse
  component analysis based underdetermined blind source separation,'' {\em
  Mechanical Systems and Signal Processing}, vol.~45, no.~2, pp.~302--316,
  2014.

\bibitem{Sadhu13}
A.~Sadhu, B.~Hazra, and S.~Narasimhan, ``Decentralized modal identification of
  structures using parallel factor decomposition and sparse blind source
  separation,'' {\em Mechanical Systems and Signal Processing}, vol.~41, no.~1,
  pp.~396--419, 2013.

\bibitem{yang2012time}
Y.~Yang and S.~Nagarajaiah, ``Time-frequency blind source separation using
  independent component analysis for output-only modal identification of highly
  damped structures,'' {\em Journal of Structural Engineering}, vol.~139,
  no.~10, pp.~1780--1793, 2012.

\bibitem{donoho2006compressed}
D.~L. Donoho, ``Compressed sensing,'' {\em IEEE Transactions on Information
  Theory}, vol.~52, no.~4, pp.~1289--1306, 2006.

\bibitem{cande2008introduction}
E.~J. Cand{\`e}s and M.~B. Wakin, ``An introduction to compressive sampling,''
  {\em IEEE Signal Processing Magazine}, vol.~25, no.~2, pp.~21--30, 2008.

\bibitem{candes2006robust}
E.~J. Cand{\`e}s, J.~Romberg, and T.~Tao, ``Robust uncertainty principles:
  Exact signal reconstruction from highly incomplete frequency information,''
  {\em IEEE Transactions on Information Theory}, vol.~52, no.~2, pp.~489--509,
  2006.

\bibitem{feeny1998physical}
B.~Feeny and R.~Kappagantu, ``On the physical interpretation of proper
  orthogonal modes in vibrations,'' {\em Journal of Sound and Vibration},
  vol.~211, no.~4, pp.~607--616, 1998.

\bibitem{yang2017blind}
Y.~Yang, C.~Dorn, T.~Mancini, Z.~Talken, G.~Kenyon, C.~Farrar, and
  D.~Mascare{\~n}as, ``Blind identification of full-field vibration modes from
  video measurements with phase-based video motion magnification,'' {\em
  Mechanical Systems and Signal Processing}, vol.~85, pp.~567--590, 2017.

\bibitem{yang2017blind1}
Y.~Yang, C.~Dorn, T.~Mancini, Z.~Talken, S.~Nagarajaiah, G.~Kenyon, C.~Farrar,
  and D.~Mascare{\~n}as, ``Blind identification of full-field vibration modes
  of output-only structures from uniformly-sampled, possibly temporally-aliased
  (sub-nyquist), video measurements,'' {\em Journal of Sound and Vibration},
  vol.~390, pp.~232--256, 2017.

\bibitem{yang2015output}
Y.~Yang and S.~Nagarajaiah, ``Output-only modal identification by compressed
  sensing: Non-uniform low-rate random sampling,'' {\em Mechanical Systems and
  Signal Processing}, vol.~56, pp.~15--34, 2015.

\bibitem{lu2015distributed}
Z.~Lu, R.~Ying, S.~Jiang, P.~Liu, and W.~Yu, ``Distributed compressed sensing
  off the grid,'' {\em IEEE Signal Processing Letters}, vol.~22, no.~1,
  pp.~105--109, 2015.

\bibitem{boashash2015time}
B.~Boashash, {\em Time-frequency signal analysis and processing: a
  comprehensive reference}.
\newblock Academic Press, 2015.

\bibitem{smith2011spectral}
J.~O. Smith~III, {\em Spectral audio signal processing}.
\newblock W3K publishing, 2011.

\bibitem{lopez2005some}
M.~L{\'o}pez~Aenlle, R.~Brincker, and A.~C. Fern{\'a}ndez~Canteli, ``Some
  methods to determine scaled mode shapes in natural input modal analysis,'' in
  {\em Int. Modal Analysis Conference (IMAC)}, 2005.

\bibitem{grant2008cvx}
M.~Grant, S.~Boyd, and Y.~Ye, ``{CVX: M}atlab software for disciplined convex
  programming,'' 2008.

\bibitem{yang2016enhancing}
Z.~Yang and L.~Xie, ``Enhancing sparsity and resolution via reweighted atomic
  norm minimization,'' {\em IEEE Transactions on Signal Processing}, vol.~64,
  no.~4, pp.~995--1006, 2016.

\bibitem{baron2009distributed}
D.~Baron, M.~F. Duarte, M.~B. Wakin, S.~Sarvotham, and R.~G. Baraniuk,
  ``Distributed compressive sensing,'' {\em arXiv preprint arXiv:0901.3403},
  2009.

\bibitem{yang2017on}
Z.~Yang, J.~Tang, Y.~C. Eldar, and L.~Xie, ``On the sample complexity of
  multichannel frequency estimation via convex optimization,'' {\em arXiv
  preprint arXiv:1712.05674}, 2017.

\bibitem{bajwa2006compressive}
W.~Bajwa, J.~Haupt, A.~Sayeed, and R.~Nowak, ``Compressive wireless sensing,''
  in {\em Proceedings of the 5th International Conference on Information
  Processing in Sensor Networks (IPSN)}, pp.~134--142, ACM, 2006.

\bibitem{eftekhari2016stabilizing}
A.~Eftekhari, H.~L. Yap, M.~B. Wakin, and C.~J. Rozell, ``Stabilizing
  embedology: Geometry-preserving delay-coordinate maps,'' {\em arXiv preprint
  arXiv:1609.06347}, 2016.

\bibitem{boyd2004convex}
S.~Boyd and L.~Vandenberghe, {\em Convex optimization}.
\newblock Cambridge University Press, 2004.

\bibitem{gordon1988milman}
Y.~Gordon, ``On {M}ilman's inequality and random subspaces which escape through
  a mesh in {$R^n$},'' in {\em Geometric Aspects of Functional Analysis},
  pp.~84--106, Springer, 1988.

\bibitem{pisier1986probabilistic}
G.~Pisier, ``Probabilistic methods in the geometry of {B}anach spaces,'' in
  {\em Probability and Analysis}, pp.~167--241, Springer, 1986.

\bibitem{oymak2016sharp}
S.~Oymak and B.~Hassibi, ``Sharp mse bounds for proximal denoising,'' {\em
  Foundations of Computational Mathematics}, vol.~16, no.~4, pp.~965--1029,
  2016.

\bibitem{schaeffer1941inequalities}
A.~Schaeffer, ``Inequalities of {A. Markoff} and {S. Bernstein} for polynomials
  and related functions,'' {\em Bulletin of the American Mathematical Society},
  vol.~47, no.~8, pp.~565--579, 1941.

\bibitem{li2016atomic}
S.~Li, D.~Yang, and M.~B. Wakin, ``Atomic norm minimization for modal
  analysis,'' in {\em IEEE International Conference on Multimedia \& Expo
  Workshops (ICMEW)}, 2016.

\bibitem{li2017atomic}
S.~Li, D.~Yang, and M.~Wakin, ``Atomic norm minimization for modal analysis
  with random spatial compression,'' in {\em IEEE International Conference on
  Acoustics, Speech and Signal Processing (ICASSP)}, 2017.

\bibitem{heinonen2005lectures}
J.~Heinonen, {\em Lectures on Lipschitz analysis}.
\newblock Jyv\"{a}skyl\"{a} Summer School, 2005.

\bibitem{candes2014towards}
E.~J. Cand{\`e}s and C.~Fernandez-Granda, ``Towards a mathematical theory of
  super-resolution,'' {\em Communications on Pure and Applied Mathematics},
  vol.~67, no.~6, pp.~906--956, 2014.

\bibitem{candes2013super}
E.~J. Cand{\`e}s and C.~Fernandez-Granda, ``Super-resolution from noisy data,''
  {\em Journal of Fourier Analysis and Applications}, vol.~19, no.~6,
  pp.~1229--1254, 2013.

\end{thebibliography}

\newpage
\appendix

\section{Appendix}
\label{appe}

\subsection{Proof of Theorem \ref{EPhiZ}}
\label{ProofEPhiZ}

This proof is modified from \cite{gordon1988milman}. For $\Z\in \Omega \subset \SSS^{MN-1}$, $\Zt\in \SSS^{M'N-1}\triangleq\{\Zt\in \CCC^{M'\times N}: ~\|\Zt\|_F=1\}$, define two Gaussian processes
\begin{align*}
W_{\Z,\Zt}=\langle \G,\Zt \rangle+\langle \H,\Z \rangle,~\Wt_{\Z,\Zt}=\langle \bPhi\Zh, \Zt \rangle.
\end{align*}
Here, $\G$ and $\H$ are random matrices with i.i.d.\ entries from the distribution $\NN(0,1)$. For all $\Z,~\Z'\in\Omega$, $\Zt,~\Zt'\in\SSS^{M'N-1}$, it can be shown that
\begin{align*}
&\EEE |W_{\Z,\Zt}-W_{\Z',\Zt'}|^2-\EEE|\Wt_{\Z,\Zt}-\Wt_{\Z',\Zt'}|^2\\
=&4-\langle \Zt',\Zt \rangle-\langle \Z',\Z \rangle-\langle \Zt,\Zt' \rangle-\langle \Z,\Z' \rangle-\sum_{n=1}^N\langle \z_n,\z_n \rangle \langle \zt_n,\zt_n \rangle-\sum_{n=1}^N\langle \z'_n,\z'_n \rangle \langle \zt'_n,\zt'_n \rangle \\
&+\sum_{n=1}^N\langle \z_n,\z'_n \rangle \langle \zt'_n,\zt_n \rangle+\sum_{n=1}^N\langle \z'_n,\z_n \rangle \langle \zt_n,\zt'_n \rangle\\
=& 4-2 \change{Re}(\langle \Zt',\Zt \rangle)-2 \change{Re}(\langle \Z,\Z' \rangle) -\sum_{n=1}^N \left( \|\z_n\|^2 \|\zt_n\|^2+  \|\z'_n\|^2 \|\zt'_n\|^2 -2 \change{Re}(\langle \z_n,\z'_n \rangle \langle \zt'_n,\zt_n \rangle)  \right)      \\
\geq &  \change{Re}\left(4-2 \langle \Zt',\Zt \rangle-2 \langle \Z,\Z' \rangle -\sum_{n=1}^N \sum_{m=1}^N \left( \|\z_n\|^2 \|\zt_m\|^2+  \|\z'_n\|^2 \|\zt'_m\|^2 -2\langle \z_n,\z'_n \rangle \langle \zt'_m,\zt_m \rangle  \right)   \right)    \\
=& \change{Re}\left(2-2 \langle \Zt',\Zt \rangle-2 \langle \Z,\Z' \rangle +2\langle \Z,\Z' \rangle \langle \Zt',\Zt \rangle    \right)\\
=&2 \change{Re}\left( (1- \langle \Zt',\Zt \rangle)(1- \langle \Z,\Z' \rangle) \right)\\
\geq & 0.
\end{align*}
The first inequality holds since
\begin{align*}
&\|\z_n\|^2 \|\zt_m\|^2+  \|\z'_n\|^2 \|\zt'_m\|^2 -2\langle \z_n,\z'_n \rangle \langle \zt'_m,\zt_m \rangle \\
\geq &\|\z_n\|^2 \|\zt_m\|^2+  \|\z'_n\|^2 \|\zt'_m\|^2 -2\|\z_n\| \|\zt_m\| \|\z'_n\| \|\zt'_m\| \geq 0.
\end{align*}
Then, by \cite{gordon1988milman} we have
\begin{align*}
\EEE \left[ \min_{\Z\in\Omega} \max_{\Zt\in\SSS^{M'N-1}} W_{\Z,\Zt} \right] \leq \EEE \left[ \min_{\Z\in\Omega} \max_{\Zt\in\SSS^{M'N-1}} \Wt_{\Z,\Zt} \right],
\end{align*}
which is equivalent to
\begin{align}
\EEE \left[ \min_{\Z\in\Omega} \max_{\Zt\in\SSS^{M'N-1}} \langle \G,\Zt \rangle+\langle \H,\Z \rangle \right] \leq \EEE \left[ \min_{\Z\in\Omega} \max_{\Zt\in\SSS^{M'N-1}} \langle \bPhi\Zh, \Zt \rangle \right]. \label{EGH}
\end{align}
Since $\| \Zt \|_F=1$, using Cauchy-Schwarz inequality, we have
\begin{align}
\langle \bPhi\Zh, \Zt \rangle \leq \|\bPhi \Zh\|_F \label{CS}.
\end{align}
Plugging (\ref{CS}) into (\ref{EGH}) gives
\begin{align*}
\EEE \left[ \min_{\Z \in \Omega} \|\bPhi \Zh\|_F \right]& \geq \EEE\left[ \min_{\Z\in\Omega} \max_{\Zt\in\SSS^{M'N-1}} \langle \bPhi\Zh, \Zt \rangle \right]\\
&\geq \EEE \left[ \min_{\Z\in\Omega} \max_{\Zt\in\SSS^{M'N-1}} \langle \G,\Zt \rangle+\langle \H,\Z \rangle \right]\\
&=\EEE_{\G}\|\G\|_F+\EEE_{\H}\left[\min_{\Z\in \Omega}\langle \H,\Z \rangle \right]\\
&\geq \EEE_{\G}\|\G\|_F-\EEE_{\H}\left[\max_{\Z\in \Omega}\langle \H,\Z \rangle \right]\\
&=\lambda_{M'N}-\omega(\Omega).
\end{align*}
Moreover, we have
\begin{align*}
\lambda_{M'N}&=\EEE_{\G}\|\G\|_F=\EEE\sqrt{\sum_{n=1}^N\|\g_n\|_2^2}\\
&\geq \EEE\sqrt{N \min_{n\in[1,N]} \|\g_n\|_2^2}=\sqrt{N} \EEE\sqrt{ \min_{n\in[1,N]} \|\g_n\|_2^2}\\
&=\sqrt{N} \EEE\left[ \min_{n\in[1,N]} \|\g_n\|_2 \right]\\
&\geq \frac{\sqrt{N}M'}{\sqrt{M'+1}},
\end{align*}
where the last inequality uses the result that the expected length of an $M'$-dimensional Gaussian random vector is lower bounded by $\frac{M'}{\sqrt{M'+1}}$ \cite{Chand12}, \cite{gordon1988milman}. \qed

\subsection{Proof of Corollary \ref{MpGauss}}
\label{ProofMpGauss}

First, we will show that the following function
\begin{align*}
f:\bPhi \rightarrow \min_{\Z\in \Omega} \|\bPhi \Zh\|_F
\end{align*}
is Lipschitz with respect to the Frobenius norm with constant 1.

Define a function $f_{\Z}(\bPhi)=\|\bPhi\Zh\|_F$ with $\Z=[\z_1,\z_2,\cdots,\z_N]$ and $\Zh=\change{diag}(\Z)$. The gradient of $f_{\Z}(\bPhi)$ is
\begin{align*}
\nabla_{\bPhi}f_{\Z}(\bPhi)=\frac{\bPhi \Zh\Zh^*}{\|\bPhi\Zh\|_F}.
\end{align*}
Thus, using the mean value theorem, we can get
\begin{align*}
|f_{\Z}(\bPhi_1)-f_{\Z}(\bPhi_2)|&=|\langle \nabla_{\bPhi}f_{\Z}(\bPhi),\bPhi_1-\bPhi_2 \rangle| \leq \|\nabla_{\bPhi}f_{\Z}(\bPhi)\|_F \|\bPhi_1-\bPhi_2\|_F\\
&= \frac{1}{\|\bPhi\Zh\|_F}\|\bPhi \Zh\Zh^*\|_F \|\bPhi_1-\bPhi_2\|_F\\
&\leq \|\bPhi_1-\bPhi_2\|_F.
\end{align*}
So, the function $f_{\Z}(\bPhi)=\|\bPhi\Zh\|_F$ is Lipschitz with respect to the Frobenius norm with constant 1.
According to Lemma 2.1 in \cite{heinonen2005lectures}, we can conclude that $f:\bPhi \rightarrow \min_{\Z\in \Omega} \|\bPhi \Zh\|_F$ is also Lipschitz with respect to the Frobenius norm with constant 1.

The Gaussian concentration inequality for Lipschitz functions provided in \cite{pisier1986probabilistic} implies that
\begin{align*}
\PPP\left\{ \min_{\Z\in \Omega} \|\bPhi \Zh\|_F\geq \EEE\left[\min_{\Z\in \Omega} \|\bPhi \Zh\|_F \right]-t \right\}\geq 1-e^{-\frac{t^2}{2}}
\end{align*}
holds for any $t>0$. Using Theorem \ref{EPhiZ}, we can get
\begin{align*}
\PPP\left\{ \min_{\Z\in \Omega} \|\bPhi \Zh\|_F\geq \lambda_{M'N}-\omega(\Omega)-t \right\}\geq 1-e^{-\frac{t^2}{2}}.
\end{align*}

Thus, we can choose $M'$ such that
\begin{align*}
\lambda_{M'N}-\omega(\Omega)-t>0,
\end{align*}
which is equivalent to
\begin{align*}
\lambda_{M'N}^2>(\omega(\Omega)+t)^2\triangleq \epsilon^2.
\end{align*}
So, we can let
\begin{align*}
\frac{NM'^2}{M'+1}>\epsilon^2,
\end{align*}
which means that we need
\begin{align}
M'>\frac{\epsilon^2+\sqrt{\epsilon^4+4N\epsilon^2}}{2N}. \label{Mpxi}
\end{align}
Thus, we can choose
\begin{align*}
M'\geq\frac{\epsilon^2+\sqrt{\epsilon^4+4N\epsilon^2+4N^2}}{2N}=\frac{1}{N}\epsilon^2+1=\frac{1}{N}(\omega(\Omega)+t)^2+1
\end{align*}
to satisfy the above condition (\ref{Mpxi}).
Moreover, if $M' \geq \frac{4}{N}\omega^2(\Omega)$, we have
\begin{align*}
M'&=\frac{M'}{2}+\frac{M'-2}{2}+1\geq \frac{2}{N}\omega^2(\Omega)+2\left( \frac{\sqrt{M'-2}}{2} \right)^2+1\\
&\geq \left( \frac{\omega(\Omega)}{\sqrt{N}}+\frac{\sqrt{N(M'-2)}}{2\sqrt{N}} \right)^2+1\\
&=\frac{1}{N} \left[ \omega(\Omega)+\frac{\sqrt{N(M'-2)}}{2} \right]^2+1.
\end{align*}
Taking $t=\frac{\sqrt{N(M'-2)}}{2}$, the proof for Corollary \ref{MpGauss} is finished. \qed

\subsection{Proof of key lemmas in Section \ref{XhmXs}}
\label{lemmas}

To prove these lemmas, we need the following theorem, which is an extension of Theorem 4 in \cite{tang2015near}.
\begin{Theorem}
\label{dualstability}
Let $\BB=\{\b\in \CCC^N: \| \b \|_2=1\}$ be a set of vectors with unit $\ell_2$ norm.
For any $f_1, f_2, \ldots, f_K$ satisfying the minimum separation condition (\ref{mise}) , there exists a vector-valued trigonometric polynomial $\QQ(f)=\Q^*\a(f)$ satisfying the following properties for some $\Q\in \CCC^{M\times N}$.
\begin{enumerate}
\item For each $k=1,\ldots,K$, $\QQ(f_k)=\b_k$ with $\b_k \in \BB$.
\item In each near region $N_k= \{ f:d(f,f_k)<0.16/M  \}$, there exist constants $C_a$ and $C'_a$ such that
\begin{align}
\| \QQ(f) \|_2&\leq 1-\frac{C_a}{2}M^2(f-f_k)^2 \label{SN1}\\
\left\| \b_k-\QQ(f) \right\|_2 &\leq \frac{C'_a}{2}M^2(f-f_k)^2. \label{SN2}
\end{align}
\item In the far region, i.e., $f \in F=[0,1)/\cap_{k=1}^K N_j$, there exists a constant $C_b>0$ such that
\begin{align}
\|\QQ(f)\|_2\leq 1-C_b. \label{SF}
\end{align}
\end{enumerate}
\end{Theorem}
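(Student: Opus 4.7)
The plan is to construct $\QQ(f)$ explicitly as a vector-valued linear combination of shifted squared Fej\'er kernels and their derivatives, then verify the three interpolation/decay/far-region properties by mimicking the scalar argument of Cand\`es--Fern\'andez-Granda and its extension to MMV (as in Yang--Xie). Concretely, let $g$ be the squared Fej\'er kernel of degree $\lfloor M/2\rfloor$ or a similar kernel with good concentration, and seek
\begin{align*}
\QQ(f)=\sum_{k=1}^K \balpha_k\, g(f-f_k)+\sum_{k=1}^K \bbeta_k\, g'(f-f_k),
\end{align*}
with unknown vector coefficients $\balpha_k,\bbeta_k\in\CCC^N$. Imposing the interpolation conditions $\QQ(f_k)=\b_k$ and $\QQ'(f_k)=\zero$ for each $k$ gives a block linear system whose coefficient matrix is $D\otimes I_N$, where $D$ is the $2K\times 2K$ real matrix built from $g$, $g'$, $g''$ evaluated at the pairwise differences $f_i-f_j$.

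First I would use the minimum separation hypothesis $\Delta_f\ge 1/\lfloor(M-1)/4\rfloor$ to show $D$ is invertible and well-conditioned (this is the scalar calculation from Cand\`es--Fern\'andez-Granda, which can be cited rather than redone). Kronecker structure then immediately gives existence and uniqueness of $\balpha_k,\bbeta_k$, plus the quantitative estimates $\balpha_k=\b_k+O(1)\cdot(\text{small})$ and $\|\bbeta_k\|_2=O(1/M)\cdot(\text{small})$ in the sense that the corresponding scalar coefficients are small. Property~(1) is then automatic by construction.

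Next I would verify property~(2) by a second-order Taylor expansion around each $f_k$. Since $\QQ(f_k)=\b_k$ and $\QQ'(f_k)=\zero$,
\begin{align*}
\QQ(f)=\b_k+\tfrac12(f-f_k)^2\,\QQ''(f_k)+O\bigl(M^3(f-f_k)^3\bigr),
\end{align*}
so $\|\QQ(f)\|_2^2=1+(f-f_k)^2\,\mathrm{Re}\langle \b_k,\QQ''(f_k)\rangle+O(M^3(f-f_k)^3)$. The key analytic step is to show $\mathrm{Re}\langle \b_k,\QQ''(f_k)\rangle\le -cM^2$ for some absolute $c>0$; this reduces coordinate-wise to the scalar bound on $|g''(0)|$ (which is $\Theta(M^2)$) minus lower-order cross terms controlled by the well-conditioning of $D$. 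Taking square roots and truncating the near region at $d(f,f_k)\le 0.16/M$ yields~\eqref{SN1}, while the Lipschitz estimate~\eqref{SN2} follows from the same Taylor expansion applied to $\b_k-\QQ(f)$.

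For property~(3), in the far region $F$ I would bound $\|\QQ(f)\|_2$ by the triangle inequality:
\begin{align*}
\|\QQ(f)\|_2\le \sum_{k=1}^K\|\balpha_k\|_2\,|g(f-f_k)|+\sum_{k=1}^K\|\bbeta_k\|_2\,|g'(f-f_k)|,
\end{align*}
and exploit that for $f$ separated from every $f_k$ by at least $0.16/M$, the kernel sums $\sum_k|g(f-f_k)|$ and $\sum_k|g'(f-f_k)|/M$ are each bounded by an absolute constant strictly less than $1$, again by the scalar Fej\'er-kernel estimates under the minimum separation hypothesis. Combined with $\|\balpha_k\|_2\approx\|\b_k\|_2=1$ and $\|\bbeta_k\|_2=O(1/M)$, this produces $\|\QQ(f)\|_2\le 1-C_b$ on $F$.

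The main obstacle I anticipate is the vector-valued aspect of property~(2): in the SMV proof, the quadratic-decay estimate uses sign patterns in a scalar polynomial, whereas here we must control $\|\cdot\|_2$ of a $\CCC^N$-valued polynomial. My intended workaround is the coordinate-wise reduction: the Kronecker structure of $D\otimes I_N$ means each coordinate of $\QQ(f)$ is a scalar trigonometric polynomial built from the same kernels with scalar coefficients that are the corresponding coordinates of $\b_k$, so the scalar bounds transfer to inner products $\langle\b_k,\QQ''(f_k)\rangle$ and hence to $\|\QQ(f)\|_2$. Verifying that this transfer is uniform in $\b_k\in\BB$ is the step that deserves the most care.
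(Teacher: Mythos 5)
Your proposal is correct and follows essentially the same route as the paper, which simply cites Proposition 5.1 and inequality (47) of the Yang--Xie MMV paper for statements (1) and (3) and the Cand\`es--Fern\'andez-Granda proof strategy for statement (2); the certificate you describe (squared Fej\'er kernel interpolation with vector coefficients $\balpha_k,\bbeta_k$, conditions $\QQ(f_k)=\b_k$, $\QQ'(f_k)=\zero$, and the $D\otimes I_N$ Kronecker reduction to the scalar conditioning results) is exactly the cited construction. The only point needing more care than your sketch suggests is the near-region bound: since $M^3|f-f_k|^3$ is comparable to $M^2(f-f_k)^2$ throughout $N_k$, one must bound $\frac{d^2}{df^2}\|\QQ(f)\|_2^2$ uniformly over the near region (as in the cited scalar arguments) rather than rely on a Taylor expansion at $f_k$ alone.
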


\begin{proof} Proposition 5.1 and inequality (47) in \cite{Yang14} guarantee the first and third statements. The second statement can be proved by following the proof strategies in Section 2.4 of~\cite{candes2014towards} and Appendix A of~\cite{candes2013super}.
\end{proof}

In the remainder of this section, we will prove the three key lemmas presented in Section \ref{XhmXs}.

\subsubsection{Proof of Lemma \ref{UBE}}
\label{proofUBE}

We define $\FF\triangleq\{f_1,f_2,\ldots,f_K\}$ as a set containing the true frequencies. Recall that the near region $N_k$ corresponding to $f_k$ and the far region $F$ are defined in (\ref{NFregion}). Using the Cauchy-Schwarz inequality, we have
\begin{align}
\left| \int_F \bxi^*(f)  \bnu(f) df\right|\leq \int_F |\bxi^*(f)  \bnu(f)| df \leq   \int_F \|\bxi(f)\|_2  \|\bnu(f)\|_2 df \leq \|\bxi(f)\|_{2,\infty}   \int_F  \|\bnu(f)\|_2 df, \label{Eb1}
\end{align}
where $\|\bxi(f)\|_{2,\infty}$ is defined in (\ref{xi}).

Let $\u$ be any vector with unit $\ell_2$ norm. Define
\begin{align*}
\gamma (f)\triangleq \u^*\bxi(f),
\end{align*}
which is a trigonometric polynomial with degree $M$. According to Bernstein's inequality for polynomials \cite{schaeffer1941inequalities}, we have
\begin{align}
\sup_{f} |\gamma'(f)| &\leq M\sup_{f} |\gamma(f)|,\label{Bern1}\\
\sup_{f} |\gamma''(f)| &\leq M^2\sup_{f} |\gamma(f)|. \label{Bern2}
\end{align}
Note that
\begin{align*}
\sup_{f} \| \bxi'(f) \|_2&=\sup_{f,\u} |\u^*\bxi'(f)|=\sup_{f,\u} |\gamma'(f)|\\
&\leq M\sup_{f,\u} |\gamma(f)|=M\sup_{f,\u} |\u^*\bxi(f)|=M\sup_{f} \| \bxi(f) \|_2,
\end{align*}
which implies
\begin{align*}
\|\bxi'(f)\|_{2,\infty} \leq  M\|\bxi(f)\|_{2,\infty}.
\end{align*}
Using a similar argument, we can get
\begin{align*}
\|\bxi''(f)\|_{\infty}&\leq M^2\|\bxi(f)\|_{2,\infty}.
\end{align*}

Note that
the Taylor expansion of $\gamma(f)$ at $f_k$ is
\begin{align*}
\gamma(f)=\gamma(f_k)+(f-f_k)\gamma'(f_k)+\frac{1}{2}(f-f_k)^2\gamma''(\tilde{f}_k)
\end{align*}
for some $\tilde{f}_k\in N_k$. It follows that
\begin{align*}
&\sup_{\u}|\gamma(f)-\gamma(f_k)-(f-f_k)\gamma'(f_k)|\\
=&\frac{1}{2}(f-f_k)^2 \sup_{\u}  |\gamma''(\tilde{f}_k)|\\
=&\frac{1}{2}(f-f_k)^2 \sup_{\u}  |\u^*\bxi''(\tilde{f}_k)|\\
\leq & \frac{1}{2}(f-f_k)^2 \|\bxi''(f)  \|_{2,\infty}.
\end{align*}
Plugging in Bernstein's inequality (\ref{Bern2}), we have
\begin{align*}
\sup_{\u}|\gamma(f)-\gamma(f_k)-(f-f_k)\gamma'(f_k)|\leq  \frac{M^2}{2}(f-f_k)^2 \|\bxi(f)  \|_{2,\infty}.
\end{align*}
Note that the left hand side of the above inequality is equal to
\begin{align*}
&\sup_{\u} |\langle\bxi(f)-\bxi(f_k)-(f-f_k)\bxi'(f_k),\u  \rangle|\\
=&\|\bxi(f)-\bxi(f_k)-(f-f_k)\bxi'(f_k)\|_2.
\end{align*}

Denoting $\r(f)=\bxi(f)-\bxi(f_k)-(f-f_k)\bxi'(f_k)$, we have
\begin{align*}
\| \r(f) \|_2 \leq \frac{M^2}{2}(f-f_k)^2 \|\bxi(f)  \|_{2,\infty}.
\end{align*}

Finally, we can obtain
\begin{equation}
\begin{aligned}
&\left| \int_{N_k} \bxi^*(f)  \bnu(f) df\right|\\
\leq & \|\bxi(f_k)\|_2 \left\| \int_{N_k} \bnu (f) df \right\|_2 +  \|\bxi'(f_k)\|_2 \left\| \int_{N_k} (f-f_k)\bnu (f) df \right\|_2 +  \int_{N_k} \|\r(f)\|_2 \|\bnu (f) \|_2 df \\
\leq & \|\bxi(f)\|_{2,\infty} \left( \left\| \int_{N_k} \bnu (f) df \right\|_2+M\left\| \int_{N_k} (f-f_k)\bnu (f) df \right\|_2 +\frac{M^2}{2} \int_{N_k}(f-f_k)^2 \|\bnu_n(f) \|df \right). \label{Eb2}
\end{aligned}
\end{equation}
We have plugged in $\bxi(f)=\bxi(f_k)+(f-f_k)\bxi'(f_k)+\r(f)$ and used the triangle inequality to get the first inequality.
Substituting (\ref{Eb1}) and (\ref{Eb2}) into (\ref{Eb0}), we can obtain (\ref{EI}). \qed

\subsubsection{Proof of Lemma \ref{I0I1}}
\label{proofI0I1}

Let $\QQ(f)=\Q^*\a(f)$ be the dual polynomial as in \cite{Yang14}. Then, we have
\begin{align}
|\langle \bxi(f),\QQ(f) \rangle|\leq \int_0^1 |\QQ^*(f)\bxi(f)|df\leq \int_0^1 \|\QQ(f)\|_2 \|\bxi(f)\|_2df\leq \|\bxi(f)\|_{2,\infty} \|\QQ(f)\|_1, \label{xiQ}
\end{align}
where $\|\QQ(f)\|_1\triangleq \int_0^1 \|\QQ(f)\|_2 df$. The first and second inequalities follow from the triangle inequality and the Cauchy-Schwarz inequality, respectively. In this work, we
use the dual polynomial $\QQ(f)$ constructed in \cite{Yang14} of the form
\begin{align*}
\QQ(f)=\sum_{f_k\in\FF} \balpha_k \KK(f-f_k)+\sum_{f_k\in\FF} \bbeta_k \KK'(f-f_k),
\end{align*}
where $\KK(f)$ is the squared Fej\'{e}r kernel. All the $\balpha_k$ and $\bbeta_k$ are $N\times 1$ vectors. Define $\|\X\|_{2,\infty}=\max_j\|\X_j\|_2$, with $\X_j$ being the $j$th column of $\X$. Denote $\balpha=[\balpha_1,\balpha_2,\cdots,\balpha_K]$ and $\bbeta=[\bbeta_1,\bbeta_2,\cdots,\bbeta_K]$. It is shown in \cite{Yang14} that
\begin{align*}
\|\balpha\|_{2,\infty}\leq C_{\alpha},~~\|\bbeta\|_{2,\infty}\leq \frac{C_{\beta}}{M}
\end{align*}
for some numerical constants $C_{\alpha}$ and $C_{\beta}$.
Then, we have
\begin{equation}
\begin{aligned}
\|\QQ(f)\|_1&\triangleq \int_0^1 \|\QQ(f)\|_2 df\\
&\leq \int_0^1 \sum_{f_k\in \FF} \|\balpha_k\|_2|\KK(f-f_k)|df+\int_0^1 \sum_{f_k\in \FF} \|\bbeta_k\|_2|\KK'(f-f_k)|df\\
&\leq KC_{\alpha}\int_0^1|\KK(f)|df+\frac{C_{\beta}K}{M}\int_0^1|\KK'(f)|df\\
&\leq \frac{CK}{M}.\label{Qf1}
\end{aligned}
\end{equation}
For the last inequality, we have used the results $\int_0^1|\KK(f)|df\leq\frac{C}{M}$ and $\int_0^1 |\KK'(f)|df\leq C$, which can be found in Appendix C of~\cite{tang2015near}.

Now, we can bound $I_0$ as
\begin{align*}
I_0&=\sum_{k=1}^K \left\| \int_{N_k} \bnu(f) df \right\|_2=\sum_{k=1}^K \int_{N_k} \bnu^*(f) df \frac{\int_{N_k} \bnu(\fh) d\fh}{\left\| \int_{N_k} \bnu(\fh) d\fh \right\|_2}  \\
&=\sum_{k=1}^K \int_{N_k} \bnu^*(f) \frac{\int_{N_k} \bnu(\fh) d\fh}{\left\| \int_{N_k} \bnu(\fh) d\fh \right\|_2} df   \\
&=\sum_{k=1}^K \int_{N_k} \bnu^*(f) \QQ(f) df+\sum_{k=1}^K \int_{N_k} \bnu^*(f)\left[ \frac{\int_{N_k} \bnu(\fh) d\fh}{\left\| \int_{N_k} \bnu(\fh) d\fh \right\|_2}-\QQ(f)\right] df   \\
&\leq  \left| \int_0^1 \bnu^*(f)\QQ(f) df  \right|+\left| \int_F \bnu^*(f)\QQ(f) df  \right|+\sum_{k=1}^K \int_{N_k} \left| \bnu^*(f)\left[ \frac{\int_{N_k} \bnu(\fh) d\fh}{\left\| \int_{N_k} \bnu(\fh) d\fh \right\|_2}-\QQ(f)\right] \right|df  \\
&\leq  \left| \int_0^1 \bnu^*(f)\QQ(f) df  \right|+ \int_F \|\bnu(f)\|_2 df  +C'_aI_2.
\end{align*}
We have used the triangle inequality and the Cauchy-Schwarz inequality in the last two inequalities. We also use the fact that $\|\QQ(f)\|_2\leq 1$ and set $\b_k=\frac{\int_{N_k} \bnu(\fh) d\fh}{\left\| \int_{N_k} \bnu(\fh) d\fh \right\|_2}$, then use the boundary condition presented in (\ref{SN2}) to get the last inequality.
Note that the first term can be upper bounded with
\begin{align*}
&\left| \int_0^1 \bnu^*(f)\QQ(f) df  \right| = \left| \int_0^1 \bnu^*(f)\Q^*\a(f) df  \right|= \left| \int_0^1 \langle \a(f)\bnu^*(f),\Q \rangle df  \right|\\
=& \left| \left\langle \int_0^1 \a(f)\bnu^*(f)df,\Q   \right\rangle \right|=|\langle \E,\Q \rangle|=|\langle \bxi(f),\QQ(f) \rangle|\\
\leq & \frac{CK\lambda}{M},
\end{align*}
where the last equality follows from Parseval's theorem and the last inequality is a direct result from inequalities (\ref{xiQ}) and (\ref{Qf1}).
As a consequence, we have that
\begin{align*}
I_0\leq C_0\left( \frac{K\lambda}{M} +I_2+\int_F \|\bnu(f)\|_2 df \right)
\end{align*}
holds for some numerical constant $C_0$. Similarly, we can bound $I_1$ in such a way. \qed

\subsubsection{Proof of Lemma \ref{FI2}}
\label{proofFI2}

Denote $\PP_{\FF}(\bnu)$ as the projection of the difference measure $\bnu(f)$ on the support set $\FF=\{f_1,f_2,\ldots,f_K\}$. Set $\QQ(f)$ as the dual polynomial in Theorem \ref{dualstability}. To avoid confusion with the traditional definition of total variation (TV) norm, we use $\|\cdot\|_{2,TV}$ in this work to denote an extension of traditional TV norm. Then, we have
\begin{align*}
\|\PP_{\FF}(\bnu) \|_{2,TV}&=\int_0^1 \PP_{\FF}(\bnu^*) \QQ(f) df=\int_{\FF} \bnu^*(f) \QQ(f) df\\
&\leq \left| \int_0^1 \bnu^*(f) \QQ(f) df  \right| + \left|\int_{\FF^c} \bnu^*(f) \QQ(f) df \right|\\
&\leq \frac{CK\lambda}{M}+\sum_{f_k\in \FF} \left| \int_{N_k/\{f_k\}} \bnu^*(f) \QQ(f) df \right| + \left| \int_F \bnu^*(f) \QQ(f) df \right|,
\end{align*}
where $\FF^c$ is the complement set of $\FF$ on $[0,1)$. Using (\ref{SF}) and (\ref{SN1}), the integration over the far region $F$ can be bounded by
\begin{align*}
\left| \int_F \bnu^*(f) \QQ(f) df \right| \leq (1-C_b) \int_F \| \bnu(f) \|_2df,
\end{align*}
and the integration over $N_k/\{f_k\}$ can be bounded by
\begin{align*}
\left| \int_{N_k/\{f_k\}} \bnu^*(f) \QQ(f) df \right| &\leq \int_{N_k/\{f_k\}} \|\bnu(f)\|_2 \|\QQ(f)\|_2 df \\
&\leq  \int_{N_k/\{f_k\}} \left(1-\frac{1}{2}M^2C_a(f-f_k)^2  \right) \|\bnu(f)\|_2  df \\
&\leq  \int_{N_k/\{f_k\}}  \|\bnu(f)\|_2  df -C_aI_2^k.
\end{align*}
Therefore, we can get
\begin{align*}
\|\PP_{\FF}(\bnu) \|_{2,TV}&\leq \frac{CK\lambda}{M}+\sum_{f_k\in \FF}\int_{N_k/\{f_k\}}  \|\bnu(f)\|_2  df -C_aI_2+(1-C_b) \int_F \| \bnu(f) \|_2df\\
&=\frac{CK\lambda}{M}+\|\PP_{\FF^c}(\bnu) \|_{2,TV}-C_aI_2-C_b \int_F \| \bnu(f) \|_2df,
\end{align*}
which implies
\begin{align}
\|\PP_{\FF^c}(\bnu) \|_{2,TV}-\|\PP_{\FF}(\bnu) \|_{2,TV}\geq C_aI_2+C_b \int_F \| \bnu(f) \|_2df-\frac{CK\lambda}{M}.
\label{FFCL}
\end{align}

Since $\Xh$ is the solution of the MMV atomic norm denoising problem (\ref{AND}), we have
\begin{align*}
\frac{1}{2}\| \Y-\Xh \|_F^2+\lambda\| \Xh \|_{\AA}\leq \frac{1}{2}\| \Y-\X^\star \|_F^2+\lambda\| \X^\star \|_{\AA}.
\end{align*}
As a consequence, we obtain
\begin{align*}
\lambda\|\Xh\|_{\AA}\leq \lambda\|\X^{\star}\|_{\AA}+\langle \W,\Xh-\X^{\star}\rangle
\end{align*}
by elementary calculations.
It follows that
\begin{align*}
\|\hat{\bmu}\|_{2,TV}\leq \|\bmu\|_{2,TV}+\frac{|\langle \W,\E \rangle|}{\lambda}.
\end{align*}
Then, similar to Lemma \ref{UBE}, we have
\begin{align*}
|\langle \W,\E \rangle|&=\left|\left\langle \W,\int_0^1 \a(f)\bnu^*(f)df \right\rangle\right|=\left| \int_0^1 \left\langle \W,\a(f)\bnu^*(f) \right\rangle df\right|\\
&=\left| \int_0^1 \a^*(f)\W\bnu(f) df\right|\\
&\leq \|\W^* \a(f)\|_{2,\infty} \left( \frac{CK\lambda}{M} +I_0+I_1+I_2   \right)\\
&\leq C\frac{\lambda}{\eta} \left( \frac{K\lambda}{M} +I_2 +\int_F \| \bnu(f) \|_2df  \right).
\end{align*}
Here, the last inequality follows from $ \|\W^* \a(f)\|_{2,\infty} = \|\W\|_{\AA}^* \leq \frac{\lambda}{\eta}$ and Lemma \ref{I0I1}.  As a consequence, we have
\begin{align*}
\|\bmu\|_{2,TV}&+C\eta^{-1}\left( \frac{K\lambda}{M} +I_2 +\int_F \| \bnu(f) \|_2df  \right)\geq \|\hat{\bmu}\|_{2,TV}\\
&\geq \|\bmu+\bnu\|_{2,TV}\geq \|\bmu\|_{2,TV}-\|\PP_{\FF}(\bnu) \|_{2,TV}+\|\PP_{\FF^c}(\bnu) \|_{2,TV},
\end{align*}
which implies
\begin{align}
\|\PP_{\FF^c}(\bnu) \|_{2,TV}-\|\PP_{\FF}(\bnu) \|_{2,TV}\leq C\eta^{-1}\left( \frac{K\lambda}{M} +I_2 +\int_F \| \bnu(f) \|_2df  \right).
\label{FFCU}
\end{align}
Combining (\ref{FFCL}) and (\ref{FFCU}), we get
\begin{align}
(C_a-C\eta^{-1})I_2+(C_b-C\eta^{-1})\int_F \| \bnu(f) \|_2df \leq (1+\eta^{-1}) \frac{CK\lambda}{M}.
\label{ConstCombine}
\end{align}
Thus, Lemma \ref{FI2} is proved under the assumption that $\eta$ is large enough with respect to the constants appearing in~\eqref{ConstCombine}.
\qed

\end{document}